\documentclass{article}

\usepackage{fullpage}
\usepackage{amsmath,amsfonts,amssymb,amsthm,thmtools,enumerate}
\usepackage{hyperref,cleveref}

\usepackage[most]{tcolorbox}
\usepackage{xcolor}
\usepackage{orcidlink}

\newtcolorbox{aiusagebox}{
  colback=blue!3,
  colframe=blue!70!black,
  boxrule=1.4pt,
  arc=3mm,
  left=3mm,
  right=3mm,
  top=2mm,
  bottom=2mm
}

\providecommand{\Surj}{\mathsf{Surj}}
\providecommand{\Inj}{\mathsf{Inj}}
\providecommand{\Perm}{\mathsf{Perm}}
\providecommand{\Lin}{\mathsf{Lin}}
\providecommand{\Part}{\mathsf{Part}}
\providecommand{\Weak}{\mathsf{Weak}}
\providecommand{\Equiv}{\mathsf{Equiv}}
\providecommand{\ind}[1]{\mathbf1\{#1\}}
\DeclareMathOperator{\dom}{dom}
\DeclareMathOperator{\range}{range}
\DeclareMathOperator{\Maj}{Maj}
\DeclareMathOperator{\Sym}{Sym}
\DeclareMathOperator{\id}{id}
\DeclareMathOperator{\Neg}{neg}
\DeclareMathOperator*{\Ex}{\mathbb{E}}

\newtheorem{theorem}{Theorem}[section]
\newtheorem{lemma}[theorem]{Lemma}

\theoremstyle{definition}
\newtheorem{definition}[theorem]{Definition}
\theoremstyle{remark}

\title{Classification aggregation: a quantitative impossibility theorem}
\author{Yuval Filmus \orcidlink{0000-0002-1739-0872} \\
\small Taub Faculty of Computer Science and Faculty of Mathematics \\
\small Technion --- Israel Institute of Technology, Haifa, Israel \\
\small \texttt{yuvalfi@cs.technion.ac.il}}

\begin{document}

\maketitle
\begin{abstract}
A group of individuals wishes to classify $m$ objects into $n$ categories in such a way that no class is left empty, a condition known as \emph{surjectivity}. The opinions of the individuals are aggregated separately for each object using an aggregation function that can depend on the object.

Maniquet and Mongin showed that if the aggregation functions are unanimous and the outcome must always be surjective, then the aggregation mechanism is dictatorial. Cailloux et al.\ showed that the same holds even if unanimity is relaxed to citizen sovereignty (each object can be classified into any category).

We show that similar results hold even if we only require the outcome to be surjective with probability $1-\epsilon$ (with respect to an arbitrary symmetric i.i.d.\ distribution), provided that the aggregation functions are far from being constant.

On the way, we characterize all aggregation mechanisms whose outcome is always surjective without any assumptions on the aggregation functions.

Our approach uses a general result of Alekseev and Filmus which has wider applicability. We illustrate this by proving a similar impossibility result for aggregating equivalence relations.
\end{abstract}

\section{Introduction}
\label{sec:introduction}

Arrow's celebrated impossibility theorem~\cite{Arrow50,Arrow63} concerns elections among $n \ge 3$ candidates in which each voter has to rank all candidates, and the desired output is a ranking of all candidates. The theorem states that the only voting mechanisms satisfying the two natural axioms IIA and Unanimity are dictatorships.

One might hope that the issue goes away if we allow the outcome, encoded as a list of pairwise rankings of the candidates, to be non-transitive (that is, not correspond to a ranking) with small probability (under the impartial culture distribution). Kalai~\cite{Kalai} ruled this out for neutral mechanisms, showing that if the probability of a non-transitive outcome is small, then the voting mechanism is close to a dictatorship. Mossel~\cite{Mossel12} replaced the assumption of neutrality by the assumption that the mechanism is far from being constant. This assumption is necessary to rule out mechanisms which output a fixed ranking.

We prove a result in the style of Mossel in a different domain: classification aggregation. A group of individuals wishes to classify $m$ objects into $n$ categories in such a way that no category is left empty, a condition known as \emph{surjectivity}. The opinions of the individuals are aggregated into a societal outcome on an object-by-object basis, a constraint which mirrors the IIA axiom. This problem can be traced back to Kasher and Rubinstein~\cite{KasherRubinstein}, who considered the case $n = 2$.

Maniquet and Mongin~\cite{ManiquetMonginTR,ManiquetMongin} showed that if $m \ge n \ge 3$ and the mechanism satisfies Unanimity (if all individuals classify object $i$ into category $j$ then the outcome also classifies $i$ into category $j$) and Surjectivity (the outcome is always surjective), then the mechanism is a dictatorship, up to a permutation of the categories. Cailloux et al.~\cite{CHOS} proved a similar result under the weaker assumption of Citizen Sovereignty (for every object $i$ and category $j$, there is an outcome which classifies $i$ into category $j$), as well as for $m > n = 2$. When $m = n = 2$, other mechanisms exist, and Cailloux et al.\ described all of them.

Our main result shows that we cannot escape these impossibility theorems by relaxing Surjectivity.

\begin{theorem}[Main] \label{thm:main}
Let $m \ge n \ge 2$ such that $(m,n) \neq (2,2)$. Let $\mu$ be a full support distribution on the set of surjective mappings $[m] \to [n]$ which is invariant under applying permutations of $[n]$. We encode such mappings as vectors in $[n]^m$.

There exists $\epsilon_0 > 0$ (depending on $m,n,\mu$) for which the following holds.

For every $\epsilon > 0$ such that $\epsilon < \epsilon_0$ there exists $\delta > 0$ such that the following holds for all $p$. Suppose that $f_1,\dots,f_m\colon [n]^p \to [n]$ are \emph{$\delta$-surjective with respect to $\mu$}:
\[
 \Pr_{x^{(1)},\dots,x^{(p)} \sim \mu}[(f_1(x^{(1)}_1, \dots, x^{(p)}_1),\dots,f_m(x^{(1)}_m,\dots,x^{(p)}_m)) \text{ is surjective}] \ge 1 - \delta.
\]

If $f_1,\dots,f_m$ are \emph{$\epsilon$-far from constant} (for every $i \in [m]$ and $c \in [n]$, $\Pr[f_i=c] < 1-\epsilon$, where the probability is with respect to the uniform distribution) then there exists $k \in [p]$ and a permutation $\pi$ on $[n]$ such that for all $i \in [m]$,
\[
 \Pr_{x \sim [n]^p}[f_i(x) = \pi(x_k)] \ge 1-\epsilon,
\]
where the probability is with respect to the uniform distribution.
\end{theorem}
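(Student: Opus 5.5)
The plan is to reduce the approximate statement to an exact classification and then combine the two. The abstract says the tool is a general result of Alekseev and Filmus. Its shape, as I expect it to be, is a stability theorem for "polymorphisms" of a predicate $P \subseteq [n]^m$. Take a distribution $\mu$ with full support on $P$ and $m$ functions $f_1,\dots,f_m\colon[n]^p\to[n]$ such that applying them coordinatewise to $p$ i.i.d.\ samples of $\mu$ lands in $P$ with probability $1-\delta$. Then each $f_i$ is $\epsilon$-close to a function $g_i$ depending on $O_\epsilon(1)$ coordinates (a junta), and $(g_1,\dots,g_m)$ is an exact polymorphism of $P$. Here $P$ is the set of surjections. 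So the first step is to apply this result with $P=\Surj$ and the given $\mu$, obtaining juntas $g_1,\dots,g_m$ on a common set $J$ of coordinates with $|J|=O(1)$. Each $g_i$ is $\epsilon'$-close to $f_i$, where $\epsilon'$ can be made as small as we like by shrinking $\delta$, and $(g_1,\dots,g_m)$ always outputs a surjection on surjective inputs.

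The second step, which I expect to be the main obstacle, is to characterize all exact polymorphisms of $\Surj$ with no unanimity or sovereignty assumption. This is the ``on the way'' result from the abstract. I would try to prove that exact surjectivity forces one of two alternatives. Either there is a coordinate $k$ and a permutation $\pi$ with $g_i(x)=\pi(x_k)$ for all $i$, or some $g_i$ is constant. The case where some $g_i$ is constant is degenerate but genuinely occurs. For example, if $m>n$, one can fix $g_1\equiv c$ and let the remaining functions be arbitrary functions that still cover everything else.

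To prove the characterization, I would build the forced structure from inputs. I would show that, unless some $g_i$ is constant, the range of each $g_i$ is all of $[n]$. Then, for $m\ge n\ge 3$, I would adapt the Maniquet--Mongin / Cailloux et al.\ arguments. The method is to feed inputs in which only two coordinates of a voter differ, derive that each $g_i$ is a ``polymorphism of $\neq$''-type function, and conclude that it is a dictator composed with a permutation. The $n=2$, $m\ge3$ case needs its own treatment via a Post-lattice-style argument, since polymorphisms of ``not all equal'' on $\{0,1\}^m$ are projections or negated projections. The exclusion of $(2,2)$ is exactly where extra polymorphisms appear. The characterization must also be stated on $[n]^{J}$ for juntas, which is harmless.

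The third step combines the two. Because each $f_i$ is $\epsilon$-far from constant, choose $\epsilon_0$ and then $\delta$ so that $\epsilon'<\epsilon$ is small. Then $g_i$ cannot be constant: a constant $g_i$ would make $f_i$ $(1-\epsilon')$-concentrated, contradicting $\Pr[f_i=c]<1-\epsilon$ once $\epsilon'\le\epsilon$. The non-constant alternative therefore holds, giving $k$ and $\pi$ with $\Pr[f_i(x)=\pi(x_k)]\ge1-\epsilon'\ge1-\epsilon$.

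One subtlety remains: the closeness in Alekseev--Filmus is presumably measured under the marginals of $\mu$, while the theorem uses uniform measure. Symmetry of $\mu$ under permutations of $[n]$ makes every marginal uniform, so the two notions coincide. The threshold $\epsilon_0$ arises from the constants in the junta theorem and the finitely many junta polymorphisms.
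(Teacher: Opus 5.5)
For $m>n$ your plan is the paper's proof. You pass from the $\delta$-approximate polymorphism to a nearby exact polymorphism. You classify exact polymorphisms of $\Surj_{m,n}$ as either dictatorial or having some coordinate function constant. You use $\epsilon$-farness to exclude constants, and the $[n]$-symmetry of $\mu$ to make all marginals uniform.

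The genuine gap is your first step when $m=n\ge 3$. The Alekseev--Filmus theorem is not known for arbitrary predicates. Over alphabets of size greater than $2$ it is proved only for \emph{flexible} predicates: for each coordinate $i$ there must be some $x\in P$ that stays in $P$ under every change of $x_i$. $\Surj_{m,n}$ is flexible when $m>n$, since you can take $x$ whose other $m-1$ coordinates already cover $[n]$. But $\Surj_{n,n}=\Perm_n$ is not flexible, because changing one entry of a permutation repeats a value. Dropping the flexibility hypothesis is only conjectured in the paper. So for $m=n\ge 3$, which the statement includes, your argument never produces the exact polymorphism $g$.

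The paper needs a separate spectral argument for this case. Here $\mu$ must be uniform on $\Perm_n$, so its projection to any two coordinates is uniform on $\Inj_{2,n}$. Hence for $i\neq j$ the indicators $f_{i,r}=\ind{f_i=r}$ and $f_{j,r}$ are $\delta$-cross-independent sets in $K_n^p$.
\begin{itemize}
\item The Alon et al.\ identity $\Ex[f_{i,r}(x)f_{j,r}(y)]=\sum_d(-\tfrac{1}{n-1})^d\langle f_{i,r}^{=d},f_{j,r}^{=d}\rangle$, together with Cauchy--Schwarz, gives the Hoffman-type bound $n(n-2)\mu_{i,r}\mu_{j,r}\le 1-\mu_{i,r}-\mu_{j,r}+O(\sqrt{\delta})$.
\item Combined with $\sum_i\mu_{i,r}=1\pm O(\delta)$, this forces each vector $(\mu_{1,r},\dots,\mu_{n,r})$ to be either concentrated on one $i$ or close to $(1/n,\dots,1/n)$.
\item In the near-uniform case the Fourier weight above level $1$ is small, and an FKN-type lemma makes $f_{i,r}$ close to $\ind{x_k=s}$.
\item Assembling over $r$ yields either near-constant functions or a near-dictator with a common permutation, with $\delta=\Theta(\epsilon^4)$.
\end{itemize}

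Less seriously, your second step is only a plan. The assertion that non-constancy forces every $g_i$ to be onto carries the whole difficulty for general arity $p$, and you give no argument for it. The paper instead reduces to $p=1$ via Filmus's theorem that $1$-triviality implies triviality for non-degenerate, non-vulnerable predicates. It then handles unary polymorphisms with Hall's theorem and a counting argument, and cites Filmus directly for $n=2$.
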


To explain the theorem, consider the following picture:
\[
\begin{array}{c|ccc}
x^{(1)} & x^{(1)}_1 & \cdots & x^{(1)}_m \\
\vdots & \vdots & \ddots & \vdots \\
x^{(p)} & x^{(p)}_1 & \cdots & x^{(p)}_m \\\hline
& f_1(x^*_1) & \cdots & f_m(x^*_m)
\end{array}
\]

Each row represents the classification of one of the individuals: for each of the $m$ objects, it includes a category in $[n]$. Each row has to be surjective, meaning that all categories must be present. The individual classifications are aggregated using the functions $f_1,\dots,f_m$, which operate on each column separately.

The result of Maniquet and Mongin states that if (1) the outcome is always surjective and (2) $f_i(j,\dots,j) = j$ for all $i,j$, then there exists $k \in [p]$ and a permutation $\pi$ on $[n]$ such that $f_1(x) = \cdots = f_m(x) = \pi(x_k)$. Cailloux et al.\ prove a similar impossibility result under (2'): for all $i,j$ there is $x$ such that $f_i(x) = j$.

\Cref{thm:main} considers the experiment in which each individual samples a classification from a distribution $\mu$ whose support is the set of all surjective classifications, and is invariant under permutations on $[n]$. It states that if (1') the outcome is \emph{almost} always surjective and (2'') the $f_i$ are not approximately constant, then $f_1,\dots,f_m$ are close to a function of the form $\pi(x_k)$. Condition (2'') is necessary to rule out functions such as $f_i(x) = \min(i,n)$.

A crucial feature of our theorem is that $\delta$ depends on $\epsilon$ and on $m,n,\mu$ but not on $p$.

\subsection{Outline of proof}
\label{sec:intro-outline}

The proof uses a general result of Alekseev and Filmus~\cite{AlekseevFilmus} on \emph{approximate polymorphisms}. Polymorphism is a concept from universal algebra whose relevance to judgment aggregation first appears explicitly in unpublished work of Szegedy and Xu~\cite{SzegedyXu15}. It already appears implicitly a few decades earlier in work of Rubinstein and Fishburn~\cite{RubinsteinFishburn}.

\begin{definition}[Polymorphism] \label{def:polymorphism}
Let $\Sigma$ be a finite alphabet. A \emph{predicate} is a subset $P \subseteq \Sigma^m$ for some $m$. A \emph{(multi-sorted) polymorphism of $P$} is a tuple of functions $f_1,\dots,f_m\colon \Sigma^p \to \Sigma$ (for some $p$) satisfying the following condition:
\[
x^{(1)},\dots,x^{(p)} \in P \Longrightarrow
(f_1(x^{(1)}_1, \dots, x^{(p)}_1),\dots,f_m(x^{(1)}_m,\dots,x^{(p)}_m)) \in P.
\]
\end{definition}

In classification aggregation the alphabet is $\Sigma = [n]$, and the predicate is $\Surj_{m,n}$:

\begin{definition}[Surjectivity]
\label{def:surj}
Let $m \ge n$. The predicate $\Surj_{m,n} \subseteq [n]^m$ consists of all vectors $(x_1,\dots,x_m) \in [n]^m$ such that $\{x_1,\dots,x_m\} = [n]$.
\end{definition}

Stated in this language, Maniquet and Mongin determined all polymorphisms of $\Surj_{m,n}$ for $m \ge n \ge 3$ which satisfy $f_i(j,\dots,j) = j$ for all $i,j$.

Historically, polymorphisms were studied mostly in the uni-sorted setting, which corresponds to the condition $f_1 = \cdots = f_m$, but the multi-sorted setting does appear in the literature~\cite{BulatovJeavons03}.

The result of Alekseev and Filmus concerns \emph{approximate polymorphisms}.

\begin{definition}[Approximate polymorphism]
\label{def:approximate-polymorphism}
Let $P \subseteq \Sigma^m$, and let $\mu$ be a distribution on $P$. A tuple of functions $f_1,\dots,f_m\colon \Sigma^p \to \Sigma$ is a \emph{$\delta$-approximate polymorphism (with respect to $\mu$)} if
\[
 \Pr_{x^{(1)},\dots,x^{(p)} \sim \mu}[(f_1(x^{(1)}_1, \dots, x^{(p)}_1),\dots,f_m(x^{(1)}_m,\dots,x^{(p)}_m)) \in P] \ge 1 - \delta.
\]
\end{definition}

The result of Alekseev and Filmus states that all approximate polymorphisms are close to exact polymorphisms. However, Alekseev and Filmus are only able to prove it for predicates satisfying an additional condition.

\begin{definition}[Flexibility]
\label{def:flexible}
A predicate $P \subseteq \Sigma^m$ is \emph{flexible} if for all $i \in [m]$ there is $x \in P$ such that $y \in P$ for all $y \in \Sigma^m$ which differ from $x$ only in the $i$'th coordinate.
\end{definition}

\begin{theorem}[Alekseev--Filmus]
\label{thm:alekseev-filmus}
Let $P \subseteq \Sigma^m$ be a flexible predicate, and let $\mu$ be a distribution on $P$ of full support. Denote by $\mu|_1,\dots,\mu|_m$ its marginals.

There exists $\epsilon_0 > 0$ (depending on $P,\mu$) for which the following holds.

For every $\epsilon > 0$ such that $\epsilon < \epsilon_0$ there exists $\delta > 0$ such that the following holds. If $f_1,\dots,f_m\colon \Sigma^p \to \Sigma$ are a $\delta$-approximate polymorphism of $P$ then there exists an exact polymorphism $g_1,\dots,g_m\colon \Sigma^p \to \Sigma$ of $P$ such that for all $i$,
\[
 \Pr_{x_1,\dots,x_p \sim \mu|_i}[f_i(x_1,\dots,x_p) \neq g_i(x_1,\dots,x_p)] < \epsilon.
\]

If $|\Sigma| = 2$, then the same holds even without the assumption of flexibility.
\end{theorem}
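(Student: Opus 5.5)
The plan is a regularity-plus-invariance argument. Everything rests on one quantitative fact. Once we restrict to a bounded set $J$ of coordinates, every configuration of the restricted rows occurs with probability at least $(\min \mu)^{|J|}$. On the remaining low-influence coordinates, a Mossel-type "it ain't over till it's over" bound for correlated spaces controls the joint behaviour of $f_1,\dots,f_m$. Both lower bounds are independent of $p$, so a $\delta$-approximate polymorphism with $\delta$ small enough cannot place positive mass on any tuple outside $P$ that is "robustly reachable". The exact polymorphism $g_1,\dots,g_m$ is then read off from the robustly reachable values. The intended source of the bound is Mossel's invariance principle together with the correlation bound $\rho<1$ for the correlated space $(\Sigma^m,\mu)$; we expect full support of $\mu$ to supply $\rho<1$, but we have not checked that full support alone suffices, so this is an assumption of the plan.

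\textbf{Step 1 (regularization).} We apply a multi-function regularity lemma (in the style of Jones) to $f_1,\dots,f_m$ simultaneously, with $f_i$ measured under $\mu|_i^{\otimes p}$. This produces a set $J\subseteq[p]$ with $|J|\le M(\epsilon,\tau)$, independent of $p$, with the following property. For all but an $\epsilon/3$-fraction of restrictions $r_i\in\Sigma^J$ (under $\mu|_i^{\otimes J}$), the restricted function $f_i|_{r_i}$ has all influences (with respect to noisy/low-degree parts) at most $\tau$.

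\textbf{Step 2 (reachable value sets).} For a regular restriction $r_i$, let
\[
A_i(r_i)=\{a\in\Sigma: \Pr[f_i|_{r_i}=a]\ge \epsilon/(3|\Sigma|)\}.
\]
We claim that for every $r=(r_1,\dots,r_m)\in P^J$ with all $r_i$ regular, $A_1(r_1)\times\cdots\times A_m(r_m)\subseteq P$. Suppose some tuple $(a_1,\dots,a_m)$ in the product lies outside $P$. The configuration $r$ on $J$ has probability at least $(\min\mu)^{|J|}$. Conditioned on it, the low-influence functions $\ind{f_i|_{r_i}=a_i}$ have means bounded below, so the Mossel bound makes them all equal $1$ simultaneously with probability at least $c(\epsilon,\mu)>0$, provided $\tau$ is small enough. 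The resulting failure probability $(\min\mu)^{|J|}c$ exceeds $\delta$ once $\delta$ is chosen after $\tau$ and $M$. We then define $g_i(x)=f_i(x)$ when $f_i(x)\in A_i(x_J)$, and otherwise $g_i(x)=$ a fixed element of $A_i(x_J)$. For regular restrictions this is $\epsilon$-close to $f_i$ and lands in the product set, so exactness holds on inputs whose $J$-restrictions are all regular.

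\textbf{Step 3 (main obstacle: irregular restrictions).} The hardest step is making $g$ exact on inputs where some row restriction $r_i$ is irregular. There we have no control over $f_i$, yet $g$ must still produce a tuple in $P$ for every input drawn from $P^p$. This is where we would use flexibility. For each $i$, fix a witness $x^{*}\in P$ whose $i$-th coordinate is free. On the (rare) irregular restrictions, redefine the $g$'s by routing them to the constant values of such witnesses: coordinate $i$ may then take any value, and the other coordinates are pinned to values of $x^{*}$. Done carefully and coordinate by coordinate, this yields an exact polymorphism while changing each $g_i$ only on an $\epsilon/3$-fraction of inputs.

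The delicate point is consistency: pinning coordinate $j\ne i$ to $x^{*}_j$ must not break closeness or exactness elsewhere. If the naive routing fails, the first fix to try is to iterate the regularity lemma. Each iteration enlarges $J$ so that, within every configuration in $P^J$, at most one coordinate is irregular; flexibility then handles that single coordinate. For $|\Sigma|=2$ we would instead exploit that every Boolean predicate's polymorphism clone is known explicitly (Post's lattice), so exact rounding can be done directly without flexibility.
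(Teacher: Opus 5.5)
The paper does not prove this statement; it quotes it from Alekseev and Filmus~\cite{AlekseevFilmus}. Your proposal therefore has to stand on its own, and as written it is not a proof.

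Your Step~2 is sound at configurations in $P^J$ where every restriction is regular, provided $\rho<1$. Full support alone does not give $\rho<1$. For $\Perm_n$ (the non-flexible case the paper treats by a separate spectral argument) and for parity predicates, each coordinate is a function of the others, so $\rho=1$. Flexibility does give it: the partial row $x^*_{-i}$ is compatible with every value of coordinate $i$. Hence the bipartite support graph of $(x_i,x_{-i})$ under $\mu$ is connected, and $\rho(\Omega_i,\prod_{j\neq i}\Omega_j)<1$. This is where flexibility is genuinely needed; compare the remark in \Cref{sec:discussion} that it is required only for the source predicate.

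It also means the Boolean non-flexible case is simply missing. There $\rho=1$ can occur, e.g.\ for $\{x\in\{0,1\}^3:x_1\oplus x_2\oplus x_3=0\}$, whose approximate polymorphisms are governed by BLR-type linearity testing. Invariance cannot even start, and knowing the exact clones from Post's lattice gives no approximate-to-exact rounding by itself.

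The central gap is Step~3, which is a hope rather than an argument. Pinning $g_j$ to $x^*_j$ because some \emph{other} restriction $r_i$ is irregular is not a legitimate definition, since $g_j$ sees only its own column. Suppose instead each $g_j$ is pinned on its own irregular restrictions. At a configuration where only $r_i$ is irregular, you then need a value $b_i$, depending on $r_i$ alone, with $\{b_i\}\times\prod_{j\neq i}A_j(r_j)\subseteq P$ for every compatible regular $r_{-i}$. Flexibility frees coordinate $i$ only when the other coordinates equal $x^*_{-i}$, not when they range over the sets $A_j(r_j)$. You give no other reason such $b_i$ exist.

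Iterating the regularity lemma does not help. It bounds only the measure $\eta$ of the irregular restrictions, $|J|$ grows as $\eta\to0$, and each enlargement can create new irregular restrictions of smaller measure. So you can neither empty the irregular sets nor exclude configurations with several irregular coordinates.

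Such configurations are not a technicality. Take $P=\{(a,b):a\le b\}\subseteq\{0,1\}^2$ (flexible), $\mu$ uniform, and the exact polymorphism $f_1=f_2=y_1\wedge\cdots\wedge y_{s+1}$. Once $(2/3)^s\le\epsilon/3$ and $s\le M$, nothing in your Step~1 guarantee excludes $J=[s]$. At the all-ones configuration both restrictions are irregular and $A_1=A_2=\{0,1\}$. Now redefine $f_2$ to be $0$ on $\{y_1=\cdots=y_L=1\}$, with $L>s$ and $(2/3)^L\le\delta$; this is still a $\delta$-approximate polymorphism. The Step~2 rule, applied there, outputs $(g_1,g_2)=(1,0)\notin P$ whenever the first $L$ rows equal $(1,1)$.

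The correct values on irregular restrictions depend on the global structure of all the $f_j$. In this example, setting $g_2\equiv 1$ on its all-ones restriction happens to work, because $b=1$ frees coordinate~$1$. The proposal contains no mechanism that finds a consistent choice in general, and this is the missing idea.
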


It is easy to check that $\Surj_{m,n}$ is flexible for all $m > n$, but not when $m = n$. We conjecture that the theorem holds without the assumption of flexibility for any $\Sigma$.

In order to prove \Cref{thm:main} when $m > n$, we apply \Cref{thm:alekseev-filmus} on $\Surj_{m,n}$ with the given distribution $\mu$. Since $\mu$ is invariant under applying permutations of $[n]$, each marginal $\mu|_i$ is the uniform distribution on $[n]$. When $m > n$, we can thus extend \Cref{thm:main} to arbitrary full support distributions $\mu$, with the caveat that the uniform distributions on $[n]^p$ need to be replaced by $p$-th powers of $\mu|_1,\dots,\mu|_m$.

The case $m = n$ requires a completely different argument, which is spectral in nature and uses ideas from work of Alon et al.~\cite{ADFS} on the traffic light problem. 

\medskip

In order to apply \Cref{thm:alekseev-filmus} we need to know all polymorphisms of $\Surj_{m,n}$, not just ones satisfying Unanimity or Citizen Sovereignty. This is an inherent feature of this setting: the assumption that $f_1,\dots,f_m$ are unanimous doesn't guarantee that they are close to a unanimous polymorphism. Indeed, take
\[
 f_i(x_1,\dots,x_p) =
 \begin{cases}
     j & \text{if } x_1 = \cdots = x_p = j, \\
     \min(i,n) & \text{otherwise}.
 \end{cases}
\]
The probability that $x^{(i)}_1 = \cdots = x^{(i)}_p$ for some $i$ is vanishingly small, and so $f_1,\dots,f_m$ are an $o(1)$-approximate polymorphism of $\Surj_{m,n}$, with respect to any $\mu$. However, they are not close to any dictatorship.

The same issue also arises in Mossel's quantitative Arrow theorem~\cite{Mossel12}, which is the special case of \Cref{thm:alekseev-filmus} for the predicate $\Lin_m \subseteq \{0,1\}^{\binom{m}{2}}$ of all linear orders. Mossel~\cite{Mossel09} resolves it by determining all polymorphisms of $\Lin_m$. Similarly, proving \Cref{thm:main} requires us to determine all polymorphisms of $\Surj_{m,n}$.

\subsection{Impossibility domains without unanimity}
\label{sec:intro-impossibility}

Dokow and Holzman~\cite{DokowHolzman09,DokowHolzman10binary} were interested in classifying predicates for which an impossibility theorem in the style of Arrow holds.

\begin{definition}[Impossibility domain]
\label{def:impossibility}
A predicate $P \subseteq \{0,1\}^m$ is an \emph{impossibility domain} if the only unanimous polymorphisms of $P$ are dictators. That is, if $f_1,\dots,f_m\colon \{0,1\}^p \to \{0,1\}$ constitute a polymorphism of $P$ satisfying $f_i(j,\dots,j) = j$ for all $i,j$, then there exists $k \in [p]$ such that $f_1(x) = \cdots = f_m(x) = x_k$.
\end{definition}

\begin{theorem}[Dokow--Holzman, as refined by Szegedy--Xu]
\label{thm:dokow-holzman}
If a predicate $P \subseteq \{0,1\}^m$ is not an impossibility domain then one of the following holds:
\begin{enumerate}[(i)]
\item $x_1 \oplus x_2 \oplus x_3, \dots, x_1 \oplus x_2 \oplus x_3$ is a (unanimous) polymorphism of $P$.
\item $\Maj(x_1,x_2,x_3), \dots, \Maj(x_1,x_2,x_3)$ is a (unanimous) polymorphism of $P$.
\item There is a unanimous polymorphism $f_1,\dots,f_m$ of $P$ in which $f_1,\dots,f_m \in \{x_1 \land x_2, x_1 \lor x_2\}$.
\end{enumerate}
\end{theorem}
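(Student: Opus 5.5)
The plan is to go through the clone-theoretic structure of unanimous polymorphisms, following the Szegedy--Xu approach. The key fact is that unanimous multi-sorted polymorphisms of $P$ are closed under composition: if $f_1,\dots,f_m$ and $g^{(1)},\dots,g^{(q)}$ are unanimous polymorphisms, then so is $f_i(g^{(1)}_i,\dots,g^{(q)}_i)$, applied with the same composition tree in every coordinate. They are also closed under identifying and permuting variables, and identification preserves unanimity. Suppose $P$ is not an impossibility domain. Among all non-dictatorial unanimous polymorphisms, pick one of minimal arity $p$. Note that $p \ge 2$, since the only unanimous function of arity $1$ is the identity.

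\textbf{Step 1: minimal arity.} By minimality, every identification of two variables turns $f_1,\dots,f_m$ into a dictatorship $x_{k}$, with the same $k$ in every coordinate. Now take a single Boolean function $f_i$ with $p\ge 4$ all of whose identification minors are projections. By \'{S}wierczkowski's lemma, $f_i$ is itself a projection. Running this argument coordinatewise, and checking that the projection index must be the same in every coordinate (it is determined by the minors, which are common), we would get that $f$ is a dictatorship, a contradiction. Hence $p\in\{2,3\}$.

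\textbf{Step 2: $p=3$.} Each $f_i$ is a ternary unanimous Boolean function all of whose binary minors are projections. A finite check of the $2^6$ idempotent ternary functions shows that $f_i$ is either a projection, $\Maj$, or $x_1\oplus x_2\oplus x_3$. Consistency of the minors across coordinates now matters. For $\Maj$ and for parity, identifying $x_a=x_b$ yields the projection onto the merged variable in the parity case and onto $x_a$ in the majority case. For a projection $x_c$, identification yields $x_c$ or the merged variable. These patterns are pairwise incompatible unless all the $f_i$ are the same function. Since $f$ is not a dictatorship, all $f_i$ equal $\Maj$, giving (ii), or all equal $x_1 \oplus x_2 \oplus x_3$, giving (i).

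\textbf{Step 3: $p=2$.} Each $f_i$ lies in $\{x_1,x_2,x_1\land x_2,x_1\lor x_2\}$. Applying $g_i=f_i(f_i(x,y),f_i(y,x))$ turns every projection coordinate into $x_1$ and keeps the $\land$/$\lor$ coordinates unchanged. So we may assume every coordinate is $x_1$, $\land$ or $\lor$, with at least one coordinate non-projection. This is the step I expect to be the main obstacle: we must get rid of the remaining $x_1$ coordinates, which no composition of this binary family alone can do. My plan is to use the coordinates with $x_1$ together with the swapped family $g(y,x)$ and the ternary composite $g(g(x,y),g(x,z))$. If some such composite is non-dictatorial, we fall back on Step 2 or a lower-arity case. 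Otherwise, we should extract that $P$ restricted to the projection coordinates is determined by the others, so that those coordinates can be set to $\land$ or $\lor$ consistently; this would give (iii).
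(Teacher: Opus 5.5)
Your Step 3 cannot be completed, because with (iii) read literally (every $f_i \in \{x_1\land x_2, x_1\lor x_2\}$) the statement is false. The paper only quotes it and gives no proof.

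\textbf{Counterexample.} Let $Q=\{100,010,001\}$ and let $P=Q\times Q\subseteq\{0,1\}^6$, which is non-degenerate.
\begin{itemize}
\item The tuple $(x_1,x_1,x_1,x_2,x_2,x_2)$ is a unanimous non-dictatorial polymorphism, so $P$ is not an impossibility domain.
\item Feed $100,010,001$ into the first block. Parity outputs $111$ and $\Maj$ outputs $000$, so (i) and (ii) fail.
\item For (iii), feed the pairs $100,010$, then $100,001$, then $010,001$ into the first block. This forces exactly one of $f_1,f_2$, exactly one of $f_1,f_3$, and exactly one of $f_2,f_3$ to be $\lor$. That is an odd cycle of XOR constraints, which is impossible.
\end{itemize}

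\textbf{Why your Step 3 breaks.} The counterexample is precisely the case you skipped. A minimal binary non-dictator may mix $x_1$- and $x_2$-coordinates with no $\land/\lor$ coordinate at all. Then $g_i=f_i(f_i(x,y),f_i(y,x))$ collapses it to the dictator $x_1$, so your assertion ``with at least one coordinate non-projection'' is unjustified.

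Having a $\land/\lor$ coordinate does not rescue the argument either. The predicate $Q\times\{00,11\}\subseteq\{0,1\}^5$ has the polymorphism $(x_1,x_1,x_1,x_1\land x_2,x_1\land x_2)$, yet none of (i)--(iii) hold. So in general the projection coordinates can never be ``set to $\land$ or $\lor$ consistently,'' and your proposed extraction fails.

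\textbf{What your argument does prove.} Your minimal-arity argument establishes the corrected statement in which (iii) asks only for a non-dictatorial unanimous binary polymorphism with $f_i\in\{x_1,x_2,x_1\land x_2,x_1\lor x_2\}$. For that version the case $p=2$ is immediate and Step 3 is unnecessary.

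\textbf{Step 2 needs a repair.} There are $8$ idempotent ternary Boolean functions all of whose binary minors are projections, not $5$. Besides the three projections, $\Maj$ and $x_1\oplus x_2\oplus x_3$, there are $\Maj(x_1,x_2,\lnot x_3)$ and its two variable-permuted versions; in each, exactly one identification returns the merged variable. Your consistency argument still forces all $f_i$ to equal the same $f$. If $f=\Maj(x_1,x_2,\lnot x_3)$, then $f(x,y,f(x,y,z))=\Maj(x,y,z)$, so you land in (ii).

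\textbf{Step 1 is fine.} For Boolean inputs with $p\ge 3$, every tuple has a repeated entry, so the common identification minors determine each $f_i$. \'{S}wierczkowski's lemma then finishes.
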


Dokow and Holzman~\cite{DokowHolzman10} considered the case of larger alphabet under the stronger assumption of supportiveness ($f_i(x_1,\dots,x_p) \in \{x_1,\dots,x_p\}$), but were unable to provide a complete classification. Szegedy and Xu~\cite{SzegedyXu15} used tools from universal algebra to complete the classification, and also to prove a similar classification under unanimity.

\begin{theorem}[Szegedy--Xu]
\label{thm:szegedy-xu}
If a predicate $P \subseteq \Sigma^m$ is not an impossibility domain with respect to supportiveness, then this is witnessed by a polymorphism of arity $p=3$.

If a predicate $P \subseteq \Sigma^m$ is not an impossibility domain with respect to unanimity, then this is witnessed by a polymorphism of arity $p=|\Sigma|$.
\end{theorem}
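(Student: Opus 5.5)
The natural approach is to reduce arity by identifying variables. Take a non-dictatorial polymorphism $f_1,\dots,f_m$ of $P$ of some arity $p$ (supportive, resp.\ unanimous) of minimal arity. Identifying two variables of every $f_i$ in the same way, say $x_a=x_b$, gives a polymorphism of arity $p-1$. This minor is still supportive, resp.\ unanimous, because both properties survive identification of variables. If $p$ exceeds the target arity ($3$, resp.\ $|\Sigma|$), minimality forces every such minor to be a dictator. Each minor therefore has the form $f_1=\cdots=f_m=x_{k(a,b)}$, where the coordinate $k(a,b)$ is shared across all sorts. The aim is to show that this forces $f_1,\dots,f_m$ to be dictators themselves, which contradicts the choice of the tuple.

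For the unanimity statement with $p>|\Sigma|$ this is a pigeonhole argument. Every input $x\in\Sigma^p$ has two equal coordinates $x_a=x_b$. Hence $f_i(x)$ equals the identified minor evaluated at $x$, which is $x_{k(a,b)}$. The task is then to show that the map $\{a,b\}\mapsto k(a,b)$ is essentially constant, equal to some $k$. To do this I would compare two pairs $\{a,b\}$ and $\{c,d\}$ on inputs where both pairs are equal simultaneously; this needs $p\ge 4$ free coordinates, or a case analysis using $|\Sigma|\ge 2$. This is a Świerczkowski-type lemma. Its conclusion is that $f$ is either a projection or a \emph{semiprojection}: an operation agreeing with $x_k$ on every tuple that has a repetition. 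Since $p>|\Sigma|$, every tuple has a repetition, so a semiprojection is already a projection, and the multi-sorted bookkeeping is harmless because $k(a,b)$ is shared across sorts.

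The supportive statement with target arity $3$ is harder, because tuples in $\Sigma^p$ for $3<p\le|\Sigma|$ need not contain repeats. Świerczkowski's lemma only yields that each $f_i$ is a semiprojection onto a common coordinate $k$ for $p\ge4$. The leftover freedom is then on injective tuples, where $f_i(x)\in\{x_1,\dots,x_p\}$ by supportiveness. To remove it, I would use the polymorphism property together with a suitable non-dictatorial substitution. Composing $f$ with projections and with itself, for example $f(x_1,\dots,x_{p-1},f(\dots))$, should produce a lower-arity operation that is still non-dictatorial. Alternatively, one can restrict to two-element subsets of each $\Sigma$, on which the conservative clone is known to contain a majority, minority, or semilattice term. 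In that case Bulatov-style conservative-algebra machinery yields a ternary witness directly: $\Maj$, minority, or a binary semilattice operation padded to arity $3$.

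The main obstacle is this supportive case: handling semiprojections on injective tuples while keeping the witness multi-sorted and supportive. I would try the two-element-subset restriction first, since it is where the arity-$3$ bound naturally comes from.
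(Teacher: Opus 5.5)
The paper does not prove this statement; it only quotes it from Szegedy--Xu, so your proposal has to stand on its own. There is a genuine gap in the supportive half, and the unanimity half needs a correction for $|\Sigma|=2$.

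\textbf{Unanimity.} Your argument is sound once $p\ge 4$. \'Swierczkowski's lemma applies to each $f_i$ separately. The semiprojection indices $k_i$ coincide, because some pair $\{a,b\}$ is disjoint from $\{k_i,k_{i'}\}$. For $p>|\Sigma|$ every tuple has a repetition, so a minimal witness has arity at most $|\Sigma|$, and padding with dummy variables gives arity exactly $|\Sigma|$. However, the fallback ``case analysis using $|\Sigma|\ge 2$'' for $p=3$ cannot exist. On $\{0,1\}$, both $\Maj$ and $x_1\oplus x_2\oplus x_3$ are unanimous and non-dictatorial, yet all their identification minors are dictators. In fact, for $|\Sigma|=2$ the bound must be $3$. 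The predicate $\{x\in\{0,1\}^3 : x_1\oplus x_2\oplus x_3=0\}$ is witnessed by $x_1\oplus x_2\oplus x_3$, but it has no binary non-dictatorial unanimous polymorphism. So what you actually prove is arity $\max(3,|\Sigma|)$. That is how the statement has to be read; it is presumably intended for non-Boolean alphabets.

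\textbf{Supportiveness.} Here the proof is missing: you identify the obstacle but do not overcome it, and neither suggested route can work. After \'Swierczkowski's lemma, with $4\le p\le |\Sigma|$, every $f_i$ equals $x_k$ on all non-injective tuples.

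Composition cannot reduce the arity. If $t_1,\dots,t_p$ are supportive of arity $q<p$, then $(t_1(a),\dots,t_p(a))$ takes at most $q$ distinct values, so $f_i(t_1,\dots,t_p)=t_k$. By induction, every term of arity less than $p$ in the clone generated by $f_1,\dots,f_m$ is dictatorial.

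Two-element restrictions are equally blind, since $f_i$ restricts to $x_k$ on $B^p$ whenever $|B|<p$. A two-element restriction contains $\land$, $\lor$, $\Maj$ or minority only when that restriction is already non-trivial, which is exactly what fails for semiprojections. So the claim that such a term is ``known'' to exist there is false in general.

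The missing ingredient is an argument that uses the predicate $P$ itself: $\mathrm{Pol}(P)$ contains \emph{every} tuple of supportive operations preserving $P$, with one operation per coordinate. You need to use this either to construct a binary or ternary non-dictatorial witness directly, or to show that such an $f$ cannot preserve $P$. This is the step for which Szegedy and Xu rely on universal-algebraic machinery, namely the theory of conservative algebras. Without a substitute for it, the supportive half remains unproved.
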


Without unanimity, a new type of polymorphism arises: a \emph{certificate}. In the case of $\Surj_{m,n}$, a polymorphism $f_1,\dots,f_m$ is a certificate if for each $j \in [n]$, one of the $f_i$'s is the constant $j$ function.

\begin{definition}[Trivial domain]
\label{def:trivial}
A predicate $P \subseteq \Sigma^m$ is \emph{trivial} if for every polymorphism $f_1,\dots,f_m\colon \Sigma^p \to \Sigma$, one of the following cases holds:
\begin{enumerate}[(i)]
\item \textbf{$f_1,\dots,f_m$ are dictatorial:} There exists an index $k \in [p]$ and a permutation $\pi$ on $\Sigma$ such that
\[
 f_1(x) = \cdots = f_m(x) = \pi(x_k).
\]
\item \textbf{$f_1,\dots,f_m$ are a certificate:} There exists a partial mapping $\rho$ from $[m]$ to $\Sigma$ such that
\begin{enumerate}[(a)]
\item $\rho$ is a certificate: if $x_i = \rho(i)$ for all $i \in \dom\rho$ then $x \in P$.
\item $f_1,\dots,f_m$ conform to $\rho$: $f_i(x) = \rho(i)$ for all $i \in \dom\rho$ and all $x$.
\end{enumerate}
\end{enumerate}
We say that $P$ is \emph{$p$-trivial} if the condition holds for a specific value of $p$.
\end{definition}

Continuing our previous example, a certificate for $\Surj_{m,n}$ is a partial mapping $\rho$ from $[m]$ to $[n]$ which is surjective. Functions $f_1,\dots,f_m$ conform to $\rho$ if $f_i$ is the constant $\rho(i)$ function for any $i \in \dom\rho$.

Filmus~\cite{Filmus26} proves a result in the style of \Cref{thm:szegedy-xu} in this context. The result applies to predicates satisfying two mild technical conditions.

\begin{definition}[Degeneracy]
\label{def:degeneracy}
A predicate $P \subseteq \Sigma^m$ is \emph{non-degenerate} if
\begin{enumerate}[(a)]
\item $P$ has full projections: for every $i \in [m]$ and $\sigma \in \Sigma$ there exists $x \in P$ such that $x_i = \sigma$.
\item $P$ depends on all coordinates: for every $i \in [m]$ there are $x \in P$ and $y \notin P$ which differ only on the $i$'th coordinate.
\end{enumerate}
\end{definition}

\begin{definition}[Vulnerability]
\label{def:vulnerability}
A predicate $P \subseteq \Sigma^m$ is \emph{vulnerable} if there exists $i \in [m]$ and $\sigma \in \Sigma$ such that all $x \in \Sigma^m$ satisfying $x_i = \sigma$ belong to $P$.
\end{definition}

It is easy to check that $\Surj_{m,n}$ is non-degenerate and non-vulnerable.

In the setting of Filmus~\cite{Filmus26}, which allows different alphabets for different coordinates, the assumption of non-degeneracy is without loss of generality: we can shrink the alphabet of each coordinate to its projection, and drop coordinates which $P$ does not depend on. Vulnerability is needed to rule out non-trivial predicates such as OR (which has ORs as polymorphisms).

We can now state a consequence of the main result of Filmus~\cite{Filmus26}.

\begin{theorem}[Filmus]
\label{thm:filmus}
Let $P \subseteq \Sigma^m$ be a non-vulnerable non-degenerate predicate.

\begin{enumerate}[(a)]
\item If $|\Sigma| > 2$ and $P$ is $1$-trivial then it is trivial.
\item If $\Sigma = \{0,1\}$ and $P$ is $1$-trivial then it is trivial unless there is a polymorphism $f_1,\dots,f_m\colon \{0,1\}^2 \to \{0,1\}$ such that $f_1,\dots,f_m \in \{x_1 \land x_2, x_1 \lor x_2\}$.
\end{enumerate}
\end{theorem}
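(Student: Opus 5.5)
The plan is to prove $p$-triviality by induction on $p$, with $1$-triviality as the base case. For the inductive step I would study the \emph{identification minors} of a polymorphism. Given a polymorphism $f_1,\dots,f_m\colon \Sigma^p \to \Sigma$ and indices $a<b$, identifying $x_a$ with $x_b$ in every $f_i$ gives a polymorphism $f^{(ab)}_1,\dots,f^{(ab)}_m$ of arity $p-1$. By induction each such minor is either dictatorial, namely $\pi_{ab}(x_{k_{ab}})$ in every coordinate, or conforms to a certificate $\rho_{ab}$. Two preliminary facts will be needed throughout:
\begin{itemize}
\item Non-degeneracy (full projections together with dependence on all coordinates) forces the permutation $\pi$ in a dictatorial polymorphism to be the same in every coordinate and to preserve $P$.
\item Non-vulnerability forbids certificates of size one, so every certificate fixes at least two coordinates.
\end{itemize}

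First I would dispose of the case where some minor is a certificate. Suppose $f^{(ab)}$ conforms to $\rho$. Then for every $i \in \dom\rho$ the function $f_i$ equals $\rho(i)$ on the diagonal $x_a = x_b$. I would then show that $f_i$ is constant everywhere. To do so, I would feed in inputs $x^{(1)},\dots,x^{(p)} \in P$ in which rows $a$ and $b$ agree on all coordinates except one, chosen outside $\dom\rho$. Full projections let me realise any desired column, and dependence on all coordinates supplies the non-membership witnesses. Non-vulnerability is used to rule out the situation where only a single coordinate stays constant and the others escape.

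For $|\Sigma| > 2$, the remaining case is that all minors are dictatorial. I would then run a Świerczkowski-type argument. When $p \ge 4$, the dictator indices $k_{ab}$ and permutations $\pi_{ab}$ of overlapping minors must be compatible, and this should pin down a single $k$ and $\pi$ with $f_i = \pi(x_k)$ off a small set, and then everywhere. The small arities $p=2,3$ have to be handled by hand. This is where $|\Sigma|>2$ enters: over a larger alphabet a binary function whose diagonal is a permutation but which is not essentially unary can be shown to violate $P$. The argument uses three distinct symbols and non-vulnerability to build a bad input.

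In the Boolean case the same analysis goes through except at $p=2$. There the non-dictatorial binary functions whose diagonal is a permutation are $\land$, $\lor$ and their negated versions. After normalising by the permutation, such a polymorphism has $f_i \in \{x_1\land x_2, x_1\lor x_2\}$, and this gives exactly the stated exception. When this exception is absent, $2$-triviality holds, and the induction continues as before from $p=3$.

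I expect the main obstacle to be the low-arity base cases ($p=2$ and $p=3$) of the all-dictatorial case. Świerczkowski-style lemmas only take over from arity $4$, so the small cases require a genuine construction of violating inputs. That construction must use non-vulnerability and non-degeneracy carefully, and in the binary case it must isolate exactly the $\land/\lor$ exception.
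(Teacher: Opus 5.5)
\textbf{There is a genuine gap.} Your step that disposes of certificate minors is false as stated. Take $P=\Surj_{6,3}$. It satisfies every hypothesis: it is non-degenerate, non-vulnerable, and $1$-trivial by \Cref{lem:surj-polymorphisms-1}. Let $p=2$, $f_1\equiv 1$, $f_2\equiv 2$, $f_3\equiv 3$. For $j\in[3]$ let $f_{3+j}(x_1,x_2)=j$ if $x_1=x_2$, and $f_{3+j}(x_1,x_2)=x_1$ otherwise. This is a polymorphism, since coordinates $1,2,3$ already force surjectivity. Its identification minor conforms to the certificate $\rho=\{4\mapsto 1,5\mapsto 2,6\mapsto 3\}$, yet none of $f_4,f_5,f_6$ is constant, and the same construction works in every arity. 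The induction hypothesis only gives you \emph{some} certificate for the minor; the real content is producing a possibly different certificate for $f$ itself. Your proposed test inputs, where rows $a$ and $b$ differ only outside $\dom\rho$, never evaluate any $f_i$ with $i\in\dom\rho$ off the diagonal. Their outputs also lie in $P$ automatically because $\rho$ is a certificate, so they carry no information.

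More fundamentally, the binary case, where the difficulty lies, cannot be settled with the ingredients you list. For $p=2$ the only identification minor is the diagonal, and a dictatorial diagonal on a non-essentially-unary $f$ is compatible with non-degeneracy and non-vulnerability. For example, $P=\{x\in\mathbb{Z}_3^3 : x_1+x_2+x_3=0\}$ is non-degenerate and non-vulnerable, and $f_i(x,y)=2x-y$ is a polymorphism whose diagonal is the identity. The missing idea is to apply $1$-triviality to the unary polymorphisms $y\mapsto(f_1(x_1,y_1),\dots,f_m(x_m,y_m))$ obtained by fixing one argument to a row $x\in P$ (and symmetrically for the other argument). You then use that a dictatorial unary polymorphism has the \emph{same} permutation in every coordinate. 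That property is part of the hypothesis, not a consequence of non-degeneracy as your first bullet asserts: $\{(a,a+1):a\in\mathbb{Z}_3\}$ is non-degenerate and preserved by $(a\mapsto -a,\ b\mapsto 2-b)$.

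Two further steps are also unsupported. In the Boolean case, normalising the diagonal only makes each $f_i$ unanimous, so $f_i\in\{x_1,x_2,x_1\land x_2,x_1\lor x_2\}$. Non-dictatorial mixed tuples must still be ruled out or converted into an all-$\land/\lor$ polymorphism. For $4\le p\le|\Sigma|$, the \'{S}wierczkowski argument only controls inputs with a repeated entry, so ``then everywhere'' needs its own argument.

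For comparison, the paper does not reprove any of this. It imports \Cref{thm:filmus-paper} from~\cite{Filmus26}, whose only obstructions are stickiness, the $\land/\lor$ case, and Latin-square polymorphisms. It then shows two things:
\begin{enumerate}[(i)]
\item Stickiness plus $1$-triviality gives vulnerability. The unary polymorphism that is constant $\sigma$ at coordinate $i$ and the identity elsewhere must be a certificate supported on $\{i\}$.
\item Latin squares cannot occur when all permutations coincide. After \Cref{lem:filmus-identical} handles $|P|=|\Sigma|$, pick $x,y\in P$ with $x_1=y_1$ and $x_i\ne y_i$. Then $f_i(x_i,\cdot)=f_1(x_1,\cdot)=f_1(y_1,\cdot)=f_i(y_i,\cdot)$, contradicting that $f_i(\cdot,x_i)$ is a permutation.
\end{enumerate}
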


Using this result, we are able to determine all polymorphisms of $\Surj_{m,n}$, enabling us to deduce \Cref{thm:main} via \Cref{thm:alekseev-filmus}.

\begin{theorem}[Polymorphisms of $\Surj_{m,n}$] \label{thm:surj-polymorphisms}
Let $m \ge n \ge 2$, where $(m,n) \neq (2,2)$. If $f_1,\dots,f_m\colon [n]^p \to [n]$ constitute a polymorphism of $\Surj_{m,n}$ then one of the following holds:
\begin{enumerate}[(i)]
\item \label{itm:surj-dictator}
There exists $k \in [p]$ and a permutation $\pi$ of $[n]$ such that
\[
 f_1(x) = \cdots = f_m(x) = \pi(x_k).
\]
\item For each $j \in [n]$ there exists $i \in [m]$ such that $f_i(x) = j$ for all $x$.
\label{itm:surj-constant}
\end{enumerate}
\end{theorem}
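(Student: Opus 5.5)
The plan is to derive \Cref{thm:surj-polymorphisms} from \Cref{thm:filmus}. We have already noted that $\Surj_{m,n}$ is non-degenerate and non-vulnerable. It then remains to establish two things.

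\textbf{Step 1 (1-triviality).} Take any unary polymorphism $f_1,\dots,f_m\colon[n]\to[n]$ of $\Surj_{m,n}$ and show it is either dictatorial or a certificate. Let $C\subseteq[m]$ be the set of coordinates where $f_i$ is constant, and let $K$ be the set of these constant values. If $K=[n]$ we have a certificate, so assume some $j\notin K$.

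First I will show that every non-constant $f_i$ is a permutation. Suppose $f_i(a)=f_i(b)$ with $a\ne b$. I would construct a surjective $x$ in which $a$ appears only at coordinate $i$; this uses $m\ge n$, and when $m=n$ all other values appear exactly once. Replacing $x_i$ by $b$ does not change the output, so the output depends on the $x_{i'}$ for $i'\ne i$. I then choose these entries so that the value $j$ is missed in the output. For this I would use the freedom to reorder $x$ on the other coordinates, together with the fact that outputs of constant coordinates avoid $j$. Then I show that all non-constant $f_i$ equal a single permutation $\pi$. If $f_i\ne f_{i'}$, I pick $a$ with $f_i(a)\ne f_{i'}(a)$ and compare surjective inputs that swap the values at coordinates $i,i'$. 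Counting shows the output then has a repeated value among $n$ slots while another value is missing. Finally, $C$ must be empty. Otherwise, replacing $x$ by a surjective input in which the preimage under $\pi$ of the constant value $f_c$ ($c\in C$) appears only at coordinate $c$ makes that value hit only at $c$, while $\pi(x_c)$ is missed; composing with $\pi^{-1}$ then gives a non-surjective output.

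I expect this combinatorial case analysis to be the main obstacle, especially the tight case $m=n$, where every surjective input is a permutation and there is little slack. I would first try to treat it by reducing to statements about permutations of $[n]$.

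\textbf{Step 2 (the binary exception).} For $n=2$ and $m\ge3$, $\Surj_{m,2}$ consists of the non-constant vectors. I must rule out binary polymorphisms with each $f_i\in\{x_1\land x_2,x_1\lor x_2\}$. Let $A$ be the set of $\land$-coordinates, with $|A|\ge2$ after possibly swapping the roles of $0$ and $1$ (since $m\ge3$, one of $A$ and its complement has size at least $2$). Set the columns $(1,0)$ at one $\land$-coordinate, $(0,1)$ at another, and $(0,0)$ everywhere else. Both rows are then non-constant, but every output is $0$, which is a contradiction. The dual choice with $(1,0),(0,1),(1,1)$ handles at least two $\lor$-coordinates.

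With Steps 1 and 2, \Cref{thm:filmus}(a) for $n\ge3$, or (b) for $n=2$, $m\ge3$, gives that $\Surj_{m,n}$ is trivial. Finally, I would translate triviality into the statement of the theorem. A certificate for $\Surj_{m,n}$ is a partial surjective map $\rho$, and conforming to it means that for each $j$ some $f_i$ is constantly $j$; this is case \ref{itm:surj-constant}. The dictatorial case is case \ref{itm:surj-dictator}.
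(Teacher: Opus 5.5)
Your overall reduction for $n\ge 3$ is the same as the paper's: prove 1-triviality, then apply \Cref{thm:filmus}(a). Your Step~2 is a correct, self-contained substitute for the paper's appeal to Filmus's Theorem~1.5 when $n=2$.

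The gap is in Step~1, which carries the whole combinatorial content of the theorem (the paper's \Cref{lem:surj-polymorphisms-1}). Its first claim, that non-constant $f_i$ are permutations, is not established by the collision-plus-reordering idea, and as you describe it the construction fails. Take $m=n=3$ with $f_1(1)=f_1(2)=1$, $f_1(3)=2$, $f_2\equiv 2$, $f_3\equiv 3$. Then $K=\{2,3\}$ forces $j=1$. Every permutation $x$ with $x_1\in\{a,b\}=\{1,2\}$ has output $(1,2,3)$. The only violating inputs, such as $x=(3,1,2)$, put at coordinate~$1$ a value \emph{outside} the collision.

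More generally, at that point you know nothing about the other non-constant coordinates. Their preimages $f_{i'}^{-1}(j)$ may be large, and whether a surjective $x$ can avoid all of them at once is a Hall-type matching condition. That condition can genuinely fail for a given $j\notin K$: if $f_{i'}(1)=1$ for every $i'$, the value~$1$ is never missed even though $K=\emptyset$. So both the missed value and the witness must come from global information, which ``freedom to reorder'' does not supply.

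The missing idea is the paper's argument, which differs by case.

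For $m=n$, consider the matrix $M(i,s)=f_i(s)$. Every generalized diagonal is a permutation, and the $n$ cyclic diagonals show each value occurs exactly $n$ times. Two equal entries must share a row or a column, so each value fills a whole row or a whole column. Since rows and columns intersect, either all rows are constant or all columns are.

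For $m>n$, let $c=|K|$ and consider the bipartite graph joining $s\in[n]$ to $i\in[m]$ when $f_i(s)\ne r$. For every $r\notin K$, failure of Hall's condition in this graph gives a set $X_r$ with $c+1\le|X_r|\le n-1$. The set $I_r$ of coordinates mapping all of $X_r$ to $r$ then has size at least $m-|X_r|+1$. Now double count $\Delta=\sum_{r\notin K}|I_r||X_r|$. It is at most $(m-c)n$, because for fixed $i$ the sets $X_r$ with $i\in I_r$ are disjoint. It is bounded below using concavity of $x(m-x+1)$. Together these force $c=0$, $|X_r|=1$ and $I_r=[m]$ for all $r$, which yields the common permutation directly.

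With this argument your later sub-steps become unnecessary. If you keep them, note that in the last one you should place $\pi^{-1}(r)$ for some $r\notin K$, not $\pi^{-1}(f_c)$, only on constant coordinates, so that $r$ is missed.
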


The case $m = n$ was already proved by Falik and Friedgut~\cite{FalikFriedgut}. Using \Cref{thm:filmus}, we provide a quick proof of this case.

\subsection{Equivalence relations}
\label{sec:intro-equivalence}

Fishburn and Rubinstein~\cite{FishburnRubinstein} considered a different kind of classification problem, which also appears later in Kasher and Rubinstein~\cite{KasherRubinstein}. In this problem, a society of individuals is grouping a set of objects into equivalence classes. Each individual encodes their classification by stating, for any two objects, whether they are equivalent. The individual classifications are then aggregated by determining separately for any two objects whether they are equivalent.

In the language of polymorphisms, this corresponds to the following predicate.

\begin{definition}[Equivalence predicate]
\label{def:equivalence}
The predicate $\Equiv_m \subseteq \{0,1\}^{\binom{m}{2}}$ consists of all vectors $x$ satisfying the condition
\[
 x_{\{i,j\}} = x_{\{j,k\}} = 1 \Longrightarrow x_{\{i,k\}} = 1
\]
for all distinct $i,j,k$.
\end{definition}

We think of elements of $x \in \Equiv_m$ as equivalence relations in which $i \sim j$ if $x_{\{i,j\}} = 1$.

Fishburn and Rubinstein, and later Kasher and Rubinstein, showed that all unanimous polymorphisms of $\Equiv_m$ are oligarchies, that is, $f_1 = \cdots = f_m$, and the common value is a conjunction of inputs or a constant function.

We prove an analog of \Cref{thm:main} for this predicate, which we state using the language of polymorphisms.

\begin{theorem}[Approximate polymorphisms of $\Equiv_m$] \label{thm:main-equiv}
Let $m \ge 3$ and let $\mu$ be a full support distribution on $\Equiv_m$ which is invariant under permutations of $[m]$.

Let $q$ be the probability that $i \sim j$ under $\mu$, and let $\nu$ be the product distribution with marginals $q$. 

There exists $\epsilon_0 > 0$ (depending on $m$ and $\mu$) for which the following holds.

For every $\epsilon > 0$ such that $\epsilon < \epsilon_0$ there exists $\delta > 0$ such that the following holds for all $p$. Suppose that $(f_{\{i,j\}})_{i<j}\colon \{0,1\}^p \to \{0,1\}$ are a $\delta$-approximate polymorphism of $\Equiv_m$ with respect to $\mu$.

If $\Pr_\nu[f_{\{i,j\}}=0] < 1-\epsilon$ for all $i < j$ then there exists a set $S \subseteq [p]$ such that for all $i < j$,
\[
 \Pr_\nu\bigl[f_{\{i,j\}}(x) = \bigwedge_{s \in S} x_s\bigr] \ge 1 - \epsilon.
\]
\end{theorem}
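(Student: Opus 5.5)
Since $\Equiv_m$ is a predicate over $\{0,1\}$, \Cref{thm:alekseev-filmus} applies without any flexibility assumption. I would prove \Cref{thm:main-equiv} by combining it with a determination of all exact polymorphisms of $\Equiv_m$. Apply \Cref{thm:alekseev-filmus} to $P=\Equiv_m$ with the given $\mu$. By the permutation invariance of $\mu$, every marginal $\mu|_{\{i,j\}}$ is the Bernoulli distribution with parameter $q$. Hence the $p$-th power of each marginal is exactly $\nu$. Full support gives $0<q<1$. We obtain an exact polymorphism $(g_{\{i,j\}})$ of $\Equiv_m$ with $\Pr_\nu[f_{\{i,j\}}\ne g_{\{i,j\}}]<\epsilon'$ for a suitable $\epsilon'$, which I take to be $\epsilon'=\epsilon/2$ (shrinking $\epsilon_0$ accordingly). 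It then remains to show that every exact polymorphism close to such $f$'s is of the form $\bigwedge_{s\in S}x_s$ for a common $S$.

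The main step is to classify all polymorphisms of $\Equiv_m$ without assuming unanimity, as the paper did for $\Surj_{m,n}$. The general form I expect is the following. Either $g_{\{i,j\}}\equiv 0$ for some pairs, with the remaining coordinates constrained in some way, or some $g_e$ are the constant $1$. A predicate-level obstruction is the certificate-like structure: a pattern of constants forcing transitivity, such as all $g_{\{i,j\}}$ with $i,j\in A$ equal to $1$ and all other $g$'s equal to $0$. Non-constant coordinates must again behave like conjunctions.

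I would reduce to the known unanimous case of Fishburn--Rubinstein as follows. Each $g_e$ satisfies $g_e(0,\dots,0)\in\{0,1\}$ and $g_e(1,\dots,1)\in\{0,1\}$. Feeding constant rows (the empty relation and the full relation) shows that $(g_e(\vec 0))_e$ and $(g_e(\vec 1))_e$ lie in $\Equiv_m$. A coordinate that is non-constant under $\nu$ with probability bounded away from $0$ cannot be the constant $0$. I would then argue that, if all $g_e$ are non-constant, they are unanimous or can be made so. A non-constant $g_e$ with $g_e(\vec 1)=0$ or $g_e(\vec 0)=1$ should violate transitivity on carefully chosen triangles $\{i,j,k\}$. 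This uses inputs in which two edges of a triangle are all-ones across individuals while the third is arbitrary, which is allowed when the other individuals' relations are suitable. Once unanimity holds, Fishburn--Rubinstein gives $g_e=\bigwedge_{s\in S}x_s$ (or the constant $1$, which is the empty conjunction), with a common $S$ for all $e$.

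The obstacle is that the hypothesis only rules out $f_e$ being nearly the constant $0$. A $g_e$ could be the constant $1$, and that is consistent with the conclusion only via $S=\emptyset$, which requires all other coordinates to be close to $1$ as well. So the classification must show that the set of edges with $g_e\equiv1$ is either empty or everything, given that no $g_e\equiv0$. This follows from transitivity on a triangle with edges $e,e'$ where $g_e\equiv1$ and $g_{e'}$ is non-constant. Choosing inputs that make $g_{e'}=1$ forces the third edge to be $1$ on a large set of inputs, which should propagate to show that all coordinates are identically $1$. I would also handle mixed cases in which some $g_e\equiv 0$: such a $g$ is within $\epsilon/2$ of an $f_e$ with $\Pr_\nu[f_e=0]<1-\epsilon$, which is a contradiction. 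Finally, the triangle inequality gives $\Pr_\nu[f_e=\bigwedge_S x_s]\ge1-\epsilon$.
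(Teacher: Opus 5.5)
Your outer structure matches the paper's proof: apply \Cref{thm:alekseev-filmus} (no flexibility is needed over $\{0,1\}$, and the $p$-th power of every marginal is $\nu$), classify the resulting exact polymorphism $g$, and use the hypothesis on $f$ to exclude $g_e\equiv 0$. The gap is the classification step, which is where the real work lies.

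For the case where no $g_e$ is constant you reduce to Fishburn--Rubinstein. That reduction rests on unanimity, $g_e(\vec 0)=0$ and $g_e(\vec 1)=1$, but you only say this ``should'' hold. The one mechanism you name does not exist: no input has two edges of a triangle all-ones across individuals while the third is arbitrary. Each individual relation with $x_{\{i,j\}}=x_{\{j,k\}}=1$ also has $x_{\{i,k\}}=1$.

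The actual constraint is as follows. Order the edges of a triangle as $\{i,j\},\{j,k\},\{i,k\}$. The admissible input triples $(a,b,c)\in(\{0,1\}^p)^3$ are exactly those with $a\wedge b=b\wedge c=a\wedge c$ coordinatewise, and the output triple must avoid Hamming weight~$2$. Unanimity has to be extracted from this constraint together with the no-zero hypothesis. It genuinely fails without that hypothesis: $g_{\{j,k\}}\equiv g_{\{i,k\}}\equiv 0$ with $g_{\{i,j\}}$ arbitrary is a polymorphism. Your proposal does not carry out this extraction, so the key step is missing.

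The paper fills this step by quoting the classification of polymorphisms of $\Equiv_3$ (the weight-$\neq 2$ predicate) from \cite{Filmus26} and gluing triangles. A self-contained repair along your lines is short.

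\emph{Case 1: some $g_e(\vec 0)=1$}, say $e=\{i,k\}$. The inputs $(a,b,\vec 0)$ with $a\wedge b=\vec 0$ force $g_{\{i,j\}}(a)=g_{\{j,k\}}(b)$. Taking $b=\vec 0$ and then $a=\vec 0$ shows both are the same constant, hence $\equiv 1$ by the no-zero hypothesis. Then $g_{\{i,k\}}\equiv 1$ as well, giving $S=\emptyset$.

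\emph{Case 2: all $g_e(\vec 0)=0$.} Write $F_e=g_e^{-1}(1)$, and take $a\in F_{\{i,j\}}$ and $b\in F_{\{j,k\}}$. The inputs $(a,b,a\wedge b)$, $(a,a\wedge b,a\wedge b)$ and $(a\wedge b,b,a\wedge b)$ successively put $a\wedge b$ into all three $F_e$. The same holds for any two edges, by symmetry. Hence $G=\bigcap_e F_e$ is nonempty, closed under $\wedge$, and avoids $\vec 0$. Its minimum $1_S$ satisfies $F_e\subseteq\{a : a\ge 1_S\}$, and the input $(a,1_S,1_S)$ gives the reverse inclusion.

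This yields $g_e=\bigwedge_{s\in S}x_s$ on every triangle directly, so you do not even need Fishburn--Rubinstein. Triangles sharing an edge then glue to a common $S$. Your treatment of the constant-$1$ edges is essentially correct as it stands.
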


This result follows from \Cref{thm:alekseev-filmus} via the following characterization of all polymorphisms of $\Equiv_m$.

\begin{theorem}[Exact polymorphisms of $\Equiv_m$]
\label{thm:equiv-polymorphisms}
Let $m \ge 3$. If $(f_{\{i,j\}})_{i<j}\colon \{0,1\}^p \to \{0,1\}$ constitute a polymorphism of $\Equiv_m$ then there exists an equivalence relation $\sim$ on $[m]$ such that the following holds:
\begin{enumerate}[(a)]
\item If $i \nsim j$ then $f_{\{i,j\}} = 0$.
\item For every equivalence class $A$ of $\sim$ of size at least~$3$ there exists a set $S \subseteq [p]$ such that for every $i\neq j \in A$, we have
\[
 f_{\{i,j\}}(x) = \bigwedge_{s \in S} x_s.
\]
\end{enumerate}
(If $A = \{i,j\}$ is an equivalence class of size~$2$ then $f_{\{i,j\}}$ can be arbitrary.)

Moreover, for every equivalence relation $\sim$ on $[m]$, all functions $f_{\{i,j\}}$ satisfying the above constraints are indeed polymorphisms.
\end{theorem}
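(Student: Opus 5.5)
Since $\Equiv_m$ is a Boolean predicate, I will argue directly rather than through \Cref{thm:filmus}. The first step is to define the relation $\sim$: set $i \sim j$ iff $i=j$ or $f_{\{i,j\}} \not\equiv 0$. The key claim is that $\sim$ is transitive. Suppose $f_{\{i,j\}}(a)=1$ and $f_{\{j,k\}}(b)=1$ for some $a,b \in \{0,1\}^p$. I want a single input $x^{(1)},\dots,x^{(p)} \in \Equiv_m$ whose $\{i,j\}$-column is $a$ and whose $\{j,k\}$-column is $b$. For each row $s$, pick an equivalence relation with $i\sim j$ iff $a_s=1$ and $j\sim k$ iff $b_s=1$; such a relation exists in all four cases. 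The polymorphism property then forces $f_{\{i,k\}}$ to equal $1$ on that row's $\{i,k\}$-column. This gives $f_{\{i,k\}}\not\equiv 0$, so $\sim$ is an equivalence relation and (a) holds by definition.

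For (b), fix a class $A$ with $|A|\ge 3$ and distinct $i,j,k\in A$. The core lemma comes from the triangle constraint on $\{i,j\},\{j,k\},\{i,k\}$. Restricted to three points of an equivalence relation, the allowed patterns are all of $\{0,1\}^3$ except those with exactly two ones. Also, any pattern on a triple extends to an element of $\Equiv_m$: put the remaining points in singleton classes. So $(f_{ij},f_{jk},f_{ik})$ is a polymorphism of the ternary predicate $T=\{x\in\{0,1\}^3 : x \text{ does not have exactly two ones}\}$. I will show that every polymorphism of $T$ in which no coordinate is identically $0$ has all three coordinates equal to the same conjunction $\bigwedge_{s\in S}x_s$.

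To prove this, I first use the fact that $T$ contains $(0,0,0)$ and all unit vectors. Feeding columns $u$, $0$, $0$ gives $(f_1(u),f_2(0),f_3(0))\in T$ for every $u$. Since each $f_\ell$ is nonzero, varying these substitutions should force $f_\ell(0)=0$; I need to check this carefully. Next, feeding columns $(u,v,u\wedge v)$, which is allowed because the rows $(1,1,1)$, $(1,0,0)$, $(0,1,0)$, $(0,0,0)$ all lie in $T$, should give $f_1(u)\wedge f_2(v)\Rightarrow f_3(u\wedge v)$. By symmetry of $T$ this holds for every ordering of the three coordinates. Taking $v=u$ gives $f_1(u)\wedge f_2(u)\Rightarrow f_3(u)$. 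Using columns $(u,0,0)$ and variants, together with rows $(1,0,0)$ and $(0,0,0)$, should show $f_1\le$ the others in a suitable sense, and hence $f_1=f_2=f_3=:g$. Then $g(u)\wedge g(v)\Rightarrow g(u\wedge v)$ with $g(0\ldots)$ handled separately, so $g$ is closed under meets. Monotonicity should come from the rows $(1,1,1)$ and $(0,0,0)$ combined with unit vectors. A monotone, meet-closed, nonzero Boolean function is exactly a conjunction $\bigwedge_{s\in S}x_s$, possibly with $S=\emptyset$. Finally, since any two pairs in $A$ are linked through a chain of triangles, the set $S$ is the same across all of $A$.

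For the converse, take $\sim$ and functions as in the statement. In each row the inputs form equivalence relations. Pairs with $i\nsim j$ output $0$. Within a class of size at least $3$, the output is the intersection over $s\in S$ of the row relations restricted to $A$, which is again an equivalence relation on $A$. Classes of size $2$ impose no constraint. The union of these is transitive because every output edge lies inside a class of $\sim$.

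The main obstacle I expect is the ternary lemma: extracting $f_\ell(0)=0$, equality of the three functions, and monotonicity purely from the membership of $T$. If a direct case analysis gets messy, I would instead invoke \Cref{thm:dokow-holzman}-style reasoning or \Cref{thm:filmus}(b) applied to $T$. $T$ is non-vulnerable and non-degenerate, and its only non-trivial option is AND/OR mixtures. OR is excluded because $(1,0,0),(0,1,0)$ map to $(1,1,0)\notin T$. This would reduce everything to the conjunction case.
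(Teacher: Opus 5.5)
\paragraph{Verdict: genuine gap in the ternary lemma.}
Most of your proposal is correct. Your definition of $\sim$ and your transitivity argument work, and they are more direct than the paper's: the paper gets transitivity from the full classification of polymorphisms of $\Equiv_3$ (\Cref{lem:equiv3-polymorphisms}, imported from \cite[Theorem 1.6(5)]{Filmus26}). The reduction of each triangle to a polymorphism of $T$, the chaining of $S$ across $A$, and the converse are also fine.

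The gap is the ternary lemma, which carries all of part (b), and your sketch of it does not go through.
\begin{enumerate}[(1)]
\item Nonvanishing does not force $f_\ell(0)=0$. The triple $f_1=f_2=f_3\equiv 1$ is a polymorphism of $T$ (it is the conjunction with $S=\emptyset$).
\item The equality step cannot come from columns using only the rows $(1,0,0)$, $(0,0,0)$ and their permutations. Such columns have pairwise disjoint supports. Once $f_\ell(0)=0$, they only tell you that no two of the $f_\ell$ accept disjointly supported inputs. The triple $f_1=f_2=x_1$, $f_3=x_1\wedge x_2$ passes every such test, yet it is not a polymorphism of $T$: taking all three columns equal to $(1,0)$ (rows $(1,1,1)$ and $(0,0,0)$) produces $(1,1,0)\notin T$. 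So any proof of equality must use the row $(1,1,1)$. Your monotonicity step presupposes equality, so it is also unsupported.
\item The fallback via \Cref{thm:filmus}(b) yields nothing. $T$ is closed under coordinatewise $\wedge$, so the exceptional case of that theorem occurs and it draws no conclusion. It does not assert that all polymorphisms are AND/OR mixtures. Likewise, \Cref{thm:dokow-holzman} only concerns unanimous polymorphisms, and the constant-$1$ solution is not unanimous. This is why the paper needs the finer classification from \cite{Filmus26}.
\end{enumerate}

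\paragraph{How to repair it.}
The lemma is true, and your ingredients suffice once the row $(1,1,1)$ is used properly.
\begin{enumerate}[(1)]
\item Write $U_\ell=f_\ell^{-1}(1)$; all three sets are nonempty. Take $u\in U_1$ and $v\in U_2$. Your meet implication gives $w:=u\wedge v\in U_3$.
\item Columns $(u,w,w)$ (rows $(1,1,1),(1,0,0),(0,0,0)$) then give $w\in U_2$. Columns $(w,v,w)$ give $w\in U_1$.
\item For any $x\ge w$, put $x$ in coordinate $\ell$ and $w$ in the other two coordinates (rows $(1,1,1)$, a unit vector, $(0,0,0)$). This shows $x\in U_\ell$.
\item Taking $x=u$ and $\ell=2$ gives $U_1\subseteq U_2$. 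By the symmetry of $T$, $U_1=U_2=U_3=:U$.
\item The same two column tricks show that $U$ is up-closed and closed under $\wedge$. Hence $U$ is the principal up-set of $w_0=\bigwedge U$.
\item Therefore $f_\ell=\bigwedge_{s\in S}x_s$ for every $\ell$, with $S=\{s:(w_0)_s=1\}$, including the case $S=\emptyset$. No separate treatment of $f_\ell(0)$ is needed.
\end{enumerate}

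Your direct transitivity argument means you only need the case where all three functions are nonzero. With the repaired lemma, the whole proof is therefore self-contained, whereas the paper relies on the external classification.
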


The interpretation is that there is an equivalence relation $\sim$ such that the aggregated outcome is always a refinement of $\sim$. Moreover, if $A = \{i,j\}$ is an equivalence class of $\sim$ then $f_{\{i,j\}}$ is arbitrary, and if $|A| \ge 3$ then the restriction of $f$ to $A$ is an oligarchy which could depend on $A$.

The proof of \Cref{thm:equiv-polymorphisms} reduces to the case $m = 3$, which was proved by~\cite{Filmus26} as part of the classification of polymorphisms of all symmetric predicates over $\{0,1\}$ (when $m = 3$, the predicate $\Equiv_m$ consists of all vectors whose weight is not exactly~$2$).

\subsection{Related work}
\label{sec:related-work}

\Cref{thm:main} falls under the umbrella of \emph{approximate judgment aggregation}. Judgment aggregation~\cite{ListPettit,ListPuppe} arose from results such as the doctrinal paradox~\cite{KornhauserSager,ListPettit}, which in our language states that all unanimous polymorphisms of the predicate
\[
 P_\land = \{(a,b,a \land b) : a,b \in \{0,1\}\}
\]
are oligarchies, that is, functions of the form $f(x) = \bigwedge_{i \in S} x_i$.
The predicate $P_\land$ is \emph{truth-functional}, meaning that some coordinates are unconstrained, and the remaining ones are a function of the others.

There are several ways to relax the condition of polymorphicity. \Cref{thm:main} relaxes \emph{consistency}, which states that the result of applying the aggregation mechanism always satisfies the predicate.
There are several papers analyzing Arrow's theorem from this perspective~\cite{Kalai,Keller2010,Keller2012,Mossel12}.
Nehama~\cite{Nehama} proved an approximate version of the doctrinal paradox, but the bounds in his result depend on the number of voters. Filmus et al.~\cite{FLMM20} obtained an approximate doctrinal paradox in which the error parameters are independent of the number of voters. Chase et al.~\cite{CFMMS22} proved a similar result for all truth-functional predicates over $\{0,1\}$, and Alekseev and Filmus~\cite{AlekseevFilmus} generalized this to arbitrary predicates over $\{0,1\}$ and some predicates over larger alphabets.

Falik and Friedgut~\cite{FalikFriedgut} considered a different relaxation, in which the aggregation mechanism is only approximately coordinate-wise. They proved an approximate version of Arrow's theorem and of the Gibbard--Satterthwaite theorem~\cite{Gibbard,Satterthwaite} in this setting.

The Gibbard--Satterthwaite theorem assumes strategyproofness. Several papers prove approximate versions which relax this assumption~\cite{FKKN11,IKM12,MR15}.

\subsection*{Paper organization}
We derive \Cref{thm:filmus} from the results in~\cite{Filmus26} in \Cref{sec:triviality}. We prove \Cref{thm:main,thm:surj-polymorphisms} in \Cref{sec:classification}, and \Cref{thm:main-equiv,thm:equiv-polymorphisms} in \Cref{sec:equivalence}. We close the paper with a short discussion, \Cref{sec:discussion}.

\paragraph{Acknowledgments.} The author was supported by ISF grant 507/24.

\begin{aiusagebox}
\begin{center}
\textbf{AI usage statement}
\end{center}

\vspace{0.3em}

ChatGPT (GPT-5.5 Thinking, OpenAI, 30 April 2026) was used to assist in proving \Cref{lem:surj-polymorphisms-1}.
The author verified the correctness of the proof, and takes full responsibility
for the final content.
\end{aiusagebox}

\section{Triviality}
\label{sec:triviality}

In this section we derive \Cref{thm:filmus} from the main results of~\cite{Filmus26}. Our starting point is the following theorem, which results from a combination of~\cite[Theorem 1.3 and Theorem 1.4]{Filmus26} in the special case where all alphabets are the same.

\begin{definition}[$\Phi$-triviality]
\label{def:Phi-trivial}
Let $\Sigma$ be a finite alphabet, let $P \subseteq \Sigma^m$ be a predicate, and let $\Phi \subseteq \Sym(\Sigma)^m$, where $\Sym(\Sigma)$ is the group of permutations of $\Sigma$.

The predicate $P$ is \emph{$\Phi$-trivial} if for every polymorphism $f_1,\dots,f_m\colon \Sigma^p \to \Sigma$, one of the following cases holds:
\begin{enumerate}[(i)]
\item \textbf{$f_1,\dots,f_m$ are dictatorial:} There exist $k \in [p]$ and $(\phi_1,\dots,\phi_m) \in \Phi$ such that for all $i \in [m]$,
\[
 f_i(x) = \phi_i(x_k).
\]
\item \textbf{$f_1,\dots,f_m$ are a certificate}, as in \Cref{def:trivial}.
\end{enumerate}
We say that $P$ is \emph{$(p,\Phi)$-trivial} if this holds for a specific value of $p$.
\end{definition}

\begin{theorem}
\label{thm:filmus-paper}
Let $\Sigma$ be a finite alphabet, let $P \subseteq \Sigma^m$ be non-degenerate, and let $\Phi \subseteq \Sym(\Sigma)^m$.

If $P$ is $(1,\Phi)$-trivial but not $\Phi$-trivial, then one of the following holds:
\begin{enumerate}[(i)]
\item $P$ is \emph{sticky}: there exist $i \in [m]$ and $\sigma \in \Sigma$ such that if $x \in P$ then $x|_{i \to \sigma} \in P$, where $x|_{i \to \sigma}$ is obtained from $x$ by changing the $i$'th coordinate to $\sigma$. \label{itm:sticky}
\item $|\Sigma| = 2$ and there is a polymorphism $f_1,\ldots,f_m\colon \Sigma^2 \to \Sigma$ where all $f_i$ are unanimous ($f_i(\sigma,\sigma) = \sigma$ for all $\sigma \in \Sigma$) and non-dictatorial (cannot be written as a function of a single coordinate).

(When $\Sigma = \{0,1\}$, in this case $f_1,\ldots,f_m \in \{x_1 \land x_2, x_1 \lor x_2\}$.)
\label{itm:and-or}
\item $P$ has a polymorphism $f_1,\dots,f_m\colon \Sigma^2 \to \Sigma$ such that for all $x \in P$,
\[
 (f_1(x_1,\cdot),\dots,f_m(x_m,\cdot)),
 (f_1(\cdot,x_1),\dots,f_m(\cdot,x_m)) \in \Phi,
\]
where $f_i(a,\cdot)\colon \Sigma \to \Sigma$ is obtained by fixing the first coordinate to $a$, and $f_i(\cdot,a)$ is defined analogously.
\label{itm:latin-square}
\end{enumerate}
\end{theorem}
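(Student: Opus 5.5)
The plan is to obtain the statement directly from the two cited results of~\cite{Filmus26}, by specializing the multi-alphabet setting of that paper to $\Sigma_1=\cdots=\Sigma_m=\Sigma$. No new combinatorics should be needed. The work is to match definitions carefully, so that the three conclusions \ref{itm:sticky}--\ref{itm:latin-square} are exactly what those theorems yield.

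\textbf{Step 1: match the setting.} In~\cite{Filmus26} a predicate is $P\subseteq\Sigma_1\times\cdots\times\Sigma_m$. The allowed ``dictator shapes'' are a set $\Phi$ of tuples of bijections $\phi_i\colon\Sigma_i\to\Sigma_i$, or more generally bijections between the relevant alphabets. Taking all $\Sigma_i=\Sigma$ turns such a $\Phi$ into a subset of $\Sym(\Sigma)^m$. I will check that, under this specialization:
\begin{itemize}
\item $\Phi$-dictatorial and certificate polymorphisms coincide with cases (i) and (ii) of \Cref{def:Phi-trivial};
\item non-degeneracy coincides with \Cref{def:degeneracy};
\item stickiness has the same meaning in both papers.
\end{itemize}

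\textbf{Step 2: reduce to arity $2$.} I expect~\cite[Theorem 1.3]{Filmus26} to be a reduction lemma. It should say that if a non-degenerate $P$ is $(1,\Phi)$-trivial but not $\Phi$-trivial, then triviality already fails at arity $p=2$: there is a binary polymorphism $f_1,\dots,f_m$ that is neither $\Phi$-dictatorial nor a certificate. The conclusion I want from this step is exactly that such a minimal non-trivial polymorphism of arity $2$ exists.

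\textbf{Step 3: classify the binary counterexample.} I would then apply~\cite[Theorem 1.4]{Filmus26} to this binary polymorphism. I expect it to show that a binary non-trivial polymorphism of a $(1,\Phi)$-trivial predicate arises in one of three ways:
\begin{itemize}
\item from a sticky coordinate, which is case~\ref{itm:sticky};
\item from a unanimous, non-dictatorial binary polymorphism, possible only when $|\Sigma|=2$, which is case~\ref{itm:and-or};
\item from a ``Latin square'' polymorphism, in which fixing either argument to the value $x_i$ yields, coordinatewise, a tuple in $\Phi$ for every $x\in P$, which is case~\ref{itm:latin-square}.
\end{itemize}
The parenthetical remark in case~\ref{itm:and-or} is an elementary check. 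A unanimous binary function $\{0,1\}^2\to\{0,1\}$ is determined by its values on $(0,1)$ and $(1,0)$. Among the four possibilities, the two dictators $x_1$ and $x_2$ are excluded, which leaves only $x_1\land x_2$ and $x_1\lor x_2$.

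\textbf{Main obstacle.} The only delicate point is the bookkeeping in Steps~1 and~3. In the multi-sorted setting, the Latin-square condition and the notion of $1$-triviality are stated with maps between possibly different alphabets. I need to verify that restricting to a common alphabet turns the condition on $(f_1(x_1,\cdot),\dots,f_m(x_m,\cdot))$ and $(f_1(\cdot,x_1),\dots,f_m(\cdot,x_m))$ into membership in $\Phi\subseteq\Sym(\Sigma)^m$, rather than some larger set. I would check this first by unfolding the definitions in~\cite{Filmus26} for the case $\Sigma_i=\Sigma$.
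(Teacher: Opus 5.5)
Your proposal is correct and takes the same route as the paper. The paper gives no argument beyond saying that the theorem is the combination of \cite[Theorem 1.3 and Theorem 1.4]{Filmus26}, specialized to the case where all alphabets coincide, and your elementary check of the parenthetical remark in case (ii) (a unanimous non-dictatorial function on $\{0,1\}^2$ must be $x_1 \land x_2$ or $x_1 \lor x_2$) is also correct.
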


Comparing \Cref{thm:filmus-paper} and \Cref{thm:filmus}, we see that the stickiness obstruction is replaced by the weaker vulnerability obstruction, and that the last case of \Cref{thm:filmus-paper} is missing from \Cref{thm:filmus}. The proof of \Cref{thm:filmus} amounts to showing that stickiness implies vulnerability, and that ``Latin square polymorphisms'' cannot exist when $\phi_1 = \cdots = \phi_m$ for all $(\phi_1,\dots,\phi_m) \in \Phi$. We formalize this slightly more general result as a theorem, and then deduce \Cref{thm:filmus}.

\begin{theorem}
\label{thm:filmus-identical}
Let $\Sigma$ be a finite alphabet, let $P \subseteq \Sigma^m$ be non-degenerate, and let $\Phi \subseteq \{(\phi,\dots,\phi) : \phi \in \Sym(\Sigma)\}$.

If $P$ is $(1,\Phi)$-trivial but not $\Phi$-trivial, then one of the following holds:
\begin{enumerate}[(i)]
\item $P$ is vulnerable.
\item $|\Sigma| = 2$ and there is a polymorphism $f_1,\dots,f_m\colon \Sigma^2 \to \Sigma$ where all $f_i$ are unanimous and non-dictatorial.
\end{enumerate}
\end{theorem}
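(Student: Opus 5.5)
The plan is to apply \Cref{thm:filmus-paper} with the given $\Phi$. Then we show two things. First, the stickiness alternative~\eqref{itm:sticky} already forces vulnerability. Second, the Latin square alternative~\eqref{itm:latin-square} forces either a contradiction with $(1,\Phi)$-triviality or the second alternative of \Cref{thm:filmus-identical}. Alternative~\eqref{itm:and-or} of \Cref{thm:filmus-paper} is literally the second alternative of \Cref{thm:filmus-identical}, so nothing is needed there.

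\emph{Stickiness implies vulnerability.} Suppose $P$ is sticky with witnesses $i,\sigma$. Define a unary tuple with $f_i \equiv \sigma$ and $f_j = \id$ for $j \neq i$. Stickiness says exactly that this is a polymorphism. By $(1,\Phi)$-triviality it is dictatorial or a certificate. It cannot be dictatorial, since $f_i$ is constant and $|\Sigma| \ge 2$ (by non-degeneracy, a one-letter alphabet cannot depend on any coordinate). So it is a certificate for some $\rho$. The identity functions are non-constant, so $\dom\rho \subseteq \{i\}$. If $\dom\rho = \emptyset$ then $P = \Sigma^m$, contradicting condition (b) of non-degeneracy. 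Hence $\rho = \{i \mapsto \sigma\}$, and the certificate property says that every $x$ with $x_i = \sigma$ lies in $P$, i.e.\ $P$ is vulnerable.

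\emph{The Latin square case.} Let $f_1,\dots,f_m\colon\Sigma^2\to\Sigma$ be as in~\eqref{itm:latin-square}. Since $\Phi$ is diagonal, for each $y \in P$ there is a single permutation $\chi_y$ with $f_i(\cdot,y_i) = \chi_y$ for all $i$. In particular each column map $a \mapsto f_i(a,b)$ is a permutation $\chi^{(i)}_b$, because full projections give some $y\in P$ with $y_i=b$. Fix $b$ and any $x \in P$. Comparing $f_i(x_i,b)$ and $f_j(x_j,b)$ via the first-argument condition (both equal $\psi_x(b)$ for the common row permutation $\psi_x$) gives $\chi^{(i)}_b(x_i) = \chi^{(j)}_b(x_j)$. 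Thus every $x\in P$ satisfies $x_j = \tau_j(x_1)$ for fixed bijections $\tau_j$, and full projections then force
\[
P = \{(\tau_1(a),\dots,\tau_m(a)) : a \in \Sigma\}, \qquad \tau_1 = \id .
\]

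Now exploit $(1,\Phi)$-triviality on this very rigid $P$. For every $g\colon\Sigma\to\Sigma$, the tuple $f_j = \tau_j \circ g$ is a unary polymorphism. Taking $g = \id$ gives the polymorphism $(\tau_1,\dots,\tau_m)$, which is not a certificate since its entries are non-constant. It must therefore be dictatorial with a diagonal element of $\Phi$, so all $\tau_j$ are equal to $\tau_1 = \id$, and $P$ is the diagonal $\{(a,\dots,a)\}$.

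If $|\Sigma| \ge 3$, choose $g$ non-constant and non-injective. Then $f_1 = \dots = f_m = g$ is a unary polymorphism that is neither dictatorial (not a permutation) nor a certificate (no $f_j$ is constant, and $\dom\rho=\emptyset$ is excluded as before), a contradiction. If $|\Sigma| = 2$, then $f_1=\dots=f_m=x_1\land x_2$ is a unanimous, non-dictatorial binary polymorphism of the diagonal, which is the second alternative of \Cref{thm:filmus-identical}.

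I expect the main obstacle to be the Latin square step. The hard part is seeing that the diagonal form of $\Phi$ collapses $P$ to the graph of bijections. The remaining danger is the bookkeeping of edge cases: $|\Sigma|=2$, $m=1$ (excluded, since full projections give $P=\Sigma$, contradicting dependence on the coordinate), and ensuring that $\id$-based tuples are covered by $\Phi$.
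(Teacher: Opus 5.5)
Your stickiness argument is the paper's, and your derivation that the Latin-square alternative forces $P=\{(\tau_1(a),\dots,\tau_m(a)) : a\in\Sigma\}$ for bijections $\tau_j$ is correct. It is a direct form of the paper's observation that $|P|>|\Sigma|$ is incompatible with a Latin-square polymorphism. The gap is in what you do with this rigid $P$.

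The claim that $f_j=\tau_j\circ g$ is a unary polymorphism is false. On the input $(\tau_1(a),\dots,\tau_m(a))$ it outputs $(\tau_j(g(\tau_j(a))))_j$, which lies in $P$ only if $g(\tau_j(a))$ is independent of $j$. In particular, $(\tau_1,\dots,\tau_m)$ is a polymorphism only if every $\tau_j=\id$, so your deduction that $P$ is the diagonal is circular.

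It is also false. Take $\Sigma=\{0,1\}$, $m=2$, $P=\{(0,1),(1,0)\}$ and $\Phi=\{(\id,\id),(\Neg,\Neg)\}$.
\begin{itemize}
\item $P$ is non-degenerate.
\item $P$ is $(1,\Phi)$-trivial: its unary polymorphisms are exactly $(h,\Neg\circ h\circ\Neg)$, which lie in $\Phi$ or are the certificates $(0,1)$ and $(1,0)$.
\item $(x_1\oplus x_2,\lnot(x_1\oplus x_2))$ is a Latin-square polymorphism of $P$, so $P$ is not $\Phi$-trivial.
\end{itemize}
Yet $\tau_2=\Neg$. Your final step for $|\Sigma|=2$ then breaks: applying $\land$ coordinatewise to $(0,1),(1,0)$ gives $(0,0)\notin P$. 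Your case analysis must handle alternative (iii) of \Cref{thm:filmus-paper} whenever it holds, and here it does.

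The fix is to conjugate. The tuple $f_j=\tau_j\circ g\circ\tau_j^{-1}$ is a unary polymorphism, since it maps $(\tau_j(a))_j$ to $(\tau_j(g(a)))_j$.
\begin{itemize}
\item For $|\Sigma|\ge 3$, choose $g$ non-constant and non-injective. Then each $f_j$ is neither constant nor a permutation, which contradicts $(1,\Phi)$-triviality. You never need $\tau_j=\id$.
\item For $|\Sigma|=2$, take $f_j(a,b)=\tau_j\bigl(\tau_j^{-1}(a)\land\tau_j^{-1}(b)\bigr)$. This is $x_1\land x_2$ where $\tau_j=\id$ and $x_1\lor x_2$ where $\tau_j=\Neg$. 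It is unanimous, non-dictatorial, and a polymorphism of $P$.
\end{itemize}
This is exactly the content of the paper's \Cref{lem:filmus-identical}. With it in place of your diagonal argument, your proof goes through.
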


The proof uses a lemma.

\begin{lemma}
\label{lem:filmus-identical}
Let $\Sigma$ be a finite alphabet, let $P \subseteq \Sigma^m$ be non-degenerate, and let $\Phi \subseteq \{(\phi,\dots,\phi) : \phi \in \Sym(\Sigma)\}$.

If $P$ is $(1,\Phi)$-trivial and $|P| = |\Sigma|$ then $|\Sigma| = 2$ and there is a polymorphism $f_1,\dots,f_m\colon \Sigma^2 \to \Sigma$ such that all $f_i$ are unanimous and non-dictatorial.
\end{lemma}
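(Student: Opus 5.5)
The plan is to show that a non-degenerate predicate with $|P| = |\Sigma|$ is just the graph of a tuple of permutations. Then we find a unary polymorphism that violates $(1,\Phi)$-triviality whenever $|\Sigma| \ge 3$. Finally, for $|\Sigma| = 2$, we transport $x_1 \land x_2$ along these permutations.

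\emph{Step 1 (structure of $P$).} Because $P$ has full projections, each projection $P \to \Sigma$, $x \mapsto x_i$, is surjective. Since $|P| = |\Sigma|$, it is a bijection. Hence there are permutations $\sigma_1,\dots,\sigma_m$ of $\Sigma$ such that
\[
 P = \{(\sigma_1(a),\dots,\sigma_m(a)) : a \in \Sigma\}.
\]
Since $P$ depends on all coordinates, $P \neq \Sigma^m$. This rules out $|\Sigma| = 1$. It also rules out $m = 1$, since then $P = \Sigma$.

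\emph{Step 2 (ruling out $|\Sigma| \ge 3$).} For any function $g\colon \Sigma \to \Sigma$, set $f_i = \sigma_i \circ g \circ \sigma_i^{-1}$. Then $(f_1,\dots,f_m)$ maps $(\sigma_i(a))_i$ to $(\sigma_i(g(a)))_i \in P$, so it is a unary polymorphism.

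If $|\Sigma| \ge 3$, choose $g$ non-constant and non-injective. Each $f_i$ is then neither constant nor a permutation, so the tuple is not dictatorial, since every $\phi_i$ in $\Phi$ is a permutation. It is also not a certificate: conformity would force $\dom\rho = \emptyset$, because no $f_i$ is constant. The empty $\rho$ is a certificate only if $P = \Sigma^m$, which Step 1 excludes.

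This contradicts $(1,\Phi)$-triviality, so $|\Sigma| = 2$. Note that this step does not even use the diagonal form of $\Phi$.

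\emph{Step 3 (the binary polymorphism).} Identify $\Sigma = \{0,1\}$ and define
\[
 f_i(a,b) = \sigma_i\bigl(\sigma_i^{-1}(a) \land \sigma_i^{-1}(b)\bigr).
\]
Take two inputs $(\sigma_i(a))_i$ and $(\sigma_i(b))_i$ from $P$. Coordinatewise, the output is $(\sigma_i(a \land b))_i \in P$, so this is a polymorphism.

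Each $f_i$ is unanimous because $\land$ is. Each $f_i$ is the conjugate of $\land$ by a permutation of $\{0,1\}$, so it equals either $x_1 \land x_2$ or $x_1 \lor x_2$, and neither is a function of a single coordinate.

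I do not expect a serious obstacle. The only point needing care is checking in Step 2 that the non-injective unary polymorphism is not a certificate, which uses $P \neq \Sigma^m$ from non-degeneracy.
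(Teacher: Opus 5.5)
Your proof is correct and follows essentially the same route as the paper. The paper's unary polymorphism for $|\Sigma|\ge 3$ is exactly your conjugate $\sigma_i\circ g\circ\sigma_i^{-1}$, with $g$ fixing one letter and sending every other letter to a second fixed letter, and its binary polymorphism ($x_1\land x_2$ where the coordinate permutation is $\id$, $x_1\lor x_2$ where it is $\Neg$) is your conjugated $\land$. Your explicit check that the unary tuple is not a certificate (an empty $\rho$ would force $P=\Sigma^m$) fills in a step the paper leaves implicit.
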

\begin{proof}
By non-degeneracy, there are permutations $\pi_1,\dots,\pi_m \in \Sym(\Sigma)$ such that
\[
 P = \{(\pi_1(\sigma),\dots,\pi_m(\sigma)) : \sigma \in \Sigma\}.
\]

By non-degeneracy, $|\Sigma| \ge 2$. Suppose first that $|\Sigma| = 2$, without loss of generality $\Sigma = \{0,1\}$. Then $\pi_1,\dots,\pi_m \in \{\id,\Neg\}$, where $\id(x) = x$ and $\Neg(x) = \lnot x$. Define $f_1,\dots,f_m\colon \{0,1\}^2 \to \{0,1\}$ as follows:
\[
 f_i(x) =
 \begin{cases}
 x_1 \land x_2 & \text{if } \pi_i = \id, \\
 x_1 \lor x_2 & \text{if } \pi_i = \Neg.
 \end{cases}
\]

We claim that $f_1,\dots,f_m$ constitute a polymorphism of $P$. To check this, it suffices to consider $x^{(1)} = (\pi_1(0), \dots, \pi_m(0))$ and $x^{(2)} = (\pi_1(1), \dots, \pi_m(1))$. Then
\begin{itemize}
\item If $\pi_i = \id$ then $f_i(x^{(1)}_i,x^{(2)}_i) = 0 \land 1 = 0 = x^{(1)}_i$.
\item If $\pi_i = \Neg$ then $f_i(x^{(1)}_i,x^{(2)}_i) = 1 \lor 0 = 1 = x^{(1)}_i$.
\end{itemize}
Thus $(f_1(x^{(1)}_1,x^{(2)}_1), \dots, f_m(x^{(1)}_m,x^{(2)}_m)) = x^{(1)}$, showing that $f_1,\dots,f_m$ are indeed a polymorphism.

Suppose next that $|\Sigma| \ge 3$. Let $\sigma,\tau \in \Sigma$ be two different letters. Define $f_1,\dots,f_m\colon \Sigma^1 \to \Sigma$ as follows:
\[
 f_i(a) =
 \begin{cases}
     \pi_i(\sigma) & \text{if } a = \pi_i(\sigma), \\
     \pi_i(\tau) & \text{otherwise}.
 \end{cases}
\]
For every $b \in \Sigma$,
\[
 (f_1(\pi_1(b)), \dots, f_m(\pi_m(b))) =
 \begin{cases}
     (\pi_1(\sigma),\dots,\pi_m(\sigma)) & \text{if } b = \sigma, \\
     (\pi_1(\tau),\dots,\pi_m(\tau)) & \text{otherwise}.
 \end{cases}
\]
Therefore $f_1,\dots,f_m$ constitute a polymorphism of $P$. Since $|\Sigma| \ge 3$, each $f_i$ is neither constant nor a permutation, contradicting the $(1,\Phi)$-triviality of $P$.
\end{proof}

We can now complete the proof of \Cref{thm:filmus-identical}.
\begin{proof}[Proof of \Cref{thm:filmus-identical}]
In view of \Cref{thm:filmus-paper}, it suffices to show that stickiness implies vulnerability, and to rule out \Cref{itm:latin-square} of that \namecref{thm:filmus-paper}.

Suppose that $P$ is sticky: there exist $i \in [m]$ and $\sigma \in \Sigma$ such that $x \in P$ implies $x|_{i \to \sigma} \in P$. Consider the functions $f_1,\dots,f_m\colon \Sigma^1 \to \Sigma$ given by
\[
 f_j(\tau) =
 \begin{cases}
     \sigma & \text{if } j = i, \\
     \tau & \text{if } j \neq i.
 \end{cases}
\]
By assumption, $f_1,\dots,f_m$ constitute a polymorphism of $P$. Since $P$ is $(1,\Phi)$-trivial and $f_i$ is constant, $f_1,\dots,f_m$ must be a certificate. Since $f_i$ is the only constant function among $f_1,\dots,f_m$, it follows that the certificate only involves the $i$'th coordinate. Therefore any $x \in \Sigma^m$ satisfying $x_i = \sigma$ belongs to $P$, and so $P$ is vulnerable.

Next, we rule out \Cref{itm:latin-square}.
In view of \Cref{lem:filmus-identical}, we can assume that $|P| > |\Sigma|$ (otherwise \Cref{itm:and-or} holds), and so we can find $\sigma \in \Sigma$ such that there are at least two $x \in P$ with $x_1 = \sigma$. This implies that we can find $x,y \in P$ and an index $i > 1$ such that $x_1 = y_1$ and $x_i \neq y_i$.

Suppose that $f_1,\dots,f_m$ constitute a polymorphism of $P$ conforming to \Cref{itm:latin-square}. Since
\[
 (f_1(x_1,\cdot),\dots,f_m(x_m,\cdot)),
 (f_1(y_1,\cdot),\dots,f_m(y_m,\cdot)) \in \Phi
\]
we have, in particular,
\[
 f_i(x_i,\cdot) = f_1(x_1,\cdot) = f_1(y_1,\cdot) = f_i(y_i,\cdot).
\]

On the other hand,
\[
 (f_1(\cdot,x_1),\dots,f_m(\cdot,x_m)) \in \Phi,
\]
which implies that $f_i(\cdot,x_i)$ is a permutation. This contradicts $f_i(x_i,\cdot) = f_i(y_i,\cdot)$, which implies that $f_i(x_i,x_i) = f_i(y_i,x_i)$.
\end{proof}

\Cref{thm:filmus} follows immediately from \Cref{thm:filmus-identical} by taking
\[
 \Phi = \{(\phi,\dots,\phi) : \phi \in \Sym(\Sigma)\}.
\]

\section{Classification aggregation}
\label{sec:classification}

\subsection{Exact polymorphisms}
\label{sec:classification-exact}

In this section we prove \Cref{thm:surj-polymorphisms}. The proof rests on the following \namecref{lem:surj-polymorphisms-1}:

\begin{lemma} \label{lem:surj-polymorphisms-1}
\Cref{thm:surj-polymorphisms} holds when $p = 1$ and $n \ge 3$.
\end{lemma}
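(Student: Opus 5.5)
The approach is to reduce the $p=1$ case to a Hall-type matching condition, and then to show that unless the constant functions already form a certificate, every $f_i$ is the same permutation. Write $f_1,\dots,f_m\colon [n]\to[n]$ for the polymorphism. Let $T\subseteq[n]$ be the set of values $j$ for which some $f_i$ is the constant $j$ function. If $T=[n]$ we are in case~(\ref{itm:surj-constant}), so we assume $T\neq[n]$ and aim for case~(\ref{itm:surj-dictator}).

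\textbf{Step 1 (reformulation via Hall).} Fix $j^*\notin T$ and set $A_i=f_i^{-1}(j^*)$ and $B_i=[n]\setminus A_i$. Since no $f_i$ is constantly $j^*$, every $B_i$ is nonempty. The output misses $j^*$ exactly when $x_i\in B_i$ for all $i$. Such a surjective $x$ exists iff the bipartite graph joining coordinate $i$ to the letters in $B_i$ has a matching saturating $[n]$: the unmatched coordinates can be filled arbitrarily because each $B_i\neq\emptyset$. The polymorphism property therefore forces a Hall violation. That is, there is a nonempty $V=V(j^*)\subseteq[n]$ such that at least $m-|V|+1$ coordinates $i$ satisfy $f_i(V)=\{j^*\}$.

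\textbf{Step 2 (permutations must coincide).} Suppose all $f_i$ are permutations and $f_1\neq f_2$, say. Pick $a$ with $b:=f_2^{-1}f_1(a)\neq a$, and set $x_1=a$, $x_2=b$, so both outputs equal $c=f_1(a)$. Choose $x_3,\dots,x_n$ to cover the other $n-2$ letters; here $n\ge3$ guarantees these coordinates exist. Choose each $x_j$ with $j>n$ so that $f_j(x_j)=c$. Then $x$ is surjective but the output takes at most $n-1$ values, a contradiction. The same argument, applied to any two coordinates, shows all the permutations are equal.

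\textbf{Step 3 (no non-permutations when $T\neq[n]$).} It remains to show that if $T\neq[n]$, every $f_i$ is a permutation. I would first try to apply Step~1 to every missing value $j^*\in[n]\setminus T$ at once. For each such value we get at least $m-|V(j^*)|+1$ coordinates collapsing $V(j^*)$ onto $j^*$. These coordinates are non-constant for $j^*$, so they must take other values outside $V(j^*)$. Counting the letters available to coordinates that collapse several $V(j^*)$'s should then force each $|V(j^*)|=1$ and each such $f_i$ to be a bijection.

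Alongside this I would use a direct construction. Suppose some non-constant $f_i$ has $f_i(a)=f_i(b)$ with $a\ne b$. The plan is to build a surjective $x$ using $x_i=a$ and placing $b$ elsewhere, arranged so that the output misses a value in $[n]\setminus T$. The constant coordinates have to be routed so that they only supply values in $T$.

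\textbf{Main obstacle.} Step~3 is the hard part. The difficulty is the interaction between the constant coordinates, which supply the values in $T$ for free, and the non-injective non-constant coordinates, which can compensate for one another. In particular, we must rule out mixed configurations in which a few permutations together with some collapsing maps still always cover $[n]\setminus T$. I would first settle $m=n$, where every coordinate is needed and the Hall violations are tight. The case $m>n$ then follows by noticing that the extra coordinates can always be set to duplicate outputs, as in Step~2; this yields an inequality between $|[n]\setminus T|$ and the number of non-constant coordinates that should be contradictory unless all of them are permutations.
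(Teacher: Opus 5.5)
Your Steps 1 and 2 are correct, and Step 1 gives exactly the witness you need: for $r\notin T$, a set $V(r)$ and a set $I_r$ of at least $m-|V(r)|+1$ coordinates with $f_i(V(r))=\{r\}$. But Step 3 is the entire content of the lemma, and ``counting \dots\ should then force'' is precisely where the work lies. The natural count does not work by itself. For a fixed coordinate $i$, the sets $V(r)$ with $i\in I_r$ are disjoint, and no constant coordinate lies in any $I_r$, so
$\Delta:=\sum_{r\notin T}|I_r||V(r)|\le (m-|T|)n$.
The Hall bound gives $|I_r||V(r)|\ge |V(r)|(m-|V(r)|+1)\ge m$, so $\Delta\ge (n-|T|)m$. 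Since $(n-|T|)m\le (m-|T|)n$ always holds, this yields no contradiction when $T\neq\emptyset$. It therefore cannot rule out the mixed configurations you identify as the obstacle. Moreover, when $T\neq\emptyset$ the outcome $|V(r)|=1$ you anticipate is impossible, so the count must produce a contradiction rather than that conclusion.

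The missing idea is to exploit that the coordinates in $I_r$ are non-constant. There are at least $|T|$ constant coordinates, so $|I_r|\le m-|T|$, and combined with $|I_r|\ge m-|V(r)|+1$ this gives $|V(r)|\ge |T|+1$. You also have $|V(r)|\le n-1$ from your remark that $V(r)\neq[n]$. Strict concavity of $x(m-x+1)$ then yields $\Delta\ge (n-|T|)\min\{(|T|+1)(m-|T|),(n-1)(m-n+2)\}$. This exceeds $(m-|T|)n$ whenever $1\le |T|\le n-2$, and this is where $n\ge 3$ enters, via $2(n-1)>n$. Hence $T=\emptyset$ and the bounds $mn\ge\Delta\ge nm$ are tight. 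Tightness forces $|V(r)|=\{v_r\}$ to be a singleton and $I_r=[m]$ for every $r$, so $f_i(v_r)=r$ for all $i,r$. All $f_i$ are therefore the same permutation at once, and your Step 2 becomes unnecessary.

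Your fallback of settling $m=n$ first and then handling $m>n$ by ``setting the extra coordinates to duplicate outputs'' also fails as stated. A coordinate $j>n$ can only duplicate an existing output if its range meets that output set, which a constant coordinate generally cannot. So restricting to $n$ coordinates need not give a polymorphism of $\Perm_n$, and you give no argument for $m=n$ in any case. The counting argument above works uniformly for all $m\ge n$.
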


\Cref{thm:surj-polymorphisms} follows almost directly.

\begin{proof}[Proof of \Cref{thm:surj-polymorphisms}]
The case $n = 2$ (and so $m > 2$) follows directly from Filmus~\cite[Theorem 1.5]{Filmus26}.

The case $n \ge 3$ follows from \Cref{lem:surj-polymorphisms-1} by \Cref{thm:filmus}, once we verify that $\Surj_{m,n}$ is non-degenerate and non-vulnerable:
\begin{itemize}
\item $\Surj_{m,n}$ has full projections: this is easy to check.
\item $\Surj_{m,n}$ depends on all coordinates: it suffices to consider $x \in \Surj_{m,n}$ in which $x_i$ is the only coordinate equal to some $j$.
\item $\Surj_{m,n}$ is not vulnerable: this is because for any $j \in [n]$, the constant $j$ vector is not in $\Surj_{m,n}$. \qedhere
\end{itemize}
\end{proof}

In the rest of this section, we prove \Cref{lem:surj-polymorphisms-1}.

The case $m = n$ was proved by Falik and Friedgut~\cite{FalikFriedgut}. We provide a simple alternative proof.

\begin{proof}[Proof of \Cref{lem:surj-polymorphisms-1} when $m = n$]
Consider the $n \times n$ matrix $M$ defined by $M(i,j) = f_i(j)$. In other words, the $i$'th row of $M$ is the truth table of $f_i$.

For each permutation $x_1,\dots,x_n \in [n]^n$, we must have that $f_1(x_1),\dots,f_n(x_n)$ is also a permutation of $[n]$. This means that the ``generalized diagonal'' $M(1,x_1),\dots,M(n,x_n)$ is a permutation of $[n]$.
The permutations $i \mapsto i + t \pmod{n}$ cover each entry of $M$ exactly once, and so each of the numbers $1,\dots,n$ appears exactly $n$ times in $M$.

For any $i \in [n]$, any two entries of $M$ labeled $i$ must lie on the same row or on the same column, since otherwise we can construct a generalized diagonal passing through both of them.
Therefore either all $i$-entries are on the same row, or all $i$-entries are on the same column. Since there are exactly $n$ entries labeled $i$, this means that there is either a row whose entries are all labeled $i$, or a column whose entries are all labeled $i$.

Since every row and column intersect, it follows that either all rows are constant (and so all columns are equal to some permutation of $[n]$), or all columns are constant (and so all rows are equal to some permutation of $[n]$).

If all rows are constant then each $f_i$ is constant, and so $f_1,\dots,f_n$ are of certificate type.

If all columns are constant then $f_1,\dots,f_n$ are all equal to the same permutation, and so $f_1,\dots,f_n$ are dictatorial.
\end{proof}

The case $m > n$ is a lot more challenging, and will occupy the rest of this section.

We start with some definitions:
\begin{align*}
I &= \{ i \in [m] : f_i \text{ is constant} \}, \\
C &= \{ j \in [n] : f_i \equiv j \text{ for some } i \in I \}.
\end{align*}

Clearly,
\[
 |I| \ge |C|.
\]

We denote $c = |C|$ for short.

If $c = n$ then $f_1,\dots,f_m$ are of certificate type, so from now on we assume
\[
 c \leq n - 1.
\]

The entire proof is driven by the following \namecref{lem:hall}.

\begin{lemma}
\label{lem:hall}
For each $r \notin C$ there is a subset $X_r \subseteq [n]$ of size
\[
 c + 1 \le |X_r| \leq n - 1
\]
such that the set
\[
 I_r := \{ i \in [m] : f_i(s) = r \text{ for all } s \in X_r \}
\]
satisfies $I_r \cap I = \emptyset$ and
\[
 |I_r| \ge m - |X_r| + 1.
\]
\end{lemma}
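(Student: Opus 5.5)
The plan is to translate the polymorphism condition for the single value $r$ into a statement about the absence of a saturating matching, and then apply Hall's theorem. Throughout, $p=1$, $n \ge 3$ and $m > n$. For each $i \in [m]$, let $A_i = f_i^{-1}(r) \subseteq [n]$ and $B_i = [n] \setminus A_i$. Two observations will be used.
\begin{enumerate}[(1)]
\item Since $r \notin C$, no $f_i$ is the constant $r$ function, so every $B_i$ is nonempty.
\item For $i \in I$, the function $f_i$ is constant with a value different from $r$, so $B_i = [n]$.
\end{enumerate}

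The polymorphism property says that for every $x \in \Surj_{m,n}$ the vector $(f_1(x_1),\dots,f_m(x_m))$ contains $r$. Equivalently, there is some $i$ with $x_i \in A_i$. So there is no surjective $x$ with $x_i \in B_i$ for all $i$.

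Consider the bipartite graph between values $s \in [n]$ and coordinates $i \in [m]$, with an edge whenever $s \in B_i$. Suppose there were a matching saturating $[n]$. Assign each matched coordinate its matched value, and give every other coordinate an arbitrary element of its nonempty set $B_i$, which exists by (1). The result is a surjective $x$ with $x_i \in B_i$ for all $i$, a contradiction. Hence no matching saturates $[n]$.

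By Hall's theorem, there is a nonempty $X \subseteq [n]$ whose neighbourhood $N(X) = \{i : B_i \cap X \neq \emptyset\}$ satisfies $|N(X)| < |X|$, that is, $|N(X)| \le |X| - 1$. Set $X_r = X$. The complement of $N(X)$ is exactly $\{i : X \subseteq A_i\} = I_r$. Therefore
\[
 |I_r| = m - |N(X)| \ge m - |X| + 1 .
\]

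It remains to check the side conditions.
\begin{itemize}
\item \emph{$I_r \cap I = \emptyset$.} Since $X$ is nonempty, (2) gives $B_i \cap X = X \neq \emptyset$ for every $i \in I$. Hence $I \subseteq N(X)$, which is the same as $I_r \cap I = \emptyset$.
\item \emph{$|X| \ge c+1$.} From $I \subseteq N(X)$ we get $c \le |I| \le |N(X)| \le |X| - 1$.
\item \emph{$|X| \le n-1$.} If $X = [n]$, then $I_r$ would be the set of $i$ with $f_i \equiv r$, which is empty by (1). But the bound above gives $|I_r| \ge m - n + 1 \ge 2$, since $m > n$. This is a contradiction, so $X \neq [n]$.
\end{itemize}

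The only step needing care is the matching reformulation. It relies on every $B_i$ being nonempty, so that unmatched coordinates can be filled in, and this is exactly where the hypothesis $r \notin C$ is used. The rest is bookkeeping.
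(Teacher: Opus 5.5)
Your proof is correct and follows the paper's argument essentially step for step. The shared steps are: the same bipartite graph (edges where $f_i(s) \neq r$); the use of $r \notin C$ to fill in unmatched coordinates; Hall's theorem giving a deficient set $X_r$ with $I_r = [m] \setminus N(X_r)$; and the same checks of the side conditions. The differences are cosmetic. You get $|X_r| \ge c+1$ directly from $I \subseteq N(X_r)$ rather than through $|I_r| \le m-c$. You also invoke $m>n$ to get $|I_r| \ge 2$, whereas $m \ge n$ (giving $|I_r| \ge 1$) already suffices.
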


Before we prove the \namecref{lem:hall}, let us explain what it means. It gives a witness showing that if $x \in \Surj_{m,n}$, then necessarily $f_1(x_1),\dots,f_m(x_m)$ contains $r$. Indeed, there are at least $|X_r|$ coordinates of $x$ whose value is in $X_r$. Since $|I_r| \ge m - |X_r| + 1$, one of these coordinates belongs to $I_r$. If $x_i \in X_r$ and $i \in I_r$ then by definition, $f_i(x_i) = r$.
We show that such a witness must always exist using Hall's theorem.

\begin{proof}
Consider the following bipartite graph:
\begin{itemize}
\item The left side is $[n]$.
\item The right side is $[m]$.
\item We connect $s \in [n]$ to $i \in [m]$ if $f_i(s) \neq r$.
\end{itemize}

We claim that the graph does not have a matching saturating the left side. Indeed, suppose that $a\colon [n] \to [m]$ were such a matching. We construct $x \in \Surj_{m,n}$ as follows:
\begin{itemize}
\item If $i \in \range a$, say $i = a(s)$, then $x_i = s$. (This guarantees that $x \in \Surj_{m,n}$.)
\item If $i \notin \range a$ then we choose $x_i$ so that $f_i(x_i) \neq r$. This is possible since otherwise $r \in C$.
\end{itemize}
By construction, $f_i(x_i) \neq r$ for all $i \in [m]$, contradicting the assumption that $f_1,\dots,f_m$ constitute a polymorphism.

Since there is no matching saturating the left side, by Hall's theorem there is a subset $X_r \subseteq [n]$ such that $|N(X_r)| \leq |X_r| - 1$, where $N(X_r)$ is the neighborhood of $X_r$.

We claim that $I_r = [m] \setminus N(X_r)$. Indeed, $i \in N(X_r)$ iff $f_i(s) \neq r$ for some $s \in X_r$, and so $i \notin N(X_r)$ iff $f_i(s) = r$ for all $s \in X_r$. It follows that
\[
 |I_r| = m - |N(X_r)| \ge m - |X_r| + 1.
\]

Next, we show that $I_r \cap I = \emptyset$. Since $|N(X_r)| < |X_r|$, the set $X_r$ is non-empty, say $s \in X_r$. If $i \in I_r$ then $f_i(s) = r$. Since $r \notin C$, this implies that $i \notin I$.

Thus $|I_r| \leq m - |I| \leq m - c$, which implies that
\[
 |X_r| \ge m - |I_r| + 1 \ge c + 1.
\]

Finally, we show that $|X_r| \leq n - 1$. Indeed, otherwise we have $X_r = [n]$, and so $I_r$ must be empty (since $i \in I_r$ would imply $f_i \equiv r$). However,
\[
 |I_r| \ge m - |X_r| + 1 = m - n + 1 \geq 1. \qedhere
\]
\end{proof}

Observe that since $c \leq n - 1$, applying the \namecref{lem:hall} to any $r \notin C$ implies the stronger inequality
\[
 c \leq n - 2.
\]

The next two steps give upper and lower bounds on the quantity
\[
 \Delta := \sum_{r \notin C} |I_r| |X_r|.
\]

We will obtain the lower bound via \Cref{lem:hall}.
The upper bound will be a consequence of the elementary observation that for each $i \in [m]$, the sets $X_r$ such that $i \in I_r$ are disjoint.
We will show that the two bounds are consistent only if $c = 0$, in which case they coincide. This implies that both of them are tight. The tightness of the lower bound turns out to imply that for each $r$, the set $X_r$ is a singleton $\{x_r\}$ and so $I_r = [m]$, implying that $f_i(x_r) = r$ for all $i \in [m]$. This will complete the proof. 

We start with the upper bound.

\begin{lemma}
\label{lem:Delta-ub}
We have
\[
 \Delta \leq (m-c)n.
\]
\end{lemma}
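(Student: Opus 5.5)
We will bound $\Delta$ by double counting over pairs $(i,s)$ with $i \in [m]$ and $s \in [n]$. Write
\[
 \Delta = \sum_{r \notin C} |I_r|\,|X_r| = \sum_{r \notin C} \sum_{i \in I_r} |X_r| = \sum_{i \in [m]} \sum_{r \notin C \,:\, i \in I_r} |X_r|.
\]
Two facts will control the inner sum for a fixed $i$.

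First, the sets $X_r$ with $i \in I_r$ are pairwise disjoint. Indeed, if $i \in I_r \cap I_{r'}$ and $s \in X_r \cap X_{r'}$, then by definition of $I_r$ and $I_{r'}$ we get $f_i(s) = r$ and $f_i(s) = r'$. Hence $r = r'$.

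Second, $I_r \cap I = \emptyset$ for every $r \notin C$, by \Cref{lem:hall}. So the inner sum vanishes whenever $i \in I$.

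For $i \notin I$, disjointness gives
\[
 \sum_{r \,:\, i \in I_r} |X_r| \le \Bigl|\bigcup_{r \,:\, i \in I_r} X_r\Bigr| \le n.
\]
Summing over the at most $m - |I|$ indices $i \notin I$ gives
\[
 \Delta \le (m - |I|)\,n \le (m - c)\,n,
\]
where the last step uses $|I| \ge c$.

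The argument is routine, and no step is a real obstacle. The one point needing care is the disjointness, which follows directly from the definition of $I_r$ as the set of coordinates $i$ with $f_i \equiv r$ on all of $X_r$.
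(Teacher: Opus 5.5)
Your proof is correct and follows essentially the same route as the paper's proof. Both double count $\Delta$ over coordinates $i$, use the disjointness of the sets $X_r$ with $i \in I_r$, use $I_r \cap I = \emptyset$ to restrict the sum to $i \notin I$, and finish with $|I| \ge c$.
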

\begin{proof}
We claim that for each $i \in [m]$, the sets $X_r$ such that $i \in I_r$ are disjoint. Indeed, suppose that $i \in I_r \cap I_s$ and $t \in X_r \cap X_s$, where $r \neq s$. Then $f_i(t) = r$ (since $i \in I_r$ and $t \in X_r$) and $f_i(t) = s$ (since $i \in I_s$ and $t \in X_s$), and we reach a contradiction.

We deduce that
\[
 \Delta = \sum_{r \notin C} \sum_{i \in I_r} |X_r| = \sum_{i \notin I} \sum_{r\colon i \in I_r} |X_r| \leq |[m] \setminus I| \, n,
\]
using $I_r \cap I = \emptyset$. The \namecref{lem:Delta-ub} now follows from $|I| \geq |C| = c$.
\end{proof}

The lower bound uses discrete concavity.

\begin{lemma}
\label{lem:Delta-lb}
We have
\[
 \Delta \ge (n-c) \cdot \min\bigl((c+1)(m-c), (n-1)(m-n+2)\bigr)
\]
and
\[
 \Delta \ge (n-c)m.
\]

Furthermore, if the latter bound is tight then $c = 0$ and for all $r \in [n]$, we have $|X_r| = 1$ and $I_r = [m]$.
\end{lemma}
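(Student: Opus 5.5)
The plan is to bound each summand $|I_r||X_r|$ from below using \Cref{lem:hall}, and then sum over the $n-c$ values $r \notin C$. Fix $r \notin C$ and write $k = |X_r|$. By \Cref{lem:hall}, $c+1 \le k \le n-1$ and $|I_r| \ge m-k+1$, so
\[
 |I_r||X_r| \ge g(k) := k(m-k+1).
\]
The function $g$ is a concave quadratic in $k$. On the integer interval $[c+1,n-1]$ it is therefore minimized at one of the endpoints. The endpoint values are $g(c+1) = (c+1)(m-c)$ and $g(n-1) = (n-1)(m-n+2)$. Summing over the $n-c$ values of $r$ gives the first bound.

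For the second bound, I will show that both endpoint values are at least $m$. This uses $m \ge n$ and $c \le n-2$, the strengthened inequality noted after \Cref{lem:hall}.
\begin{itemize}
\item $(c+1)(m-c) - m = c(m-c-1) \ge 0$, since $c \le n-2 \le m-2$.
\item $(n-1)(m-n+2) - m = (n-2)(m-n+1) \ge 0$, since $n \ge 2$ and $m \ge n$.
\end{itemize}
Alternatively, one can use concavity directly. The function $g$ satisfies $g(1) = g(m) = m$, and $[c+1,n-1] \subseteq [1,m]$. Hence $g(k) \ge m$ for every $k$ in the range, with equality only when $k \in \{1,m\}$. Summing over $r \notin C$ gives $\Delta \ge (n-c)m$.

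For the tightness statement, suppose $\Delta = (n-c)m$. Then every inequality in the chain is an equality, so $|I_r||X_r| = g(|X_r|) = m$ for each $r \notin C$. Since $|X_r| \le n-1 < m$, equality $g(k)=m$ forces $|X_r| = 1$. We also have $|X_r| \ge c+1$, which now forces $c = 0$. Then $|I_r| = m/|X_r| = m$, so $I_r = [m]$. Finally, $C = \emptyset$, so these conclusions hold for all $r \in [n]$.

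The only slightly delicate point is checking the equality cases of the concavity argument. In particular, strict concavity must rule out interior equality. This holds because the second difference of $g$ is $-2 < 0$, so $g(k) > m$ strictly for $1 < k < m$.
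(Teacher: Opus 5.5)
Your proof is correct and follows the paper's own argument. You bound each summand by the strictly concave function $x(m-x+1)$ via \Cref{lem:hall}, use the endpoints $c+1,n-1$ for the first bound and $1,m$ for the second, and read off $|X_r|=1$, $c=0$ and $I_r=[m]$ from the equality case. Your extra algebraic check that both endpoint values are at least $m$ (using $c\le n-2$) is valid but not needed given the concavity argument on $[1,m]$.
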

\begin{proof}
\Cref{lem:hall} implies that for each $r \notin C$,
\[
 |I_r| |X_r| \ge \theta(|X_r|), \text{ where } \theta(x) = x(m-x+1).
\]
Moreover, $c + 1 \leq |X_r| \leq n - 1$.

The function $\theta(x)$ is strictly concave, and so $\theta(|X_r|) \geq \min(\theta(c+1), \theta(n-1))$. Together with $|[n] \setminus C| = n-c$, this gives the first inequality.
The second inequality follows similarly using $1 \leq |X_r| \leq m$.

By strict concavity, if the second inequality is tight then $|X_r| \in \{1,m\}$ for each $r \notin C$. Since $|X_r| \leq n - 1 < m$, we see that $|X_r| = 1$ for all $r \notin C$. Since $|X_r| \ge c + 1$, this implies that $c = 0$ and so $C = \emptyset$. Finally, \Cref{lem:hall} shows that $|I_r| \ge m$, and so $I_r = [m]$, for all $r \in [n]$.
\end{proof}

The next step is the most technical: we rule out $c \ge 1$ by combining \Cref{lem:Delta-ub} with the first lower bound in \Cref{lem:Delta-lb}. This requires two technical inequalities. In both cases, we will use the inequality
\[
 2(n - 1) > n,
\]
which follows from $n > 2$.

\begin{lemma}
\label{lem:c=0-aux1}
If $c \ge 1$ then
\[
 (n-c) \cdot (c+1)(m-c) > (m-c) n.
\]
\end{lemma}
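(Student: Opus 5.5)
We need $(n-c)(c+1)(m-c) > (m-c)n$ for $1 \le c \le n-2$ and $m>n\ge 3$. First we note that $m-c>0$, since $c \le n-2 < m$. This lets us divide both sides by $m-c$, so the inequality becomes
\[
 (n-c)(c+1) > n .
\]

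The left-hand side $g(c) = (n-c)(c+1)$ is a concave quadratic in $c$. On the integer interval $1 \le c \le n-2$ it therefore attains its minimum at one of the endpoints. We evaluate both:
\[
 g(1) = 2(n-1), \qquad g(n-2) = 2(n-1).
\]
Both values equal $2(n-1)$, so $g(c) \ge 2(n-1)$ throughout the range. By the inequality $2(n-1) > n$ noted just before the lemma (valid since $n > 2$), we get $g(c) > n$, which is what we needed.

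The only point requiring care is the range of $c$. The bound $c \le n-2$ comes from the observation after \Cref{lem:hall} (we are in the case $c \le n-1$, and some $r \notin C$ exists). The lower bound $c \ge 1$ is the hypothesis of the lemma. Given this range, there is no real obstacle: it is a one-line concavity argument, or equivalently the direct check $(n-c)(c+1) - n = (c-1)(n-c) + (n-c) - n + c(n-c) \cdots$. The concavity route is the cleanest, so that is the one we will present.
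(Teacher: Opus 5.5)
Your proof is correct and is essentially the paper's argument: use $m-c>0$ (from $c \le n-2$) to reduce to $(c+1)(n-c) > n$, then use concavity on $1 \le c \le n-2$ to get $(c+1)(n-c) \ge 2(n-1) > n$. The half-finished ``direct check'' at the end is not a correct identity and should be dropped; if you want a direct check, use $(n-c)(c+1) - n = c(n-1-c) > 0$.
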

\begin{proof}
Since $m-c \ge m-n+2 > 0$, it suffices to show that $(c+1)(n-c) > n$.

The left-hand side is concave in $c$, and so $1 \leq c \leq n-2$ implies
\[
 (c+1)(n-c) \ge 2(n-1) > n. \qedhere
\]
\end{proof}

\begin{lemma}
\label{lem:c=0-aux2}
If $c \ge 1$ then
\[
 (n-c) \cdot (n-1)(m-n+2) > (m-c) n.
\]
\end{lemma}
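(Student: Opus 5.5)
The plan is to reduce the inequality to an elementary statement about a linear function of $m$, using only the standing constraints $m > n \ge 3$ and $1 \le c \le n-2$.

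Rearrange first. Put $k = m - n \ge 1$, so $m = n + k$. The claim becomes
\[
 (n-c)(n-1)(k+2) > (n+k-c)\,n .
\]
Both sides are affine in $k$, so it suffices to compare the coefficients of $k$ and check the inequality at the smallest value $k=1$.

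For the coefficient of $k$, we need $(n-c)(n-1) \ge n$. Since $n - c \ge 2$, this follows from $2(n-1) > n$, which holds because $n > 2$.

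At $k = 1$, the inequality reads $3(n-c)(n-1) > (n+1-c)\,n$. Write $d = n - c$, so that $2 \le d \le n-1$. The inequality becomes $3d(n-1) > (d+1)n$, that is,
\[
 d(2n-3) > n .
\]
With $d \ge 2$, the left-hand side is at least $4n - 6$. Moreover $4n - 6 > n$ is equivalent to $n > 2$, which holds.

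Combining the two steps: the left side minus the right side is positive at $k=1$ and nondecreasing in $k$, so it is positive for all $k \ge 1$. This proves the lemma.

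The only point that needs care is using the bound $c \le n-2$, i.e.\ $n - c \ge 2$, which was established just after \Cref{lem:hall}. Everything else is routine algebra. Alternatively, one could mirror \Cref{lem:c=0-aux1} and argue by concavity or linearity in $c$ over the range $1 \le c \le n-2$. The $k$-linearity route above seems the cleanest.
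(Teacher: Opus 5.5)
Your proof is correct. It differs from the paper's only in which variable you use the affine dependence in.

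The paper treats the difference of the two sides as affine in $c$. The coefficient of $c$ is $n-(n-1)(m-n+2)$, which is negative, so the worst case is $c=n-2$. There the common factor $m-n+2$ cancels and the claim reduces to $2(n-1)>n$. This covers all $m\ge n$ at once, with no base-case check in $m$.

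You instead treat the difference as affine in $k=m-n$. You use $n-c\ge 2$ to get a positive slope, and then verify the base case $k=1$ as a one-variable inequality in $d=n-c$. Both arguments rest on the same two facts, $n-c\ge 2$ and $2(n-1)>n$.

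Your base point $k=1$ uses the standing assumption $m>n$, which is indeed in force in this part of the paper. Had you started at $k=0$, the base case would read $2(n-c)(n-1)>(n-c)n$. After cancelling $n-c>0$ this is exactly the paper's endpoint inequality $2(n-1)>n$, so the two routes would meet at the same final step, and your $d(2n-3)>n$ computation would be unnecessary.

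Also, neither argument actually uses $c\ge 1$, only $c\le n-2$; your check at $d=n$ goes through as well. The hypothesis $c\ge 1$ is genuinely needed only for the companion \Cref{lem:c=0-aux1}.
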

\begin{proof}
If we subtract the right-hand side from the left-hand side then we obtain
\[
 x + c \bigl(n - (n-1)(m-n+2)\bigr),
\]
where $x$ doesn't involve $c$. Since
\[
 (n-1)(m-n+2) \ge 2(n-1) > n,
\]
we see that the validity of the \namecref{lem:c=0-aux2} for all $c \leq n-2$ would follow from its validity for $c = n-2$.

When $c = n-2$, the inequality in the statement of the \namecref{lem:c=0-aux2} reads
\[
 2(n-1)(m-n+2) > (m-n+2)n,
\]
which follows from $2(n-1) > n$ since $m-n+2 > 0$.
\end{proof}

We can now show that $c = 0$.

\begin{lemma}
\label{lem:c=0}
We have $c = 0$.
\end{lemma}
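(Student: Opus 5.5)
The plan is a proof by contradiction: assume $c \ge 1$ and show that the upper bound on $\Delta$ from \Cref{lem:Delta-ub} conflicts with the first lower bound in \Cref{lem:Delta-lb}. That lower bound reads
\[
 \Delta \ge (n-c) \cdot \min\bigl((c+1)(m-c), (n-1)(m-n+2)\bigr).
\]
The minimum is attained by one of its two arguments, so we split into two cases.

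If the minimum equals $(c+1)(m-c)$, then \Cref{lem:c=0-aux1} gives $\Delta \ge (n-c)(c+1)(m-c) > (m-c)n$. If the minimum equals $(n-1)(m-n+2)$, then \Cref{lem:c=0-aux2} gives $\Delta \ge (n-c)(n-1)(m-n+2) > (m-c)n$. In both cases $\Delta > (m-c)n$, which contradicts \Cref{lem:Delta-ub}. Hence $c = 0$.

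I should check that the hypotheses of the auxiliary lemmas hold throughout. Both were proved under the standing assumptions $n \ge 3$ and $1 \le c \le n-2$. The upper bound $c \le n-2$ was derived after \Cref{lem:hall} from $c \le n-1$, and that derivation needs $C^c$ to be nonempty, which holds since $c \le n-1$. We also need $m-c > 0$ and $m-n+2>0$, which follow from $m > n > c$.

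One more point: the bound in \Cref{lem:Delta-lb} relies on there being some $r \notin C$, so that the sets $X_r$ exist. This again holds since $c \le n-1$. There is no real obstacle here: all the work was done in the preceding lemmas, and this step is just the combination of the two cases.
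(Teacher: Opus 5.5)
Your proof is correct and is essentially the paper's argument. You assume $c \ge 1$, sandwich $\Delta$ between \Cref{lem:Delta-ub} and the first bound of \Cref{lem:Delta-lb}, and use \Cref{lem:c=0-aux1} or \Cref{lem:c=0-aux2}, according to which term attains the minimum, to get $\Delta > (m-c)n$, a contradiction. Your check of the standing hypotheses ($n \ge 3$, $1 \le c \le n-2$, $m-c>0$, $m-n+2>0$) is accurate and slightly more explicit than the paper's.
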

\begin{proof}
Suppose, for the sake of contradiction, that $c \ge 1$. Combining \Cref{lem:Delta-ub} and the first inequality in \Cref{lem:Delta-lb} gives
\[
 (m-c)n \ge \Delta \ge (n-c) \cdot \min\bigl((c+1)(m-c), (n-1)(m-n+2)\bigr).
\]
This is ruled out by \Cref{lem:c=0-aux1,lem:c=0-aux2}.
\end{proof}

We can now complete the proof of \Cref{lem:surj-polymorphisms-1}. \Cref{lem:Delta-ub} and the second inequality in \Cref{lem:Delta-lb} give
\[
 mn \ge \Delta \ge nm,
\]
and so $\Delta = nm$. \Cref{lem:Delta-lb} thus implies that $|X_r| = 1$ and $I_r = [m]$ for all $r \in [n]$, say $X_r = \{x_r\}$.

By definition of $I_r$, this shows that $f_i(x_r) = r$ for all $i \in [m]$. Clearly $x_r \neq x_s$ for $r \neq s$, and so this completely defines $f_i$ for all $i \in [m]$. We see that all $f_i$ are equal, and the common value is the permutation which is the inverse of $r \mapsto x_r$. Thus $f_1,\dots,f_m$ are dictatorial.

\subsection{Approximate polymorphisms (\texorpdfstring{$m > n$}{m > n})}
\label{sec:classification-approximate-flexible}

In this section we deduce \Cref{thm:main} for $m > n$ by combining \Cref{thm:surj-polymorphisms} with \Cref{thm:alekseev-filmus}.

\begin{proof}[Proof of \Cref{thm:main} when $m > n$]
We claim that since $m > n$, the predicate $\Surj_{m,n}$ is flexible, and so we can apply \Cref{thm:alekseev-filmus}. Indeed, for every $i \in [m]$ we can find $x \in \Surj_{m,n}$ such that $\{ x_j : j \neq i \} = [n]$, so that $x \in \Surj_{m,n}$ regardless of $x_i$.

Let $\epsilon_0 > 0$ be the parameter given by \Cref{thm:alekseev-filmus} for $\Surj_{m,n}$ and $\mu$.
Given $\epsilon > 0$ such that $\epsilon < \epsilon_0$, we let $\delta > 0$ be the parameter given by \Cref{thm:alekseev-filmus}. Since $\mu$ is invariant under applying permutations of $[n]$, the distributions $\mu|_1,\dots,\mu|_m$ are uniform.

Suppose that $f_1,\dots,f_m\colon [n]^p \to [n]$ are $\delta$-surjective with respect to $\mu$, in other words are a $\delta$-approximate polymorphism of $\Surj_{m,n}$ with respect to $\mu$, and moreover are $\epsilon$-far from constant.

\Cref{thm:alekseev-filmus} implies that there is a polymorphism $g_1,\dots,g_m\colon [n]^p \to [n]$ of $\Surj_{m,n}$ such that for all $i$,
\[
 \Pr[g_i = f_i] \ge 1-\epsilon.
\]

By \Cref{thm:surj-polymorphisms}, $g_1,\dots,g_m$ are either dictatorial or a certificate. If they are dictatorial, we are done, so it remains to show that they cannot be a certificate.

Indeed, if $g_1,\dots,g_m$ were a certificate then for some $i \in [m]$, the function $g_i$ would be the constant~$1$ function. However, in that case $f_i$ would be $\epsilon$-close to constant, contradicting the assumption that $f_1,\dots,f_m$ are $\epsilon$-far from constant.
\end{proof}

\subsection{Approximate polymorphisms (\texorpdfstring{$m = n$}{m = n})}
\label{sec:classification-approximate-perm}

In this section we complete the proof of \Cref{thm:main} by settling the case $m = n \ge 3$. In this case, we write $\Perm_n$ for $\Surj_{n,n}$; this is the set of all permutations of $[n]$, viewed as elements in $[n]^n$. Moreover, $\mu$ is the uniform distribution over $\Perm_n$, since this is the only distribution over $\Perm_n$ which is invariant under applying permutations of $[n]$.

The argument relies on the work of Alon et al.~\cite{ADFS}, who use Hoffman's bound to bound the size of independent sets in $K_n^p$, and to characterize independent sets of maximal and almost maximal size.

In order to make the argument more transparent, we start by sketching another proof of \Cref{thm:surj-polymorphisms}, which doesn't rely on \Cref{thm:filmus}. We then use the same approach to prove \Cref{thm:main}.

\subsubsection{Another proof of \texorpdfstring{\Cref{thm:surj-polymorphisms}}{Theorem \ref{thm:surj-polymorphisms}}}
\label{sec:Perm-poly}

For $m \le n$, let $\Inj_{m,n} \subseteq [n]^m$ (the Injectivity predicate) consist of all vectors $(x_1,\dots,x_m) \in [n]^m$ where $x_i \neq x_j$ for $i \neq j$. We can view $\Perm_n$ as $\Inj_{n,n}$. Since injectivity is a pairwise condition, in some sense it suffices to understand the case $m = 2$.

Let $n \ge 3$ and $f\colon [n]^p \to [n]$. Greenwell and Lov\'asz~\cite{GreenwellLovasz} showed that if $f,f$ constitute a polymorphism of $\Inj_{2,n}$ then $f$ is a dictator. We describe one of their proofs.

For any $r \in [n]$, we can consider the function $f_r\colon [n]^p \to \{0,1\}$ given by $f_r(x) = \ind{f(x) = r}$. This function satisfies the following condition: if $x,y \in [n]^p$ are such that $(x_i,y_i) \in \Inj_{2,n}$ for all $i \in [p]$, then $(f_r(x),f_r(y)) \neq (1,1)$. Equivalently, $f_r$ is the indicator function of an independent set in $K_n^p$. We say that $f_r$ is \emph{independent}.

Let $\mu(f_r)$ be the expectation of $f_r$ with respect to the uniform distribution. Greenwell and Lov\'asz show that if $f_r$ is independent then $\mu(f_r) \leq 1/n$, with equality iff $f_r(x) = \ind{x_k = s}$ for some $k \in [p]$ and $s \in [n]$. This easily implies that if $f,f$ constitute a polymorphism of $\Inj_{2,n}$ then $f$ is a dictator.

Alon et al.~\cite{ADFS} gave a different proof of the inequality $\mu(f_r) \leq 1/n$, using a technique known as Hoffman's bound~\cite{Hoffman,Friedgut,Haemers}. Their proof shows moreover that if $\mu(f_r) \ge 1/n - \epsilon$ then $f_r$ is $O(\epsilon)$-close to a function of the form $\ind{x_k = s}$, that is, $\Pr[f_r(x) \neq \ind{x_k = s}] = O(\epsilon)$. The latter bound is shown using an adaptation of the FKN theorem~\cite{FKN}, a classical result in Boolean function analysis.

In contrast to the setting studied by Greenwell and Lov\'asz and by Alon et al., in our case we have $n$ different functions $g_1,\dots,g_n$ (we use $g$ rather than $f$ to avoid a clash with the notation $f_r$). The argument of Alon et al.\ can be adapted to yield the following result.

\begin{lemma}
\label{lem:adfs-exact}
Let $n \ge 3$.

Suppose that $g_{i,r},g_{j,r}\colon [n]^p \to \{0, 1\}$ are \emph{cross-independent}: if $x,y \in [n]^p$ are such that $(x_i,y_i) \in \Inj_{2,n}$ for all $i \in [p]$, then $(g_{i,r}(x),g_{j,r}(y)) \neq (1,1)$. Then
\[
 (n - 1)^2 \mu(g_{i,r}) \mu(g_{j,r}) \leq (1 - \mu(g_{i,r})) (1 - \mu(g_{j,r})),
\]
with equality iff $g_{i,r}(x) = g_{j,r}(x) = \ind{x_k = s}$ for some $k \in [p]$ and $s \in [n]$.
\end{lemma}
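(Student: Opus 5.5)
We prove the inequality via the cross version of Hoffman's bound. Let $A$ be the normalized adjacency operator of $K_n^p$: $(Af)(x) = \Ex_y f(y)$, where $y$ is uniform among vectors with $y_k \neq x_k$ for all $k$. Then $A = T^{\otimes p}$, where $T$ is the normalized adjacency of $K_n$. $T$ has eigenvalue $1$ on constants and $-1/(n-1)$ on mean-zero functions. We expand each function in the orthogonal decomposition $f = \sum_{S \subseteq [p]} f^{=S}$, where $f^{=S}$ is the component depending exactly on the coordinates in $S$. Then $A$ acts on $f^{=S}$ by the scalar $\lambda_S = (-1/(n-1))^{|S|}$.

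Write $g = g_{i,r}$, $h = g_{j,r}$, $a = \mu(g)$, $b = \mu(h)$. Cross-independence gives $\langle g, Ah\rangle = 0$, so
\[
 0 = ab + \sum_{S \neq \emptyset} \lambda_S \langle g^{=S}, h^{=S}\rangle .
\]
Since $|\lambda_S| \le 1/(n-1)$ for $S \ne \emptyset$, Cauchy--Schwarz gives
\[
 ab \le \frac{1}{n-1}\sum_{S\neq\emptyset}\|g^{=S}\|\,\|h^{=S}\| \le \frac{1}{n-1}\sqrt{(a-a^2)(b-b^2)},
\]
using $\sum_{S\ne\emptyset}\|g^{=S}\|^2 = \|g\|^2-a^2 = a-a^2$ because $g$ is Boolean. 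Squaring and dividing by $ab$ gives $(n-1)^2ab \le (1-a)(1-b)$. If $a=0$ or $b=0$ the inequality is trivial.

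For equality, assume $a, b > 0$, since when one measure is zero equality would force the other to be $1$, and then both being indicators of a dictator fails. We trace back each inequality.

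\begin{enumerate}[(a)]
\item All weight sits on $|S|=1$. Since $n \ge 3$, we have $|\lambda_S| < 1/(n-1)$ for $|S|\ge 2$, so $g^{=S} = h^{=S} = 0$ for $|S| \ge 2$. Thus $g$ and $h$ have degree at most one.
\item The terms $\lambda_S\langle g^{=S},h^{=S}\rangle$ are all negative with Cauchy--Schwarz tight. Hence $h^{=S} = -\gamma\, g^{=S}$ for a single constant $\gamma>0$ independent of $S$, because the global Cauchy--Schwarz step is also tight.
\end{enumerate}

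Now we use the standard fact that a Boolean function of degree at most one on $[n]^p$ depends on at most one coordinate. This is the exact, non-robust FKN step. To see it, $g = a + \sum_k \ell_k(x_k)$. If two coordinates $k \neq k'$ had nonzero $\ell_k$ and $\ell_{k'}$, varying $x_k$ and $x_{k'}$ independently would make $g$ take at least three values, which a Boolean function cannot do. So $g(x) = \ind{x_k \in U}$ and $h(x) = \ind{x_{k} \in V}$ for the same $k$, which is forced by (b).

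Cross-independence means that whenever $u \ne v$, we cannot have $u \in U$ and $v \in V$. Therefore $U = V = \{s\}$ is a singleton, provided both are nonempty, which holds since $a, b > 0$. Conversely, plugging $a=b=1/n$ into $(n-1)^2/n^2 = (1-1/n)^2$ shows equality holds. This establishes the characterization.

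The main obstacle is the equality analysis: forcing both functions to have degree one, to share the same coordinate, and to be singletons. The step $|\lambda_S|<1/(n-1)$ for $|S|\ge2$ uses $n\ge3$ only through $n-1>1$, so it is routine. Getting the shared coordinate from tightness of Cauchy--Schwarz requires care.
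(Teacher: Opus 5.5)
Your proof of the inequality is the paper's. The paper obtains \Cref{lem:adfs-exact} from the $\epsilon=0$ case of \Cref{lem:adfs-approx}, whose part (a) is exactly this Hoffman-type computation (the eigenvalue formula of Alon et al., grouped by degree rather than by sets $S$) followed by Cauchy--Schwarz.

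For the equality clause you take a genuinely different route. The paper proves a stability statement under the extra hypothesis $\mu(g_{i,r}),\mu(g_{j,r})\ge 1/n-\gamma$. It splits off the degree-$1$ part, shows the weight above degree $1$ is $O(\epsilon+\gamma)$, applies the robust FKN-type result \cite[Lemma 2.4]{ADFS}, uses the measure to identify the one-coordinate function as $\ind{x_k=s}$, and matches coordinates via cross-independence. You instead use exact tightness of each step. This eliminates degree $\ge 2$ outright, so only the elementary exact FKN fact is needed, and you get the common coordinate from proportionality of the degree-$1$ components.

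The paper's route buys robustness, which \Cref{thm:filmus-Perm} needs. Yours is self-contained and handles every equality point with $a,b>0$. At $\epsilon=\gamma=0$ the paper's part (b) only reaches $a=b=1/n$, which is the only case used in \Cref{sec:Perm-poly}. Your conclusion $U=V=\{s\}$ is also the correct one. The paper's part (b) writes $s_{i,r}\neq s_{j,r}$, but your computation shows cross-independence forces $s_{i,r}=s_{j,r}$, as the statement says.

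Some repairs are needed.

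\textbf{Sign in (b).} The degree-$1$ terms $\lambda_S\langle g^{=S},h^{=S}\rangle$ are negative and $\lambda_S<0$, so the inner products are positive. Tightness therefore gives $h^{=S}=+\gamma g^{=S}$, not $-\gamma g^{=S}$; test this on $g=h=\ind{x_k=s}$. Taken literally, your sign would force $V=[n]\setminus U$. Since you only use (b) to match supports, nothing downstream breaks.

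\textbf{Order of (a) and (b).} In (a), $|\lambda_S|<1/(n-1)$ by itself only gives $\|g^{=S}\|\,\|h^{=S}\|=0$. To kill both components you need $\|h^{=S}\|=\gamma\|g^{=S}\|$ with $\gamma>0$ from the global Cauchy--Schwarz step. Having $\gamma>0$, and non-constancy of $g,h$, uses $a,b<1$, which follows from equality together with $ab>0$.

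\textbf{Degenerate case.} This case cannot be assumed away. The pair $g\equiv 0$, $h\equiv 1$ is cross-independent and attains equality, so the ``only if'' direction of the statement is false when $\{a,b\}=\{0,1\}$. State explicitly that the characterization holds under $a,b>0$, rather than presenting this as a reduction.
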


We prove a stronger result, \Cref{lem:adfs-approx}, in \Cref{sec:Perm-approx-poly}.

In the rest of this subsection, we show how to derive \Cref{thm:surj-polymorphisms} for $m = n$ using \Cref{lem:adfs-exact}. The proof of \Cref{thm:main} for $m = n$ follows the same approach.

\smallskip

Let $g_1,\dots,g_n\colon [n]^p \to [n]$ be a polymorphism of $\Perm_n$ for $n \ge 3$. For $i,r \in [n]$, we define
\[
 g_{i,r}(x) = \ind{g_i(x) = r}, \quad \mu_{i,r} = \mu(g_{i,r}).
\]

If we put the numbers $\mu_{i,r}$ in an $n \times n$ matrix, then we obtain a bistochastic matrix $M$. Indeed, by definition $\mu_{i,r} \ge 0$; for all $i \in [n]$ we have
\begin{equation*} 
 \sum_{r \in [n]} \mu_{i,r} = 1
\end{equation*}
by definition; and for all $r \in [n]$ we have
\begin{equation} \label{eq:i-sum}
 \sum_{i \in [n]} \mu_{i,r} = 1,
\end{equation}
since the left-hand side is the expected number of $r$'s in $g_1(x^{(1)}_1,\dots,x^{(p)}_1), \dots, g_n(x^{(1)}_n,\dots,x^{(p)}_n)$, where the vectors $x^{(1)},\dots,x^{(p)} \in \Perm_n$ are drawn uniformly at random.

\Cref{lem:adfs-exact} gives an additional constraint: if $i \neq j$ then
\[
 (n - 1)^2 \mu_{i,r} \mu_{j,r} \leq (1 - \mu_{i,r}) (1 - \mu_{j,r}),
\]
and so
\begin{equation} \label{eq:prod-ub}
 n (n - 2) \mu_{i,r} \mu_{j,r} \leq 1 - \mu_{i,r} - \mu_{j,r}.
\end{equation}
Using these constraints, we are able to determine all possible values of the matrix $M$, and deduce the structure $g_1,\dots,g_n$.

Fix $i$ and $r$. Summing \Cref{eq:prod-ub} over all $j \neq i$ and using \Cref{eq:i-sum}, we obtain
\[
 n(n - 2) \mu_{i,r} (1 - \mu_{i,r}) \leq (n - 1)(1 - \mu_{i,r}) - (1 - \mu_{i,r}) = (n - 2)(1 - \mu_{i,r}).
\]
Therefore
\[
 \mu_{i,r} = 1 \quad \text{or} \quad \mu_{i,r} \leq \frac{1}{n}.
\]

Now fix $r$ and vary $i$. In view of \Cref{eq:i-sum}, we see that one of the following cases holds:
\begin{enumerate}[(i)]
\item For some $i_r$ we have $\mu_{i_r,r} = 1$ and $\mu_{i,r} = 0$ for $i \neq i_r$.

In this case,
\[
 g_{i_r} = r.
\]

\item For all $i$ we have $\mu_{i,r} = 1/n$.

The equality clause of \Cref{lem:adfs-exact} implies that in this case there are $k_r \in [p]$ and $s_r \in [n]$ such that for all $i$,
\[
 g_i(x) = r \text{ iff } x_{k_r} = s_r.
\]
\end{enumerate}

If the second case happens for any $r$ then $g_1,\dots,g_n$ are all non-constant. Considering all values of $r$, we deduce that one of the following cases holds:
\begin{enumerate}[(i)]
\item For every $r$ there exists $i_r$ such that $g_{i_r} = r$. This is \Cref{itm:surj-constant} of \Cref{thm:surj-polymorphisms}.

\item For every $r$ there exist $k_r,s_r$ such that for all $i$,
\[
 g_i(x) = r \text{ iff } x_{k_r} = s_r.
\]
This results in a contradiction unless all $k_r$ are equal and all $s_r$ are different. This is \Cref{itm:surj-dictator} of \Cref{thm:surj-polymorphisms}.
\end{enumerate}

\subsubsection{Proof of \texorpdfstring{\Cref{thm:main}}{Theorem \ref{thm:main}} when \texorpdfstring{$m = n$}{m = n}}
\label{sec:Perm-approx-poly}

We actually prove a stronger result: \Cref{thm:alekseev-filmus} holds for the predicate $\Perm_n$ when $n \ge 3$ and $\mu$ is the uniform distribution.

\begin{theorem}
\label{thm:filmus-Perm}
Let $n \ge 3$. There exists $\epsilon_0 > 0$ (depending on $n$) such that the following holds.

If $f_1,\dots,f_n\colon [n]^p \to [n]$ are an $\epsilon$-approximate polymorphism of $\Perm_n$ (with respect to the uniform distribution), where $\epsilon < \epsilon_0$, then one of the following cases holds:
\begin{enumerate}[(i)]
\item There exists $k \in [p]$ and a permutation $\pi$ of $[n]$ such that for all $i \in [n]$,
\label{itm:filmus-Perm-dict}
\[
 \Pr_{x \sim [n]^p}[f_i(x) \neq \pi(x_k)] = O(\sqrt[4]{\epsilon}).
\]
\item For each $j \in [n]$ there exists $i \in [n]$ such that
\label{itm:filmus-Perm-const}
\[
 \Pr_{x \sim [n]^p}[f_i(x) \neq j] = O(\sqrt[4]{\epsilon}).
\]
\end{enumerate}
In both cases, the probability is taken over the uniform distribution.
\end{theorem}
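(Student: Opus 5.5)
We follow the template of \Cref{sec:Perm-poly}, replacing each exact statement by a robust one. The first ingredient is an approximate version of \Cref{lem:adfs-exact}, which is \Cref{lem:adfs-approx}. Suppose $g,h\colon [n]^p \to \{0,1\}$ are $\eta$-cross-independent, meaning that $\Pr[g(x)=h(y)=1] \le \eta$, where $(x,y)$ is a uniformly random pair with $(x_k,y_k) \in \Inj_{2,n}$ for all $k$. Then the lemma should give
\[
 (n-1)^2\mu(g)\mu(h) \le (1-\mu(g))(1-\mu(h)) + O(\eta).
\]
Moreover, if this inequality is tight up to $\gamma$, then $g$ and $h$ are both $O(\sqrt{\gamma+\eta})$-close to the same dictator indicator $\ind{x_k=s}$.

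We prove this lemma via Hoffman's bound. Write $T$ for the noise operator of the random walk on $K_n^p$. Its eigenvalues are $(-1/(n-1))^{|S|}$ on the level-$|S|$ part. Decompose $g = \mu(g) + g^{=1} + g^{>1}$ and similarly for $h$. Then $\langle g, Th\rangle$ equals
\[
 \mu(g)\mu(h) - \tfrac{1}{n-1}\langle g^{=1},h^{=1}\rangle + \text{(terms bounded by } \tfrac{1}{(n-1)^2}\|g^{>1}\|\|h^{>1}\|\text{)}.
\]
Apply Cauchy--Schwarz using $\|g-\mu(g)\|^2=\mu(g)(1-\mu(g))$, together with an AM--GM step. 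This yields the inequality, and near-tightness forces almost all weight onto level~$1$. At that point we invoke the $n$-ary FKN theorem used by Alon et al.: a Boolean function with $\delta$ weight above level one is $O(\delta)$-close to a dictator. Near-tightness in Cauchy--Schwarz additionally forces $g^{=1} \approx h^{=1}$, so $g$ and $h$ agree with the same dictator.

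Next, fix an $\epsilon$-approximate polymorphism $f_1,\dots,f_n$. Set $g_{i,r}=\ind{f_i=r}$ and $\mu_{i,r}=\mu(g_{i,r})$. For $i\neq j$, the pair of columns $(x^{(\cdot)}_i, x^{(\cdot)}_j)$ of a uniform sample from $\Perm_n^p$ is exactly a uniform pair of the kind above. The event $f_i=f_j=r$ violates the predicate, so $g_{i,r}, g_{j,r}$ are $\epsilon$-cross-independent. Likewise, the row sums of $(\mu_{i,r})$ are exactly $1$, and the column sums are $1\pm O(\epsilon)$, since the expected number of $r$'s differs from $1$ only on the failure event, where the count is at most $n$. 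Repeating the summation argument of \Cref{sec:Perm-poly} with error terms gives
\[
 n(n-2)\mu_{i,r}(1-\mu_{i,r}) \le (n-2)(1-\mu_{i,r}) + O(\epsilon).
\]
Hence each $\mu_{i,r}$ is either $\ge 1-O(\sqrt\epsilon)$ or $\le 1/n+O(\sqrt\epsilon)$.

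This robust dichotomy is the main obstacle. Near $\mu=1$ the factor $1-\mu$ degenerates, which is why square roots appear. With the column sums $\approx 1$, for each $r$ we get two cases. Either some $i_r$ has $\mu_{i_r,r}\ge 1-O(\sqrt\epsilon)$, so that $f_{i_r}$ is $O(\sqrt\epsilon)$-close to the constant $r$. Or all $\mu_{i,r}=1/n\pm O(\sqrt\epsilon)$, in which case each pairwise inequality is tight up to $O(\sqrt\epsilon)$. The approximate lemma then gives $g_{i,r}$ $O(\epsilon^{1/4})$-close to a common $\ind{x_{k_r}=s_r}$, and this square-rooting is the source of the exponent $\sqrt[4]{\epsilon}$. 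Choosing $\epsilon_0$ small makes the two cases mutually exclusive for each $r$, since $1/n$ is separated from $0$ and $1$.

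Finally we assemble across $r$. If some $r$ is in the second case, then every $f_i$ has $\Pr[f_i=r]\approx 1/n$, so no $f_i$ is near-constant, and hence every $r$ is in the second case. Now suppose $k_r\neq k_{r'}$, or $k_r=k_{r'}$ but $s_r=s_{r'}$. Then the events $\{x_{k_r}=s_r\}$ and $\{x_{k_{r'}}=s_{r'}\}$ overlap with constant probability ($1/n^2$ or $1/n$). This contradicts the fact that $g_{i,r}$ and $g_{i,r'}$ have disjoint supports, provided $\epsilon_0$ is small. So all the $k_r$ equal a common $k$, and $r\mapsto s_r$ is a bijection whose inverse is $\pi$, giving \Cref{itm:filmus-Perm-dict}. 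Otherwise every $r$ is in the first case, which gives \Cref{itm:filmus-Perm-const}.
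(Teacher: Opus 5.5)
Your proposal is correct and is essentially the paper's proof: the spectral formula of Alon et al.\ (Hoffman's bound) for cross-independent pairs, the nearly bistochastic matrix $(\mu_{i,r})$, the dichotomy $\mu_{i,r}\approx 1$ versus $\mu_{i,r}\approx 1/n$, their FKN-type lemma, and the assembly over $r$. The differences are only in bookkeeping, in three places:
\begin{enumerate}[(a)]
\item Your $O(\eta)$ error in the Hoffman inequality is valid and sharper than the paper's $O(\sqrt{\epsilon})$. To see it, factor $(n-1)^2a^2-b^2=((n-1)a-b)((n-1)a+b)$ with $a^2=\mu(g)\mu(h)$ and $b^2=(1-\mu(g))(1-\mu(h))$. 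As a result, your fourth root enters at the stability step, whereas the paper's enters at the dichotomy.
\item Your stability lemma tacitly needs $\mu(g),\mu(h)$ bounded away from $0$ and $1$; otherwise $g\equiv 1$, $h\equiv 0$ is a counterexample. This hypothesis does hold where you use it.
\item Your conclusion that all $g_{i,r}$ approximate the \emph{same} $\ind{x_{k_r}=s_r}$ is the correct one. The paper's ``$s_{i,r}\neq s_{j,r}$'' is a slip, as the exact dictator $f_i(x)=x_1$ shows. Your version makes the final assembly more direct than the paper's detour through $\pi_r$ and $\rho_i$.
\end{enumerate}
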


As in \Cref{sec:classification-approximate-flexible}, this implies \Cref{thm:main} with $\delta = \Theta(\epsilon^4)$.

In the remainder of this subsection, we prove \Cref{thm:filmus-Perm}.
Throughout the proof, we assume that $\epsilon$ is ``small enough'' by setting $\epsilon_0$ accordingly.

\smallskip

As in \Cref{sec:Perm-poly}, define
\[
 f_{i,r}(x) = \ind{f_i(x) = r}, \quad \mu_{i,r} = \mu(f_{i,r}).
\]
For $i \neq j$, the functions $f_{i,r},f_{j,r}$ are $\epsilon$-cross-independent.

\begin{definition}[Approximate cross-independence]
\label{def:approx-cross-independent}
The functions $f_{i,r},f_{j,r}\colon [n]^p \to \{0,1\}$ are \emph{$\epsilon$-cross-independent} if
\[
 \Pr_{x^{(1)},\dots,x^{(p)} \sim \Inj_{2,n}}[f_{i,r}(x^{(1)}_i,\dots,x^{(p)}_i) = f_{j,r}(x^{(1)}_j,\dots,x^{(p)}_j) = 1] \leq \epsilon,
\]
where $x^{(1)},\dots,x^{(p)}$ are drawn uniformly at random, and we index them using $i,j$.
\end{definition}

\begin{lemma}
\label{lem:approx-cross-independent}
If $f_1,\dots,f_n\colon [n]^p \to [n]$ are an $\epsilon$-approximate polymorphism of $\Perm_n$ then for every $i \neq j$ and $r$, the functions $f_{i,r},f_{j,r}$ are $\epsilon$-cross-independent.
\end{lemma}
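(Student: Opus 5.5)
This is a coupling argument: the joint distribution of the relevant columns under uniform $\Perm_n$ is the same as uniform $\Inj_{2,n}$. Draw $x^{(1)},\dots,x^{(p)}$ independently and uniformly from $\Perm_n$, and fix $i \neq j$ and $r$. For each $t \in [p]$ the pair $(x^{(t)}_i, x^{(t)}_j)$ is uniformly distributed over ordered pairs of distinct elements of $[n]$, that is, uniform over $\Inj_{2,n}$. This holds because a uniformly random permutation maps the two distinct positions $i,j$ to a uniformly random ordered pair of distinct values. The $p$ rows are independent, so the tuple of pairs $\bigl((x^{(t)}_i,x^{(t)}_j)\bigr)_{t\in[p]}$ has exactly the distribution of $p$ independent uniform samples from $\Inj_{2,n}$. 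These are the samples used in \Cref{def:approx-cross-independent}, where coordinate $1$ of $\Inj_{2,n}$ plays the role of $i$ and coordinate $2$ plays the role of $j$.

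It then remains to bound the bad event. Suppose $f_{i,r}(x^{(1)}_i,\dots,x^{(p)}_i) = f_{j,r}(x^{(1)}_j,\dots,x^{(p)}_j) = 1$. Then the outputs in coordinates $i$ and $j$ are both $r$. The output vector $(f_1(\cdot),\dots,f_n(\cdot))$ therefore repeats a value, so it is not a permutation and does not lie in $\Perm_n$. Hence the probability of this event under the $\Inj_{2,n}$ samples equals its probability under the $\Perm_n$ samples. That probability is at most the probability that the aggregated output lies outside $\Perm_n$, which is at most $\epsilon$ by the definition of an $\epsilon$-approximate polymorphism.

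None of these steps is difficult. The only point that needs care is checking that the projected marginal, taken jointly over all $p$ voters, is exactly the product of uniform $\Inj_{2,n}$ distributions. This follows from the independence of the rows together with the uniformity of $\mu$ on $\Perm_n$.
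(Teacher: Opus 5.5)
Your proof is correct and follows the paper's own argument. You project the uniform $\Perm_n$ samples onto coordinates $i,j$ to get uniform $\Inj_{2,n}$ samples, and then note that $f_{i,r} = f_{j,r} = 1$ forces the value $r$ to repeat in the output, which happens with probability at most $\epsilon$ (you also spell out the per-voter uniformity and independence more explicitly than the paper does).
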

\begin{proof}
Suppose that $x^{(1)},\dots,x^{(p)} \sim \Perm_n$. If we remove all indices except for $i,j$, then we get samples from $\Inj_{2,n}$ instead.

If $f_{i,r}(x^{(1)}_i,\dots,x^{(p)}_i) = f_{j,r}(x^{(1)}_j,\dots,x^{(p)}_j) = 1$ then $f_i(x^{(1)}_i,\dots,x^{(p)}_i) = f_j(x^{(1)}_j,\dots,x^{(p)}_j) = r$, and so this happens with probability at most $\epsilon$.
\end{proof}

Similarly, we have the following analog of \Cref{eq:i-sum}.

\begin{lemma} \label{lem:i-sum}
For every $r \in [n]$ we have
\[
 \sum_{i \in [n]} \mu_{i,r} = 1 \pm O(\epsilon).
\]
\end{lemma}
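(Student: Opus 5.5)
The plan is to mirror the exact argument for \Cref{eq:i-sum}. Interpret $\sum_{i} \mu_{i,r}$ as an expectation, and then compare the uniform distribution on $[n]^p$ with the distribution induced by drawing $x^{(1)},\dots,x^{(p)} \sim \Perm_n$.

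First, since the uniform distribution on $\Perm_n$ has uniform marginals, each column $(x^{(1)}_i,\dots,x^{(p)}_i)$ is uniformly distributed on $[n]^p$. Hence
\[
 \mu_{i,r} = \Pr[f_i(x^{(1)}_i,\dots,x^{(p)}_i) = r].
\]
By linearity of expectation, $\sum_{i \in [n]} \mu_{i,r} = \Ex[N_r]$, where $N_r$ counts the indices $i$ for which the $i$'th output coordinate $y_i := f_i(x^{(1)}_i,\dots,x^{(p)}_i)$ equals $r$.

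Next, let $E$ be the event that $(y_1,\dots,y_n) \in \Perm_n$. By assumption, $\Pr[E] \ge 1-\epsilon$. On $E$, every value appears exactly once, so $N_r = 1$. Off $E$, we always have $0 \le N_r \le n$. Therefore
\[
 \Ex[N_r] = \Pr[E] + \Ex[N_r \ind{\lnot E}],
\]
which lies in the interval $[1-\epsilon,\, 1 + n\epsilon]$. This gives $\sum_i \mu_{i,r} = 1 \pm O(\epsilon)$, with the constant in the $O(\cdot)$ depending only on $n$.

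There is no real obstacle here. The only point to check is that the marginal distribution of each column is exactly uniform on $[n]^p$. This holds because the $p$ rows are independent and each is a uniformly random permutation, so each row's $i$'th entry is uniform on $[n]$.
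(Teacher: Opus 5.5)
Your proof is correct and is the paper's argument. The paper likewise writes $\sum_i \mu_{i,r}$ as the expectation of the number $N$ of $r$'s in the output, and uses that $N \in \{0,\dots,n\}$ always and $N=1$ with probability at least $1-\epsilon$. You additionally spell out why each column is uniform on $[n]^p$ and give the explicit interval $[1-\epsilon,\,1+n\epsilon]$, both of which the paper leaves implicit.
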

\begin{proof}
Define
\[
 N(x^{(1)},\dots,x^{(p)}) = \sum_{i \in [n]} f_{i,r}(x^{(1)}_i,\dots,x^{(p)}_i),
\]
which is the number of $r$'s in the ``output''. Observe that
\[
 \Ex_{x^{(1)},\dots,x^{(p)} \sim \Perm_n}[N(x^{(1)},\dots,x^{(p)})] = \sum_{i \in [n]} \mu_{i,r}.
\]

By definition, $N$ always lies in the range $\{0,\dots,n\}$. Since $f_1,\dots,f_n$ are an $\epsilon$-approximate polymorphism of $\Perm_n$, we have $N = 1$ with probability at least $1-\epsilon$. The \namecref{lem:i-sum} immediately follows.
\end{proof}

The main technical step of the proof is a generalization of \Cref{lem:adfs-exact}. The proof, which is essentially a version of Hoffman's bound, uses Fourier analysis on the group $\mathbb{Z}_n^r$.

\begin{lemma}
\label{lem:adfs-approx}
Let $n \ge 3$. Suppose that $f_{i,r},f_{j,r}$ are $\epsilon$-cross-independent. Then
\begin{enumerate}[(a)]
\item We have
\label{itm:adfs-ub}
\[
 n(n-2) \mu_{i,r} \mu_{j,r} \leq 1 - \mu_{i,r} - \mu_{j,r} + O(\sqrt{\epsilon}).
\]
\item If $\mu_{i,r},\mu_{j,r} \ge 1/n - \gamma$, where $0 \leq \gamma \leq 1/n$, then there exist $k_{i,r} = k_{j,r}$ and $s_{i,r} \neq s_{j,r}$ such that
\[
 \Pr_{x \sim [n]^p}[f_{i,r}(x) \neq \ind{x_{k_{i,r}} = s_{i,r}}] = O(\epsilon + \gamma),
\]
and similarly for $j$.
\label{itm:adfs-stab}
\end{enumerate}
\end{lemma}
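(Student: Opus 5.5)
The plan is to run Hoffman's bound as a two-function (cross) version, via Fourier analysis on $\mathbb{Z}_n^p$. Write $f = f_{i,r}$, $g = f_{j,r}$, $\mu_f = \mu_{i,r}$, $\mu_g = \mu_{j,r}$.

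\textbf{The operator $T$ and its spectrum.} Sampling $x^{(1)},\dots,x^{(p)} \sim \Inj_{2,n}$ and keeping coordinates $i,j$ is the same as drawing $x \in [n]^p$ uniformly and then drawing each $y_k$ uniformly from $[n]\setminus\{x_k\}$, independently over $k$. This defines a Markov operator $(Tg)(x) = \Ex[g(y)\mid x]$, and $\epsilon$-cross-independence reads $\langle f, Tg\rangle \le \epsilon$. The operator $T$ is the tensor power of the normalized adjacency operator of $K_n$. Hence it is diagonalized by the characters $\chi_S$ of $\mathbb{Z}_n^p$, with eigenvalue $(-1/(n-1))^{|\mathrm{supp}\, S|}$. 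In particular every nontrivial character has eigenvalue of absolute value at most $1/(n-1)$, and equality holds exactly on level $1$.

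\textbf{Part (a).} Expanding $\langle f, Tg\rangle$ in Fourier, and applying Cauchy--Schwarz with Parseval ($\|f-\mu_f\|^2 = \mu_f(1-\mu_f)$), gives
\[
 \epsilon \ge \langle f,Tg\rangle \ge \mu_f\mu_g - \tfrac{1}{n-1}\sqrt{\mu_f(1-\mu_f)\mu_g(1-\mu_g)}.
\]
I split into two cases.
\begin{itemize}
\item If $\mu_f\mu_g \ge 2\epsilon$, I square and divide by $\mu_f\mu_g$. This gives $(n-1)^2\mu_f\mu_g \le (1-\mu_f)(1-\mu_g) + O(\epsilon)$. Expanding, $(n-1)^2\mu_f\mu_g - \mu_f\mu_g = n(n-2)\mu_f\mu_g$, so this rearranges to (a) with error $O(\epsilon)$.
\item If $\mu_f\mu_g < 2\epsilon$, one of the two means is $O(\sqrt\epsilon)$. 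Then the left-hand side of (a) is $O(\epsilon)$ and the right-hand side is at least $-O(\sqrt\epsilon)$. This case is where the $\sqrt\epsilon$ in the statement comes from.
\end{itemize}

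\textbf{Part (b): Fourier concentration.} For part (b) I track the slack in the chain above. When $\mu_f,\mu_g \ge 1/n-\gamma$, the bound of part (a), together with the fact that both means are at most roughly $1/n$ (from (a) itself), forces the chain to be tight up to $O(\epsilon+\gamma)$. Tightness has two consequences.
\begin{itemize}
\item The Fourier weight of $f$ and of $g$ on levels $\ge 2$ is $O(\epsilon+\gamma)$. Those eigenvalues have absolute value at most $1/(n-1)^2$, strictly less than $1/(n-1)$, so any such weight costs a constant fraction of itself in the chain.
\item The level-$1$ parts of $f$ and $g$ are nearly negatively proportional: $g^{=1} \approx -f^{=1}$, up to $O(\epsilon+\gamma)$ in $L_2^2$.
\end{itemize}

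\textbf{Part (b): rounding and matching the two dictators.} Next I invoke an FKN-type theorem for $\mathbb{Z}_n^p$, namely the adaptation used by Alon et al.~\cite{ADFS}. It says that a Boolean function with $O(\eta)$ weight above level $1$ is $O(\eta)$-close to a Boolean function depending on a single coordinate $k$. Since the mean of $f$ is about $1/n$, that function must be $\ind{x_k \in A}$ with $|A| = 1$, i.e.\ $\ind{x_k = s}$. This yields $k_{i,r},s_{i,r}$, and similarly $k_{j,r},s_{j,r}$. The cross-independence condition then pins down their relationship.
\begin{itemize}
\item If $k_{i,r} \neq k_{j,r}$, the two indicator events are independent, so both equal $1$ with probability about $1/n^2$. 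This contradicts cross-independence for $\epsilon,\gamma$ small.
\item If the coordinates agree but $s_{i,r} = s_{j,r}$, note that in the coupled pair $(x,y)$ the $k$'th coordinates satisfy $x_k \ne y_k$. So the event ``both equal $1$'' instead has probability about $\Pr[x_k = s, y_k = s']$ for a pair $s'\neq s$, i.e.\ positive probability about $1/(n(n-1))$. Hence I have to reconcile with the convention in the statement. I will check how the indexing of $x^{(\cdot)}_i$ versus $x^{(\cdot)}_j$ interacts with the ``$s_{i,r} \ne s_{j,r}$'' clause, and derive whichever relation cross-independence actually forces.
\end{itemize}
I expect this last check to be short, though I have not pinned down the sign of the relation yet.

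\textbf{Main obstacle.} The main difficulty is the stability step: converting the near-tightness of Hoffman's bound into $O(\epsilon+\gamma)$-closeness to a dictator with linear (not square-root) dependence. This requires an FKN theorem over $\mathbb{Z}_n^p$, with the Boolean and mean constraints used to exclude $|A|\ge 2$ and mixtures over several coordinates. I would first try to cite it directly from~\cite{ADFS} after verifying that our level-$1$ concentration hypothesis matches theirs.
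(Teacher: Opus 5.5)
Your route matches the paper's:
\begin{itemize}
\item the eigenvalue formula of~\cite{ADFS} (Claim~4.3) for the $\Inj_{2,n}$ operator;
\item Cauchy--Schwarz for~(a);
\item near-tightness of that chain, forcing $O(\epsilon+\gamma)$ weight above level~$1$;
\item the FKN-type Lemma~2.4 of~\cite{ADFS}, with the mean ruling out $|A|\ge 2$;
\item cross-independence to match the two dictators.
\end{itemize}

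Your part~(a) is correct, and your case split is sharper than the paper's. When $\mu_f\mu_g\ge 2\epsilon$ you get error $O(\epsilon)$; the $\sqrt\epsilon$ enters only when one mean is $O(\sqrt\epsilon)$. You should explicitly use this sharp form in~(b). With $\mu_g\ge 1/n-\gamma$ it gives $\mu_f\le 1/n+O(\epsilon+\gamma)$, and hence slack $O(\epsilon+\gamma)$ in the chain. If you bound the means using only the stated $O(\sqrt\epsilon)$ version and substitute, you get slack $O(\sqrt\epsilon+\gamma)$ and lose linearity. Alternatively, note that $\mu_f\mu_g-\frac{1}{n-1}\sqrt{\mu_f(1-\mu_f)\mu_g(1-\mu_g)}$ is increasing in each mean near $(1/n,1/n)$, so only the lower bounds matter. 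This step, not the FKN step (which is already linear in the high-level weight), is where linear dependence is at stake.

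Two corrections in~(b).

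\textbf{The sign, and making the concentration claim rigorous.} Your claim $g^{=1}\approx -f^{=1}$ has the wrong sign. The level-$1$ eigenvalue is $-1/(n-1)$, so making $\langle f,Tg\rangle$ small requires $\langle f^{=1},g^{=1}\rangle\approx +\|f^{=1}\|\|g^{=1}\|$, i.e.\ $g^{=1}\approx f^{=1}$. You never use this later, but it seems to be what confused you at the end. Relatedly, ``any such weight costs a constant fraction of itself'' needs care when only one of $f,g$ has high-level weight, since then the loss appears as misalignment rather than as a smaller eigenvalue. The clean way is the paper's parametrization $\|f^{=1}\|=c_f\|f^{>0}\|$, $\|g^{=1}\|=c_g\|g^{>0}\|$. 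Together with $\sqrt{(1-c_f^2)(1-c_g^2)}\le 1-c_fc_g$, this bounds the slack below by a constant times $1-c_fc_g\ge\max(1-c_f,1-c_g)$.

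\textbf{The matching step.} This step has a definite answer, and it is not the one in the statement. Take $f\approx\ind{x_k=s}$ and $g\approx\ind{y_k=s'}$, and recall that $x_k\neq y_k$ always. The event that both indicators equal $1$ has probability $0$ if $s=s'$, and $1/(n(n-1))$ if $s\neq s'$. Since $x$ and $y$ are each marginally uniform, closeness of $f,g$ to these indicators moves this probability by only $O(\epsilon+\gamma)$. So cross-independence forces $s_{i,r}=s_{j,r}$.

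The clause $s_{i,r}\neq s_{j,r}$ in the statement, repeated in the last line of the paper's proof, is a slip. It is false under any reading of the indexing: $f_{i,r}=f_{j,r}=\ind{x_1=1}$ is exactly cross-independent with both means equal to $1/n$. The correct conclusion $k_{i,r}=k_{j,r}$, $s_{i,r}=s_{j,r}$ agrees with the equality case of \Cref{lem:adfs-exact}. It is also what the dictator case requires: if $f_i=\pi(x_k)$ for all $i$, then $f_{i,r}=\ind{x_k=\pi^{-1}(r)}$ for every $i$. So do not try to reconcile with the stated convention; prove equality and flag the typo.
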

\begin{proof}
\textbf{Setup.}
Every function on $[n]^p$ can be decomposed into its ``degree $d$'' parts:
\[
 f_{i,r} = \sum_{d=0}^p f_{i,r}^{=d},
\]
and similarly for $f_{j,r}$; see \cite[Chapter 8]{ODonnell}.

Each part $f_{i,r}^{=d}$ belongs to the space $V^{=d}$ of ``pure degree $d$ functions''. These spaces are orthogonal with respect to the inner product
\[
 \langle g,h \rangle = \Ex_{x \sim [n]^p}[g(x) h(x)],
\]
where the expectation is with respect to the uniform distribution. This implies that
\[
 \langle f_{i,r}, f_{j,r} \rangle = \sum_{d=0}^p \langle f_{i,r}^{=d}, f_{j,r}^{=d} \rangle,
\]
as well as
\[
 \|f_{i,r}\|^2 = \sum_{d=0}^p \|f_{i,r}^{=d}\|^2,
\]
where $\|g\| = \sqrt{\langle g,g \rangle}$. Note that
\[
 \|f_{i,r}\|^2 = \Ex_{x \sim [n]^p}[f_{i,r}(x)^2] = \mu_{i,r},
\]
since $f_{i,r}$ is Boolean (that is, $\{0,1\}$-valued).

The function $f_{i,r}^{=0}$ is the constant function equal to $\mu(f_{i,r})$. Defining $f_{i,r}^{>d} = \sum_{e>d} f_{i,r}^{=e}$, this implies that
\[
 \|f_{i,r}^{>0}\|^2 = \|f_{i,r}\|^2 - \|f_{i,r}^{=0}\|^2 = \mu_{i,r} - \mu_{i,r}^2 = \mu_{i,r} (1 - \mu_{i,r}).
\]

\smallskip
\textbf{Formula from \cite{ADFS}.} Let $x^{(1)},\dots,x^{(p)} \sim \Inj_{2,n}$. Alon et al.~\cite[Claim 4.3]{ADFS} proved the following formula:
\[
\Ex[f_{i,r}(x^{(1)}_i,\dots,x^{(p)}_i) f_{j,r}(x^{(1)}_j,\dots,x^{(p)}_j)] = \sum_{d=0}^p \left(-\frac{1}{n-1}\right)^d \langle f_{i,r}^{=d}, f_{j,r}^{=d} \rangle.
\]
Since $f_{i,r}$ and $f_{j,r}$ are Boolean, the left-hand side is the probability that $f_{i,r} = f_{j,r} = 1$. Therefore the assumption on $f_{i,r},f_{j,r}$ implies
\begin{equation} \label{eq:adfs}
\epsilon \ge \mu_{i,r} \mu_{j,r} + \sum_{d=1}^p \left(-\frac{1}{n-1}\right)^d \langle f_{i,r}^{=d}, f_{j,r}^{=d} \rangle.
\end{equation}

\smallskip
\textbf{Proof of \Cref{itm:adfs-ub}.}
If $\mu_{i,r} \leq \sqrt{\epsilon}$ then the inequality trivially holds, since $1 - \mu_{j,r} \geq 0$. Therefore we can assume that $\mu_{i,r},\mu_{j,r} \ge \sqrt{\epsilon}$.

Since $1/(n-1) \leq 1$, \Cref{eq:adfs} implies that
\[
 \epsilon \ge \mu_{i,r} \mu_{j,r} - \frac{1}{n-1} \sum_{d=1}^p \langle f_{i,r}^{=d}, f_{j,r}^{=d}\rangle = \mu_{i,r} \mu_{j,r} - \frac{1}{n-1} \langle f_{i,r}^{>0}, f_{j,r}^{>0} \rangle.
\]

Cauchy--Schwarz implies that
\[
 \epsilon \ge \mu_{i,r} \mu_{j,r} - \frac{1}{n-1} \|f_{i,r}^{>0}\| \|f_{j,r}^{>0}\| = 
 \mu_{i,r} \mu_{j,r} - \frac{1}{n-1} \sqrt{\mu_{i,r} (1 - \mu_{i,r})} \sqrt{\mu_{j,r} (1 - \mu_{j,r})}.
\]

Dividing by $\sqrt{\mu_{i,r} \mu_{j,r}} \ge \sqrt{\epsilon}$, this gives
\[
 \sqrt{\epsilon} \ge \sqrt{\mu_{i,r} \mu_{j,r}} - \frac{1}{n-1} \sqrt{(1 - \mu_{i,r}) (1-\mu_{j,r})}.
\]

Moving the second summand on the right to the left, multiplying by $n-1$, and squaring, this gives
\begin{align*}
 (n-1)^2 \mu_{i,r} \mu_{j,r} &\leq (1-\mu_{i,r})(1-\mu_{j,r}) + (n-1) \sqrt{\epsilon} \bigl(2(1-\mu_{i,r})(1-\mu_{j,r}) + (n-1)\sqrt{\epsilon}\bigr) \\ &=
 (1-\mu_{i,r})(1-\mu_{j,r}) + O(\sqrt{\epsilon}).
\end{align*}

Rearranging completes the proof.

\smallskip
\textbf{Proof of \Cref{itm:adfs-stab}.}
To prove this \namecref{itm:adfs-stab}, we first show that $\|f_{i,r}^{>1}\|, \|f_{j,r}^{>1}\|$ are small, and then appeal to~\cite[Lemma 2.4]{ADFS}.
Throughout the proof, we can assume that $\epsilon,\gamma$ are small enough, since otherwise the \namecref{itm:adfs-stab} is trivial.

As in the proof of \Cref{itm:adfs-ub}, we expand \Cref{eq:adfs}, but this time we take out one more term:
\begin{align*}
 \epsilon &\ge \mu_{i,r} \mu_{j,r} - \frac{1}{n-1} \langle f_{i,r}^{=1}, f_{j,r}^{=1} \rangle - \frac{1}{(n-1)^3} \langle f_{i,r}^{>1}, f_{j,r}^{>1} \rangle \\ &\ge
 \mu_{i,r} \mu_{j,r} - \frac{1}{n-1} \|f_{i,r}^{=1} \| \|f_{j,r}^{=1}\| - \frac{1}{(n-1)^3} \|f_{i,r}^{>1}\| \|f_{j,r}^{>1}\|,
\end{align*}
where the second inequality uses Cauchy--Schwarz (twice).

We would like to argue that $\|f_{i,r}^{=1}\| \approx \|f_{i,r}^{>0}\|$ and similarly for $j$, and to this end we define $c_{i,r},c_{j,r}$ by $\|f_{i,r}^{=1}\| = c_{i,r} \|f_{i,r}^{>0}\|$ and similarly for $j$. Since $\|f_{i,r}^{>0}\|^2 = \|f_{i,r}^{=1}\|^2 + \|f_{i,r}^{>1}\|^2$, this implies that $\|f_{i,r}^{>1}\| = \sqrt{1-c_{i,r}^2} \|f_{i,r}^{>0}\|$. Altogether, we obtain
\[
 \epsilon \ge \mu_{i,r} \mu_{j,r} - \frac{1}{n-1} \sqrt{\mu_{i,r} (1-\mu_{i,r})} \sqrt{\mu_{j,r} (1-\mu_{j,r})} \left( c_{i,r} c_{j,r} + \frac{1}{(n-1)^2} \sqrt{1 - c_{i,r}^2} \sqrt{1 - c_{j,r}^2} \right).
\]

In order to simplify this, we observe that $(1 - c_{i,r}^2) (1 - c_{j,r}^2) = (1 - c_{i,r} c_{j,r})^2 - (c_{i,r} - c_{j,r})^2$, implying that $(1 - c_{i,r})^2 (1 - c_{j,r})^2 \leq (1 - c_{i,r} c_{j,r})^2$. Therefore
\begin{align*}
 \epsilon &\ge \mu_{i,r} \mu_{j,r} - \frac{1}{n-1} \sqrt{\mu_{i,r} (1-\mu_{i,r})} \sqrt{\mu_{j,r} (1-\mu_{j,r})} \left( c_{i,r} c_{j,r} + \frac{1}{(n - 1)^2} (1 - c_{i,r} c_{j,r}) \right) \\ &=
 \mu_{i,r} \mu_{j,r} - \frac{1}{n-1} \sqrt{\mu_{i,r} (1-\mu_{i,r})} \sqrt{\mu_{j,r} (1-\mu_{j,r})} \left( \frac{1}{(n-1)^2} + \left(1 - \frac{1}{(n-1)^2}\right) c_{i,r} c_{j,r}\right).
\end{align*}

At this point we substitute the values of $\mu_{i,r}, \mu_{j,r}$, obtaining
\[
 \epsilon + O(\gamma) \ge \frac{1}{n^2} - \frac{1}{n^2} \left( \frac{1}{(n-1)^2} + \left(1 - \frac{1}{(n-1)^2}\right) c_{i,r} c_{j,r}\right) =
 \frac{1}{n^2} \left(1 - \frac{1}{(n-1)^2}\right) (1 - c_{i,r} c_{j,r}).
\]

It follows that $c_{i,r} c_{j,r} \ge 1 - O(\epsilon + \gamma)$, using $n \ge 3$. Since $c_{i,r}, c_{j,r} \leq 1$, this implies that $c_{i,r} \ge 1 - O(\epsilon + \gamma)$ and so $\sqrt{1 - c_{i,r}^2} = O(\sqrt{\epsilon + \gamma})$, and similarly for $j$. Therefore
\[
 \|f_{i,r}^{>1}\|^2, \|f_{j,r}^{>1}\|^2 = O(\epsilon + \gamma).
\]

At this point we apply \cite[Lemma 2.4]{ADFS}, which states that if $g\colon [n]^p \to \{0,1\}$ satisfies $\Ex[g] = \mu$ and $\|g^{>1}\|^2 = \eta$ then there exists a function $h\colon [n]^p \to \{0,1\}$, depending on a single coordinate, such that
\[
 \Pr[g \neq h] = O\left(\frac{\eta}{\mu - \mu^2 - \eta}\right).
\]

For small enough $\epsilon,\gamma$, this implies that there exist functions $h_{i,r},h_{j,r}$, each depending on a single coordinate, such that
\[
 \Pr[f_{i,r} \neq h_{i,r}] = O(\epsilon + \gamma),
\]
and similarly for $j$.

Since $1/n - \gamma \leq \mu_{i,r} \leq 1/n + O(\sqrt{\epsilon})$, for small enough $\epsilon,\gamma$ we see that $h_{i,r}$ must be of the form
\[
 h_{i,r}(x) = \ind{x_{k_{i,r}} = s_{i,r}},
\]
and similarly for $j$.

Observe that $h_{i,r},h_{j,r}$ are $O(\epsilon + \gamma)$-cross-independent. For small enough $\epsilon,\gamma$, this implies that $k_{i,r} = k_{j,r}$ and $s_{i,r} \neq s_{j,r}$.
\end{proof}

The next step is to determine the possible values of $\mu_{1,r},\dots,\mu_{n,r}$ for each fixed $r$.

\begin{lemma}
\label{lem:mu-*-r}
For each $r \in [n]$, one of the following holds:
\begin{enumerate}[(i)]
\item For some $i_r$ we have $\mu_{i_r,r} \ge 1 - \sqrt[4]{\epsilon}$ and $\mu_{i,r} = O(\sqrt[4]{\epsilon})$ for all $i \neq i_r$.
\label{itm:mu-*-r-const}
\item For all $i$ we have
\[
 \mu_{i,r} = \frac{1}{n} \pm O(\sqrt[4]{\epsilon}).
\]

Consequently, there exists an index $k_r \in [p]$ and a permutation $\pi_r$ on $[n]$ such that for each $i \in [n]$,
\label{itm:mu-*-r-dict}
\[
 \Pr_{x \sim [n]^p}[f_{i,r}(x) \neq \ind{x_{k_r} = \pi_r(i)}] = O(\sqrt[4]{\epsilon}).
\]
\end{enumerate}
\end{lemma}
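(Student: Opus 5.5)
We imitate the exact argument of \Cref{sec:Perm-poly}, replacing each exact identity by its approximate counterpart.

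\textbf{Step 1: a per-entry dichotomy.} Fix $r$ and $i$. For each $j \neq i$, \Cref{lem:approx-cross-independent} and \Cref{lem:adfs-approx}(\ref{itm:adfs-ub}) give
\[
n(n-2)\mu_{i,r}\mu_{j,r} \le 1-\mu_{i,r}-\mu_{j,r}+O(\sqrt\epsilon).
\]
Summing over the $n-1$ indices $j\neq i$ and using \Cref{lem:i-sum}, we replace $\sum_{j\neq i}\mu_{j,r}$ by $1-\mu_{i,r}\pm O(\epsilon)$. This yields
\[
n(n-2)\mu_{i,r}(1-\mu_{i,r}) \le (n-2)(1-\mu_{i,r}) + O(\sqrt\epsilon),
\qquad\text{that is,}\qquad
(n-2)(1-\mu_{i,r})(n\mu_{i,r}-1)\le O(\sqrt\epsilon).
\]
Since the product of the two factors is $O(\sqrt\epsilon)$, one of them is $O(\sqrt[4]\epsilon)$. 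Hence either $\mu_{i,r}\ge 1-O(\sqrt[4]\epsilon)$ or $\mu_{i,r}\le 1/n+O(\sqrt[4]\epsilon)$. This square-root splitting is where the fourth root enters.

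\textbf{Step 2: the two cases for each $r$.} Suppose first that some $i_r$ has $\mu_{i_r,r}\ge 1-O(\sqrt[4]\epsilon)$. Then \Cref{lem:i-sum} forces $\sum_{i\ne i_r}\mu_{i,r}=O(\sqrt[4]\epsilon)$, so every other $\mu_{i,r}$ is $O(\sqrt[4]\epsilon)$. This is case (\ref{itm:mu-*-r-const}), after adjusting constants to match the exact threshold in the statement.

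Otherwise every $\mu_{i,r}\le 1/n+O(\sqrt[4]\epsilon)$. Since the $n$ values sum to $1\pm O(\epsilon)$, each also satisfies $\mu_{i,r}\ge 1-(n-1)(1/n+O(\sqrt[4]\epsilon))-O(\epsilon)=1/n-O(\sqrt[4]\epsilon)$. Thus $\mu_{i,r}=1/n\pm O(\sqrt[4]\epsilon)$ for all $i$.

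\textbf{Step 3: the dictator structure.} In the second case, apply \Cref{lem:adfs-approx}(\ref{itm:adfs-stab}) with $\gamma=O(\sqrt[4]\epsilon)$ to every pair $i\ne j$. For each pair this gives a common coordinate $k_{i,r}=k_{j,r}$ and distinct values $s_{i,r}\ne s_{j,r}$, with each $f_{i,r}$ being $O(\epsilon+\gamma)=O(\sqrt[4]\epsilon)$-close to $\ind{x_{k_{i,r}}=s_{i,r}}$.

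We must check that these choices are consistent across pairs. The function $\ind{x_k=s}$ is at distance at least about $2/n\cdot(1-1/n)$ from any other function of the form $\ind{x_{k'}=s'}$, so for small $\epsilon$ the approximating dictator of each $f_{i,r}$ is unique. Therefore each $f_{i,r}$ has a single well-defined $k_{i,r}$ and $s_{i,r}$.

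Pairwise equality of the $k$'s then gives a common coordinate $k_r$. Pairwise distinctness of the $s$'s shows that $i\mapsto s_{i,r}$ is injective on $[n]$, hence a permutation $\pi_r$. This is case (\ref{itm:mu-*-r-dict}).

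\textbf{Main obstacle.} The delicate point is the bookkeeping of error terms in Step 1. The $O(\epsilon)$ error from \Cref{lem:i-sum}, multiplied by bounded coefficients, must be absorbed into $O(\sqrt\epsilon)$. The factorization must then be handled near the boundary where $n\mu-1$ and $1-\mu$ might both be small; this cannot happen since $n\ge 3$ keeps $1/n$ and $1$ far apart. Everything else is direct application of the earlier lemmas.
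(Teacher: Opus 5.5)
Your proposal is correct and follows essentially the paper's route: sum \Cref{itm:adfs-ub} of \Cref{lem:adfs-approx} over $j\neq i$ using \Cref{lem:i-sum}, split into the near-$1$ and near-$1/n$ regimes, then invoke \Cref{itm:adfs-stab}. Your uniqueness-of-the-nearby-dictator argument usefully makes explicit the cross-pair consistency that the paper leaves implicit.

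The one wrinkle is that your symmetric split (``one factor is $O(\sqrt[4]{\epsilon})$'') only gives $\mu_{i_r,r}\ge 1-C\sqrt[4]{\epsilon}$, not the stated threshold $1-\sqrt[4]{\epsilon}$. The fix is to split as the paper does, on whether $1-\mu_{i,r}\ge\sqrt[4]{\epsilon}$. In that case, dividing $(1-\mu_{i,r})(n\mu_{i,r}-1)\le O(\sqrt{\epsilon})$ by $1-\mu_{i,r}$ gives $\mu_{i,r}\le 1/n+O(\sqrt[4]{\epsilon})$, and the exact threshold comes for free.
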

\begin{proof}
If $\mu_{i_r,r} \ge 1 - \sqrt[4]{\epsilon}$ holds for some $i_r$ then by \Cref{lem:i-sum} we have $\mu_{i,r} = O(\sqrt[4]{\epsilon})$ for all $i \neq i_r$, which puts us in \Cref{itm:mu-*-r-const}.
In the rest of the proof, we assume that $\mu_{i,r} \leq 1 - \sqrt[4]{\epsilon}$ for all $i$, and show that \Cref{itm:mu-*-r-dict} holds.

Fix $i$. Summing \Cref{itm:adfs-ub} of \Cref{lem:adfs-approx} for all $j \neq i$ and using \Cref{lem:i-sum}, we obtain
\[
 n(n-2) \mu_{i,r} (1-\mu_{i,r} \pm O(\epsilon)) \leq (n-1)(1-\mu_{i,r}) - (1-\mu_{i,r} \pm O(\epsilon)) + O(\sqrt{\epsilon}),
\]
which simplifies to
\[
  (1-\mu_{i,r}) \cdot \mu_{i,r} \leq (1-\mu_{i,r}) \cdot \frac{1}{n} + O(\sqrt{\epsilon}).
\]

Since $1 - \mu_{i,r} \ge \sqrt[4]{\epsilon}$, this implies that
\[
 \mu_{i,r} \leq \frac{1}{n} + O(\sqrt[4]{\epsilon}).
\]

The same inequality holds for all $j \neq i$. Therefore \Cref{lem:i-sum} implies that
\[
 1 - O(\epsilon) \leq \mu_{i,r} + \sum_{j \neq i} \mu_{j,r} \leq \mu_{i,r} + \frac{n-1}{n} + O(\sqrt[4]{\epsilon}),
\]
and so
\[
 \frac{1}{n} - O(\sqrt[4]{\epsilon}) \leq \mu_{j,r}.
\]

The remaining claim follows directly from \Cref{itm:adfs-stab} of \Cref{lem:adfs-approx}.
\end{proof}

At this point we complete the proof much as in \Cref{sec:Perm-poly}. If the second case happens for some $r$ then each value of $f_i$ is attained with probability at least $1/n - O(\sqrt[4]{\epsilon})$. This conflicts with the first case, for small enough $\epsilon$. Therefore either the first case holds for all $r$, or the second case holds for all $r$.

If the first case holds for all $r$ then \Cref{itm:filmus-Perm-const} of the \namecref{thm:filmus-Perm} holds. From now on, we assume that the second case holds for all $r$.

We start by showing that all $k_r$ are equal. Indeed, if $k_r \neq k_s$ then $x_{k_r} = \pi_r(1)$ and $x_{k_s} = \pi_s(1)$ with probability $1/n^2$, and so we obtain a contradiction for small enough $\epsilon$ by considering $f_{1,r},f_{1,s}$. Thus all $k_r$ are equal to some $k$. Similarly, considering $f_i$, we see that $\pi_r(i) \neq \pi_s(i)$ for all $r \neq s$, and so there must exist a permutation $\rho_i$ such that $\pi_r(i) = \rho_i(r)$ for all $r$. Thus
\[
 \Pr_{x \sim [n]^p}[f_i(x) \neq \rho_i^{-1}(x_k)] = O(\sqrt[4]{\epsilon}).
\]
It follows that
\[
 \Pr_{y \sim \Perm_n}[(\rho_1^{-1}(y_1),\dots,\rho_n^{-1}(y_n)) \notin \Perm_n] = O(\sqrt[4]{\epsilon}).
\]

If $\rho_i \neq \rho_j$, say $\rho_i(r) \neq \rho_j(r)$, then $(y_i,y_j) = (\rho_i(r), \rho_j(r))$ occurs with probability $1/(n(n-1))$. However, in that case $\rho_i^{-1}(y_i) = \rho_j^{-1}(y_j)$, and we reach a contradiction for small enough $\epsilon$.
Thus all $\rho_i$ are equal, and so \Cref{itm:filmus-Perm-dict} holds.

\section{Equivalence relations}
\label{sec:equivalence}

\subsection{Exact polymorphisms}
\label{sec:equivalence-exact}

In this section we prove \Cref{thm:equiv-polymorphisms}. We rely on the following \namecref{lem:equiv3-polymorphisms}.

\begin{lemma}
\label{lem:equiv3-polymorphisms}
The forward direction of \Cref{thm:equiv-polymorphisms} holds when $m = 3$.
\end{lemma}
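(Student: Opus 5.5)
The plan is a direct combinatorial argument on the three functions $f_{\{1,2\}},f_{\{1,3\}},f_{\{2,3\}}\colon\{0,1\}^p\to\{0,1\}$, using the fact that $\Equiv_3$ consists exactly of the vectors in $\{0,1\}^3$ whose weight is not~$2$. If this becomes unwieldy, the fallback is to invoke the classification of polymorphisms of symmetric Boolean predicates in~\cite{Filmus26}, specialised to this predicate. Write $A_{ij}=f_{\{i,j\}}^{-1}(1)\subseteq\{0,1\}^p$. Everything rests on one elementary fact about rows: given bits $a,b$ for two coordinates of a row, the third bit is forced to $1$ if $a=b=1$, forced to $0$ if exactly one of $a,b$ is $1$, and free if $a=b=0$. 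So fixing two input columns $u,v$ (one for each of two pairs) determines which third columns $w$ are admissible.

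\textbf{Step 1 (zero functions).} Suppose some function vanishes, say $f_{\{1,2\}}\equiv 0$. By the row fact, for every $u,v\in\{0,1\}^p$ we may use $u$ as the $\{1,3\}$-column and $v$ as the $\{2,3\}$-column, and choose the $\{1,2\}$-column coordinatewise so that every row is valid. The output is then $(0,f_{\{1,3\}}(u),f_{\{2,3\}}(v))$, which must not equal $(0,1,1)$. Hence $A_{13}=\emptyset$ or $A_{23}=\emptyset$. This yields the cases ``all three functions are zero'' ($\sim$ is discrete) and ``exactly one function, say $f_{\{2,3\}}$, is arbitrary'' ($\sim$ has the class $\{2,3\}$), as required.

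\textbf{Step 2 (no function is zero).} Now assume all three sets $A_{12},A_{13},A_{23}$ are non-empty; the goal is $f_{\{1,2\}}=f_{\{1,3\}}=f_{\{2,3\}}=\bigwedge_{s\in S}x_s$, that is, $\sim$ is the full relation. Take $u\in A_{12}$ and $v\in A_{13}$. Choose as $\{2,3\}$-column any $w$ with $w_k=u_kv_k$ wherever $u_k\vee v_k=1$ and $w_k$ arbitrary elsewhere. Since the output starts $(1,1,\cdot)$, the row fact forces $w\in A_{23}$. Taking $w=u\wedge v$ gives the closure property
\[
 A_{ij}\wedge A_{ik}\subseteq A_{jk}
\]
for all $\{i,j,k\}=[3]$. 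The freedom on the common zero set of $u,v$ gives a partial upward closure.

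Next use inputs forcing a forbidden output pattern such as $(1,0,1)$ to get the reverse inclusions. For example, feeding $u\in A_{12}$, $w\in A_{23}$ and the forced $\{1,3\}$-column shows that column must lie in $A_{13}$. Combining these inclusions cyclically should show that $A_{12}=A_{13}=A_{23}=:A$, and that $A$ is closed under $\wedge$ and upward closed. A non-empty family of subsets of $[p]$ that is closed under intersection and upward closed is a principal filter, so $A=\{x : x_s=1 \text{ for all } s\in S\}$ with $S$ the support of the minimum element of $A$; this is exactly $f=\bigwedge_{s\in S}x_s$, with $S=\emptyset$ giving the constant~$1$.

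The main obstacle is Step 2: extracting full upward closure and the equality of the three sets from the forcing patterns. The row fact only gives freedom on coordinates where both chosen columns vanish, so one has to pick auxiliary columns cleverly. The natural choices are the all-ones column, which is admissible alongside anything, and $v=u$ once some common element of $A_{12}\cap A_{13}$ is found. I would first try to show $\mathbf 1\in A_{ij}$ for all pairs, then use $u$ and $\mathbf 1$ as the two given columns, which leaves the third column forced equal to $u$. This transfers membership between the $A_{ij}$, and the intersection property then gives closure under $\wedge$.
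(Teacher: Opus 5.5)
You take a different route from the paper: you attempt a direct combinatorial proof and only fall back on citing Filmus. That route is viable, but Step~2 stops before its key step. Your Step~1 is correct. So are the two facts you extract at the start of Step~2: the inclusion $A_{ij}\wedge A_{ik}\subseteq A_{jk}$, and the partial upward closure (for $u\in A_{12}$, $v\in A_{13}$, every $w$ agreeing with $u\wedge v$ on $\mathrm{supp}(u)\cup\mathrm{supp}(v)$ lies in $A_{23}$).

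The gap is everything after that. The ``reverse inclusions'' from the forbidden pattern $(1,0,1)$ add nothing new: feeding $u\in A_{12}$ and $w\in A_{23}$ just gives $A_{12}\wedge A_{23}\subseteq A_{13}$, the same closure property with indices permuted. The three $\wedge$-inclusions alone cannot give $A_{12}=A_{13}=A_{23}$. For instance, $A_{12}=A_{23}=\{a\}$ and $A_{13}=\{a,b\}$ with $b\ge a$, $b\neq a$ satisfy all of them. So the free coordinates must be exploited. Your plan for that (show $\mathbf 1\in A_{ij}$, then use the columns $u$ and $\mathbf 1$) depends on a common element of $A_{12}\cap A_{13}$, which you never produce. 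As written, Step~2 is a plan whose key step is unproved.

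The missing step follows from what you already have. Take $u\in A_{12}$, $v\in A_{13}$ and set $t=u\wedge v$. Then $t\in A_{23}$. Since $u\wedge t=v\wedge t=t$, the inclusions $A_{12}\wedge A_{23}\subseteq A_{13}$ and $A_{13}\wedge A_{23}\subseteq A_{12}$ put $t$ in all three sets. Now feed $t$ as both the $\{1,2\}$- and the $\{1,3\}$-column. The rows on $\mathrm{supp}(t)$ are $(1,1,1)$ and the others are $(0,0,\ast)$, so every $w\ge t$ lies in $A_{23}$. By symmetry these $w$ also lie in $A_{12}$ and $A_{13}$, so $\mathbf 1$ is in all three sets.

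Your transfer now works. With $u\in A_{12}$ and $\mathbf 1$ as the $\{1,3\}$-column, the $\{2,3\}$-column is forced to equal $u$, so $A_{12}\subseteq A_{23}$. By symmetry all three sets coincide with some $A$. The same doubled-column argument applied to any $u\in A$ shows $A$ is upward closed. Together with $\wedge$-closure, $A$ is a principal filter, i.e.\ $f=\bigwedge_{s\in S}x_s$.

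With this filled in, you have a self-contained elementary proof. The paper instead invokes \cite[Theorem 1.6(5)]{Filmus26} with $m=3$, $w=1$, which is your fallback. That is a one-line citation, but it relies on the full classification of polymorphisms of symmetric Boolean predicates.
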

\begin{proof}
The predicate $\Equiv_3 \subseteq \{0,1\}^3$ consists of all triples whose Hamming weight is not exactly~$2$.

\cite[Theorem 1.6(5)]{Filmus26}, applied with $m = 3$ and $w = 1$, states that the polymorphisms of $\Equiv_3$ come in two types:
\begin{enumerate}[(i)]
\item All $f_{\{i,j\}}$ are equal, and the common value is $f(x) = \bigwedge_{i \in S} x_i$ for some $S \subseteq [p]$.

This corresponds to the complete equivalence relation on $[3]$.
\item At least two of the $f_{\{i,j\}}$ are constant zero.

If $f_{\{I,J\}}$ is the remaining function, then this corresponds to the equivalence relation in which the only relation is $I \sim J$. \qedhere
\end{enumerate}
\end{proof}

\begin{proof}[Proof of \Cref{thm:equiv-polymorphisms}, forward direction]
Let $(f_{\{i,j\}})_{i < j}$ be a polymorphism of $\Equiv_m$.

Define a relation on $[m]$ as follows: $i \sim i$ for all $i$, and for $i < j$, let $i \sim j$ if $f_{\{i,j\}}$ is not constant zero.

We claim that this is an equivalence relation. Indeed, suppose that $f_{\{i,j\}}$ and $f_{\{j,k\}}$ are both not constant zero. Applying \Cref{lem:equiv3-polymorphisms} to $f_{\{i,j\}},f_{\{j,k\}},f_{\{i,k\}}$, we see that $f_{\{i,j\}} = f_{\{j,k\}} = f_{\{i,k\}}$, and so $f_{\{i,k\}}$ is also not constant zero.

The first item in the \namecref{thm:equiv-polymorphisms} holds by construction. For the second item, let $A$ be an equivalence class of $\sim$ of size at least $3$. For any $i,j,k$, applying \Cref{lem:equiv3-polymorphisms} as before implies that for some set $S \subseteq [p]$,
\[
 f_{\{i,j\}}(x) = f_{\{j,k\}}(x) = f_{\{i,k\}}(x) = \bigwedge_{s \in S} x_s.
\]

To complete the proof, we show that $f_{\{\ell,r\}} = f_{\{i,j\}}$ for all $\ell,r \in A$. This is clear if $\{\ell,r\} = \{i,j\}$. If $|\{\ell,r\} \cap \{i,j\}| = 1$, say $\ell = i$, then this follows by applying \Cref{lem:equiv3-polymorphisms} to $f_{\{i,j\}},f_{\{i,r\}},f_{\{j,r\}}$. Finally, if $\{i,j\}$ and $\{\ell,r\}$ are disjoint then this follows by applying \Cref{lem:equiv3-polymorphisms} to $f_{\{i,\ell\}},f_{\{i,r\}},f_{\{\ell,r\}}$.
\end{proof}

\begin{proof}[Proof of \Cref{thm:equiv-polymorphisms}, backward direction]
Consider an arbitrary equivalence relation $\sim$ on $[m]$, and functions $(f_{\{i,j\}})_{i < j}$ satisfying the properties in the statement of the \namecref{thm:equiv-polymorphisms}.

Let $x^{(1)},\dots,x^{(p)} \in \Equiv_m$, and let $x_{\{i,j\}} = f_{\{i,j\}}(x^{(1)}_{\{i,j\}},\dots,x^{(p)}_{\{i,j\}})$. We need to show that $x \in \Equiv_m$.

Suppose that $x_{\{i,j\}} = x_{\{j,k\}} = 1$, where $i,j,k$ are all different. We need to show that $x_{\{i,k\}} = 1$.

The assumption implies that $i \sim j$ and $j \sim k$, and so $i,j,k$ belong to the same equivalence class of $\sim$. Consequently, there exists a set $S \subseteq [p]$ such that
\[
 f_{\{i,j\}}(y) = f_{\{j,k\}}(y) = f_{\{i,k\}}(y) = \bigwedge_{s \in S} y_s.
\]

Since $x_{\{i,j\}} = x_{\{j,k\}} = 1$, we have $x^{(s)}_{\{i,j\}} = x^{(s)}_{\{j,k\}} = 1$ for all $s \in S$. Therefore $x^{(s)}_{\{i,k\}} = 1$ for all $s \in S$, and so $x_{\{i,k\}} = 1$.
\end{proof}

\subsection{Approximate polymorphisms}
\label{sec:equivalence-approximate}

In this section we deduce \Cref{thm:main-equiv} by combining \Cref{thm:alekseev-filmus} and \Cref{thm:equiv-polymorphisms}.

\begin{proof}[Proof of \Cref{thm:main-equiv}]
Let $\epsilon_0$ be the parameter given by \Cref{thm:alekseev-filmus} for $\Equiv_m$ and $\mu$.
Given $\epsilon > 0$ such that $\epsilon < \epsilon_0$, we let $\delta > 0$ be the parameter given by \Cref{thm:alekseev-filmus}. Since $\mu$ is invariant under permutations of $[m]$, we have $\mu|_{\{i,j\}} = \nu$ for all $i < j$.

Suppose that $(f_{\{i,j\}})_{i < j} \colon \{0,1\}^p \to \{0,1\}$ are a $\delta$-approximate polymorphism of $\Equiv_m$ with respect to $\mu$, and moreover $\Pr_\nu[f_{\{i,j\}} = 0] < 1 - \epsilon$ for all $i < j$.

\Cref{thm:alekseev-filmus} implies that there is a polymorphism $g_{\{i,j\}}\colon \{0,1\}^p \to \{0,1\}$ of $\Equiv_m$ such that $\Pr_\nu[f_{\{i,j\}} \neq g_{\{i,j\}}] \leq \epsilon$ for all $i < j$.

\Cref{thm:equiv-polymorphisms} describes the possible forms of $g_{\{i,j\}}$, parametrized by an equivalence relation on $[m]$. We claim that this equivalence relation must be complete. Indeed, if $i\nsim j$ then $g_{\{i,j\}}$ is constant zero, and so $\Pr_\nu[f_{\{i,j\}} = 0] \ge 1 - \epsilon$, contradicting the assumption.

It follows that there exists $S \subseteq [p]$ such that
\[
 g_{\{i,j\}}(x) = \bigwedge_{s \in S} x_s
\]
for all $i < j$, from which the \namecref{thm:main-equiv} immediately follows.
\end{proof}

\section{Discussion}
\label{sec:discussion}

Some impossibility theorems in the literature concern a situation in which the individual votes satisfy one predicate, and the final outcome should satisfy a different predicate, potentially over a different alphabet. Examples include:
\begin{itemize}
\item Gibbard~\cite{GibbardTR} considered the situation in which each individual specifies a weak order on $[m]$, and the desired outcome is a partial order on $[m]$. These correspond to the following predicates (in reverse order):
\begin{itemize}
\item $\Part_m \subseteq \{0,1,*\}^{\binom{m}{2}}$ is the set of partial orders, encoded as specifying, for any $i<j$, whether $i \succ j$, $i \prec j$, or neither. We require transitivity:
\[
 i \succ j \text{ and } j \succ k \text{ implies } i \succ k.
\]
\item $\Weak_m \subseteq \{0,1,*\}^{\binom{m}{2}}$ is the set of weak orders, which are partial orders specifying the additional axiom
\[
 i \succ j \text{ implies } i \succ k \text{ or } k \succ j \text{ for each } k \neq i,j.
\]
\end{itemize}
Gibbard shows that the only unanimous polymorphisms in this setting are weak oligarchies: there exists a set $S \subseteq [p]$ such that
\begin{enumerate}[(a)]
\item If $a \stackrel k\succ b$ for all $k \in S$ then $a \succ b$.
\item If $a \succ b$ then no $k \in S$ satisfies $b \stackrel k\succ a$.
\end{enumerate}
Here $\stackrel1\succ,\dots,\stackrel p\succ$ are the individual weak orders, and $\succ$ is the aggregated partial order.
\item Dokow and Holzman~\cite{DokowHolzman10abs} considered the situation in which each individual specifies a linear order on $[m]$, and the desired outcome is a partial order on $[m]$. We encode linear orders as:
\begin{itemize}
\item $\Lin_m \subseteq \{0,1\}^{\binom{m}{2}}$ is the intersection $\Part_m \cap \{0,1\}^{\binom{m}{2}}$.
\end{itemize}
They showed that the only unanimous polymorphisms in this setting are oligarchies: the outcome is always a linear order, and there exists a set $S \subseteq [p]$ such that $a \succ b$ iff $a \stackrel k \succ b$ for all $k \in S$.
\end{itemize}

This extension of the concept of polymorphism appears in the literature on promise CSPs~\cite{AGH17,BG21,BBKO21}. In this setting, we have two predicates $P,Q$, possibly on different alphabets, such that $P$ is more restrictive than $Q$, in the sense that there is a coordinate-by-coordinate mapping from $P$ to $Q$. A $(P,Q)$ polymorphism is defined as in \Cref{def:polymorphism}, using $P$ on the left and $Q$ on the right.

\Cref{thm:alekseev-filmus} applies in this setting as well, where the flexibility condition is only needed for $P$ (and is required even if $P$ is a predicate on $\{0, 1\}$). However, we don't have an analog of \Cref{thm:filmus}. Indeed, given the examples above, it seems that we should aim for oligarchies rather than dictators.

\medskip

A different shortcoming of our framework is the unspecified dependence between $\epsilon$ and $\delta$ in \Cref{thm:alekseev-filmus}. In contrast to Mossel's quantitative Arrow theorem~\cite{Mossel12}, where the dependence is linear, the dependence in \Cref{thm:alekseev-filmus} is of tower type. We conjecture that the dependence can be improved to polynomial or even linear.

\bibliographystyle{alphaurl}
\bibliography{biblio}

\end{document}